\definecolor{light-gray}{gray}{0.8}
\def\nb0{{\mathbf{0}}}
\def\nb1{{\mathbf{1}}}
\def\ncalL{{\mathcal{L}}}
\def\ncalN{{\mathcal{N}}}
\def\nbbE{{\mathbb{E}}}
\def\nbbP{{\mathbb{P}}}
\def\nbbR{{\mathbb{R}}}
\def\nrma{{\rm a}}
\def\nrmb{{\rm b}}
\def\nrmc{{\rm c}}
\def\nrmd{{\rm d}}
\newtheorem{lemma}{Lemma}
\newtheorem{ndef}{Definition}
\newtheorem{theorem}{Theorem}
\newtheorem{prop}{Proposition}
\newtheorem{cor}{Corollary}
\newtheorem{remark}{Remark}
\newtheorem{assumption}{Assumption}
\def\argmax{\operatorname{arg~max}}
\def\figref#1{Fig.\,\ref{#1}}%
\def\E{\mathbb{E}}
\def\P{\mathbb{P}}
\def\pc{\mathtt{P_c}}
\def\R{\mathbb{R}}
\def\T{\beta}							
\def\sir{\mathtt{SIR}}
\def\ase{\mathtt{ASE}}
\def\calL{\mathcal{L}}
\def\Bx{{\mathcal{B}}^x}
\def\Bxx{{\mathcal{B}}^{x_0}}
\def\sb{\sigma_\mathrm{b}}
\def\Nx{{\mathcal{N}}^x}
\def \mb{\bar{m}_\mathrm{b}}
\def\Ax{{\mathcal{A}}^x}
\def\Axx{{\mathcal{A}}^{x_0}}
\def\sa{\sigma_\mathrm{a}}
\def \ma{\bar{m}_\mathrm{a}}
\def \x{\nu_0}
\def \y{t_k}
\def \z{r}
\def \intra{I_{\mathrm{ intra}}}
\def \inter{I_{\mathrm{ inter}}}
\def\rx{z_{1}}
\def\ry{z_{2}}
\def\txi {a_0}
\def\htxi{h_{a_0}}
\def\a {a}
\def\b {b}
\def \aa {a}
\def \bb {b}
\def\ha {h_{a_{x}}}
\def\hb {h_{b_{x}}}
\def\haa {h_{a_{x_0}}}
\def\hbb {h_{b_{x_0}}}
\newcommand{\chb}[1]{{\color{black}#1}}
\newcommand{\chr}[1]{{\color{black}#1}}
\begin{document}


\title{Fundamentals of Cluster-Centric Content Placement in Cache-Enabled Device-to-Device Networks}

\author{Mehrnaz Afshang,~\IEEEmembership{Student Member,~IEEE},
{Harpreet S. Dhillon,~\IEEEmembership{Member,~IEEE}},
 and
{Peter Han Joo Chong,~\IEEEmembership{Member,~IEEE}}

\thanks{M.~Afshang is with Wireless@VT, Department of ECE, Virgina Tech, Blacksburg, VA, USA and is with School of EEE, Nanyang Technological University, Singapore.
Email: mehrnaz@vt.edu. H. S.~Dhillon  is with  Wireless@VT, Department of ECE, Virgina Tech, Blacksburg, VA, USA. Email: hdhillon@vt.edu. P. H. J.~Chong is with School of EEE, Nanyang Technological University, Singapore. Email: Ehjchong@ntu.edu.sg.}
\thanks{This work was presented in part at the IEEE Globecom workshops, San Diego, CA, 2015~\cite{AfsDhiC2015b}. \hfill Last updated: \today.}}

\maketitle

\begin{abstract}
This paper develops  a comprehensive analytical framework with foundations in stochastic geometry to characterize the performance of {\em cluster-centric} content placement in a cache-enabled device-to-device (D2D) network. Different from {\em device-centric} content placement, cluster-centric placement focuses on placing content in each cluster such that the collective performance of {\em all} the devices in each cluster is optimized. 
\chb{Modeling the locations of the devices by a Poisson cluster process, we define and analyze the performance for three general cases: (i) $k$-Tx case: receiver of interest is chosen uniformly at random in a cluster and its content of interest is available at the $k^{th}$ closest device to the cluster center, (ii) $\ell$-Rx case: receiver of interest is the $\ell^{th}$ closest device to the cluster center and its content of interest is available at a device chosen uniformly at random from the same cluster, and (iii) baseline case: the receiver of interest is chosen uniformly at random in a cluster and its content of interest is available at a device chosen independently and uniformly at random from the same cluster.}  
Easy-to-use expressions for the key performance metrics, such as coverage probability and area spectral efficiency ($\ase$) of the whole network, are derived for all three cases. Our analysis concretely demonstrates significant improvement in the network performance when the device on which content is cached or device requesting content from cache is biased to lie closer to the cluster center compared to baseline case. Based on this insight, we develop and analyze a new generative model for cluster-centric D2D networks that allows to study the effect of intra-cluster interfering devices that are more likely to lie closer to the cluster center.
\end{abstract}
\IEEEpeerreviewmaketitle
\begin{IEEEkeywords}
D2D caching, cluster-centric content placement, clustered D2D network, Thomas cluster process, stochastic geometry.%
\end{IEEEkeywords}

\section{Introduction}

 \IEEEPARstart {D}{riven} by the increasing mobile data traffic, cellular networks are undergoing unprecedented paradigm shift in the way data is delivered to the mobile users~\cite{Cisco2015}. A key component of this shift is  device-to-device (D2D) communication in which proximate devices can deliver content on demand to their nearby users, thus offloading traffic from often congested cellular networks~\cite{andrews2014will,golrezaei2013femtocaching,boccardi2014five,song2015wireless}. 
 This is facilitated by the spatiotemporal correlation in the content demanded i.e., repeated requests for the same content from different users across various time instants \cite{cha2007tube,fast2005creating, tadrous2014joint}. Storing the  popular files at the ``network edge'', such as in small cells, switching centers or handheld devices, termed {\em caching}, offers an excellent way to exploit this correlation in the content requested by the users~\cite{bastug2014living,maddah2014fundamental,ahlehagh2012video,shanmugam2013femtocaching,gitzenis2013asymptotic,BlaszczyszynG14}.
Cache-enabled D2D networks are attractive due to the possible linear increase of capacity with the number of devices that can locally cache data~\cite{golrezaei2013femtocaching,molisch2014caching,ji2013wireless,6787081,ji2014fundamental,Altieri,ji2015throughput}.

The performance of cache-enabled D2D networks fundamentally depends upon i) the locations of the devices, and ii) how is cache placed on these devices. For instance, consider the {\em device-centric} placement where the content is placed on a device close to the particular device that needs it. While this is certainly beneficial for the device with respect to which the content is placed,  high performance degradation can happen if
 another device in the network wants to access the same content from the device on which it was cached. \chb{As a result, we focus on the {\em cluster-centric} placement, where the
goal is to  improve  the collective performance of all the devices in the network measured in terms of the coverage probability and area spectral efficiency ($\ase$). This requires several new results for the coverage probability and $\ase$, where the receiving devices of interest and/or the devices that contain content of interest for these receivers are parametrized in terms of the cluster-center. These results are the main focus of this paper. }


\subsection{Related Work and Motivation}

Existing works on the modeling and analysis of D2D networks has taken two main directions. The first line of work focuses on characterizing the asymptotic scaling laws for cache-enabled D2D networks using the well-known {\em protocol model}; see \cite{6787081,ji2013wireless, ji2014fundamental,ji2015throughput} for a small subset. These works rely on a grid-based clustering model where the space is  tessellated into square cells with devices in each cell forming distinct clusters. While these works provide several key insights, the key limitation is the use of the protocol model, which assumes that the communication between two nodes is possible only if the intended receiver is: i) within collaboration distance of the intended transmitter, and ii) outside the interference range of all other simultaneously active transmitters \cite{gupta2000capacity}. The second line of work considers the so-called {\em physical model}, where the successful communication between two nodes is based on the received signal-to-interference-and-noise ratio, unlike the {\em protocol model} where it is simply based on the distance \cite{gupta2000capacity}. Tools from stochastic geometry have been used for the tractable characterization of the key physical layer metrics, such as the rate and coverage \cite{haenggi2012stochastic,baccelli2009stochastic,mukherjee2014analytical}. These tools have resulted in significant advancement in the tractable modeling and analysis of downlink and uplink cellular networks \cite{dhillon2012modeling,mukherjee2012distribution,novlan2013analytical,elsawy2014stochastic}. Motivated by this encouraging progress, there has recently been a surge of interest in applying these tools to the analysis of D2D networks. These works are discussed next. 

Depending upon the spectrum allocated for D2D transmissions, the D2D networks can be classified into two categories: {\em in-band} and {\em out-of-band}. In the in-band D2D networks, D2D and cellular networks coexist in the same spectrum. Using tools from stochastic geometry, several coexistence aspects, such as mode selection between cellular and D2D~\cite{lin2013comprehensive,elsawy2014analytical,zhu2015joint}, 
coexistence of D2D and unmanned aerial vehicle~\cite{mozaffari2015unmanned}, distributed caching in D2D networks \cite{ShDhiC2015a},
 and D2D interference management to protect cellular transmissions~\cite{feng2014tractable,sun2014d2d,george2014analytical,sakr2014cognitive,mungara2014spatial}, have been addressed. On the other hand, in the out-of-band D2D, as the name implies, orthogonal spectrum is allocated for D2D and cellular transmissions. For this setup, various aspects and applications of D2D networks, such as  multicast D2D transmissions~\cite{lin2013modeling}, D2D communication with network coding~\cite{pyattaev2015understanding}, and traffic offloading from cellular networks to D2D networks~\cite{andreev2014analyzing}, have been studied. 

To lend tractability, the common approach in all the above mentioned stochastic geometry-based works for D2D networks is to model the locations of D2D transmitters (D2D-Txs) as a Poisson Point Process (PPP), and the locations of D2D receivers (D2D-Rxs) via two approaches: i) D2D-Rxs lie at a fixed distance from their intended D2D-Txs \cite{lin2013comprehensive,feng2014tractable,sun2014d2d,george2014analytical,sakr2014cognitive,mungara2014spatial,mozaffari2015unmanned}, and ii) D2D-Rxs are uniformly distributed in a circular region around their intended D2D-Txs \cite{elsawy2014analytical,pyattaev2015understanding,andreev2014analyzing}. While these models provide several useful design insights, they suffer from a key shortcoming of not being able to capture the notion of {\em device clustering}, which is quite fundamental to the D2D network architecture \cite{altieri2014fundamental,6787081,zhang2014social,ji2013wireless ,ji2014fundamental,hu2014evaluating}. This shortcoming was addressed in our very recent work~\cite{MehrnazD2D1,AfsDhiC2015}, where we modeled the device locations by a Poisson cluster process to analyze the performance of {\em device-centric} content placement policies. \chb{In contrast, the current work takes a cluster-centric approach where the focus is on placing content so as to optimize the performance of the whole cluster rather than the individual devices.} 
The analysis involves the characterization of several new distance distributions in Poisson cluster processes. 
More details along with other main contributions are explained in detail below.%

\subsection{Contributions and Outcomes}

\subsubsection*{Tractable model for cache-enabled D2D networks}
We develop a realistic analytic framework to study the performance of cluster-centric content placement policies in a cache-enabled D2D network. Modeling the locations of the devices by a Poisson cluster process (in particular a slight variation of a Thomas cluster process) and using tools from stochastic geometry and stochastic orders, we first prove that the collective performance of all the devices in a given cluster in terms of coverage probability is improved when the content of interest for each device is placed at the device closest to the cluster center. This policy, however, may not always be feasible due to the limited storage capacity and/or the energy of the closest device. Besides, placing all the content on a single device limits frequency reuse within a cluster to one, which may not be optimal in terms of the network throughput. 
\chb{As a result, we explore more general scenarios in which the content is distributed across devices in the cluster. The analysis of such scenarios require new methodology where the receiving devices of interest and/or the devices that contain their content of interest are parameterized in terms of their location relative to the cluster center. This forms the main technical contribution of the paper. More details are provided next.}
\subsubsection*{$\ase$ and coverage probability analysis}  \chb{We derive easy to use expressions for coverage probability and $\ase$ for the following three general cases: (i) $k$-Tx case: receiver of interest is chosen uniformly at random in a cluster and its content of interest is available at the $k^{th}$ closest device to the cluster center, (ii) $\ell$-Rx case: receiver of interest is the $\ell^{th}$ closest device to the cluster center and its content of interest is available at a device chosen uniformly at random from the same cluster, and (iii) baseline case: the receiver of interest is chosen uniformly at random in a cluster and its content of interest is available at a device chosen independently and uniformly at random from the same cluster.} 
\chb{A common feature in these cases is the parameterization of the receiver of interest and/or its serving device in terms of its location relative to the cluster center. These results provide key insights into the performance of cluster-centric content placement policies. }
A key intermediate step in the analysis is the characterization of distances from the D2D-Rx of interest to its serving device, and intra- and inter-cluster interfering devices for these cases.
\subsubsection*{New generative model for cluster-centric D2D networks}
The analysis of $k$-Tx and $\ell$-Rx  cases described above shows that the network performance improves significantly when the device(s) on which the content is cached or the device(s) requesting content from cache are biased to lie closer to the cluster center. This means that besides the D2D link of interest, the intra-cluster interfering links may be more likely to have a transmitter or receiver closer to the cluster center. To study the effect of this behavior on the network performance, we propose a generative model in which the device locations follow a {\em double-variance Thomas cluster process}, where each cluster consists of a denser and a sparser subcluster. Sampling the locations of the transmitters or receivers uniformly at random from the denser subcluster allows us to model the above described {\em biasing} behavior fairly accurately.

\begin{table}
\centering{
\caption{Summary of notation}
\label{table:summarynotation}

\scalebox{.72}{%
\begin{tabular}{c|c}
  \hline
   \hline
  \textbf{Notation} & \textbf{Description}  \\
     \hline
  $\Phi_\nrmc; \lambda_\nrmc$ & An independent PPP modeling the locations of D2D cluster center,\\
                              & density of D2D cluster center \\
  \hline
   $x$  & The location of cluster center\\
 \hline
 $\a$, $\b$ & The relative location of cluster member form cluster center\\
 \hline
 $R$, $r$ & The serving distance, where a realization of $R$ is denoted by $r$\\
 \hline
 $\Nx$& Set of devices inside the cluster \\
    \hline
    $\Nx_{\rm t}, \Nx_{\rm r}$&  Set of possible transmitting and receiving devices\\
    \hline
     $N, N_{\tt t}, N_{\tt r}$& Number of  total, possible transmitting and   receiving devices  \\
    \hline
  $\Ax;\:  \bar{m}_\nrma$, $\Bx; \: \bar{m}_\nrmb$ &Set of simultaneously active devices inside the cluster with mean $\bar{m}_\nrma$ and $\bar{m}_\nrmb$\\
  \hline
 $\sigma_\nrma^2, \: \sigma_\nrmb^2$  & Scattering variance of  cluster member location around cluster center\\
 \hline
  $P_\nrmd$  & Transmit power of devices engage in D2D communications\\
 \hline
 $\alpha$  & Path loss exponent corresponding to the D2D link; $\alpha>2$\\
 \hline
 $\ha, \:\hb$  & Exponential fading  coefficients with mean unity \\
 \hline
 $\T$  & $\sir$ threshold for successful demodulation and decoding\\
 \hline
 $\pc$   & Coverage probability\\
 \hline
  $\mathtt{ASE}$  & Area spectral efficiency\\
  \hline
  \hline
\end{tabular}
}
}
\end{table}

\section{System Model} \label{sec:sysmod}
We consider a clustered D2D network where the content of interest for devices of a given cluster is cached in the same cluster. This is inspired by the fact that the popular content may vary significantly across clusters. For instance, users in a library may be interested in an entirely different set of files than the users in a sports bar. Besides, larger inter-cluster distances make it difficult to establish direct communication across clusters. Note that while our model is, in principle, extendible to include inter-cluster communication, we will limit our discussion to more relevant case where direct communication is only between two devices of the same cluster. More details on how the content is placed in the devices of a given cluster will be provided in Section~\ref{sec:contentplacement}.


\subsection{System Setup and Key Assumptions}
We model the locations of the devices by a Poisson cluster process in which the {\em parent points} are drawn from a PPP $\Phi_\nrmc $ with density $\lambda_\nrmc$ and the {\em offspring points} are independent and identically distributed (i.i.d.) around each parent point~\cite{DalVerB2003}. The parent points and offsprings will be henceforth referred to as \emph{cluster centers}  and \emph{cluster members} (or simply {\em devices}),  respectively. The cluster members (or devices) around each cluster center $x\in \Phi_\nrmc$ are sampled from an i.i.d. symmetric normal distribution with variance $\sigma_\nrma^2$ in $\R^2$.
Therefore, the density function of the location of a cluster member relative to the location of its cluster center, $\a\in R^2$, is
\begin{equation}
f_A(\a)=\frac{1}{2 \pi \sigma_\nrma^2 }\exp\left(-\frac{\|\a\|^2}{2 \sigma_\nrma^2}\right). 
\end{equation}
If the number of cluster members in each cluster is Poisson distributed, this setup corresponds to the well-known \emph{Thomas cluster process} \cite{ganti2009interference}. Note that we will put some restrictions on the number of cluster members to facilitate characterization of distance distributions in the sequel. Therefore, our setup can be interpreted as a variant of Thomas cluster process.


Denote the set of devices belonging to the cluster centered as $x \in \Phi_\nrmc$ by $\Nx$. Partition this set into two subsets of (i) possible transmitting devices denoted by $\Nx_{\rm t}$, and (ii) possible receiving devices denoted by $\Nx_{\rm r}$. Within each cluster, the set of simultaneously active transmitters is denoted by
$\Ax \subseteq \Nx_{\rm t}$ and hence the set of simultaneously active transmitters in the whole network can be expressed as:  $$\Psi=\cup_{x\in \Phi_\nrmc} \Ax. $$ To keep the model general,   
  we assume that the number of simultaneously active transmitters $|\Ax|$ is not necessarily the same for each cluster. More specifically, $|\Ax|$ is modeled as a Poisson distributed random variable with mean $\bar{m}_\nrma$.  
 
Without loss of generality, we focus on a randomly chosen cluster, termed \emph{representative cluster}, with its cluster center denoted by $x_0 \in \Phi_\nrmc$. 
For this cluster, we assume that the total number of devices is $|\ncalN^{x_0}|=N$, and the number of possible transmitting devices is $|\ncalN^{x_0}_{\rm t}|=N_{\rm t}$. This assumption is made to facilitate \emph{order statistics} arguments that appear in the characterization of distance distributions in the sequel. Note that for a meaningful analysis,   the link corresponding to the D2D-Rx of interest in the representative cluster needs to be active. Once the location of the D2D-Tx of interest is fixed, the set of other simultaneously active transmitters in the representative cluster is sampled uniformly at random from the remaining $N_{\rm t}-1$ positions. Therefore, it is assumed that the number of intra-cluster interfering devices is Poisson distributed with mean $\bar{m}_\nrma-1$ conditioned on the total being less than $N_{\rm t}-1$. As a result, the average number of active devices in the representative cluster is $\bar{m}_\nrma$, which is consistent with the assumption made above regarding the number of simultaneously active transmitters per cluster.

 \begin{figure}[t!]
\centering{
        \includegraphics[width=.7\linewidth]{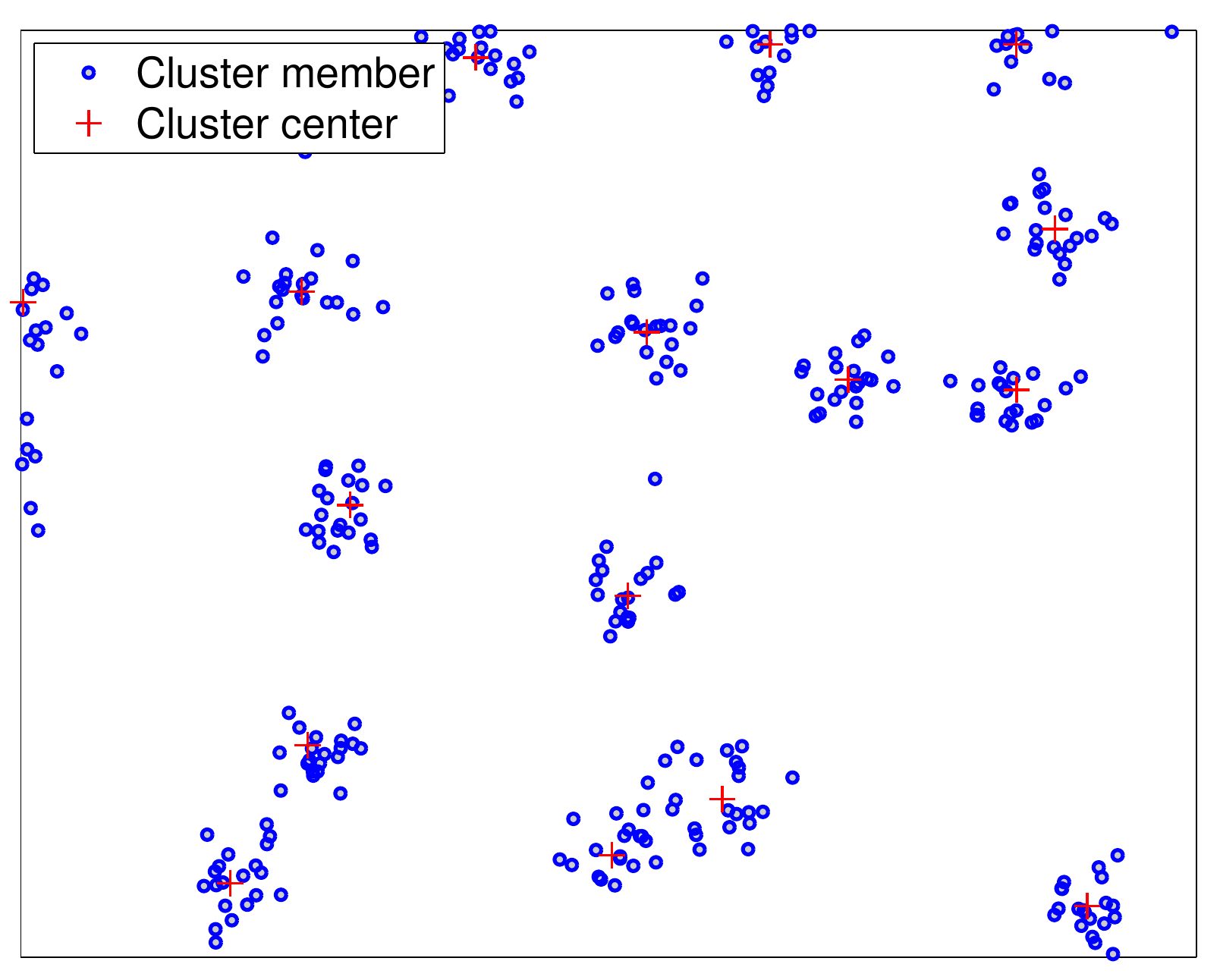}
              \caption{Illustration of D2D cluster network when cluster members (devices) are normally distributed around cluster center with $\sigma_\nrma=40$.}
                \label{Fig: Network Topo}
                }
\end{figure}
\subsection{Channel Model}
\label{subset: channel model}
Recall that the cluster center of the representative cluster is assumed to be located at $x_0 \in \Phi_c$, and hence the D2D-Rx of interest belongs to the set $\ncalN^{x_0}_{\rm r}$. Without loss of generality, the analysis is performed at the D2D-Rx of interest in the representative cluster, which is assumed to be located at the origin. 
The D2D-Txs are assumed to transmit at the constant power $P_\nrmd$. The content of interest for the D2D-Rx of interest is available at the device located at $\txi+x_0$, where $\txi$ indicates the location of this device relative to cluster center $x_0$.  Hence, the received power at the D2D-Rx of interest is
\begin{equation*}
P= P_\nrmd \htxi \|x_0+\txi\|^{-\alpha},
\end{equation*}
where $\htxi\sim \exp(1)$ models Rayleigh fading, and $\alpha$ is the power-law path loss exponent. 
The total interference experienced by the D2D-Rx of interest can be written as a sum of two independent terms. First, the interference from the set of devices inside the representative cluster, say \emph{ intra-cluster interference}, is given by
\begin{equation}
\label{eq: intra cluster interference}
\intra=\sum_{\aa \in \Axx \setminus \txi} P_\nrmd \haa\|x_0+ \aa\|^{-\alpha}.
\end{equation}
Second, the interference from the devices outside the representative cluster, say \emph{inter-cluster interference}, is given by
\begin{equation}
\label{eq: inter cluster interference}
 \inter=\sum_{x\in \Phi_{\nrmc}\setminus x_0}\sum_{\a \in \Ax} P_{\nrmd}\ha\|x+\a\|^{-\alpha}.
\end{equation}
Now, the signal-to-interference ratio $(\sir)$ at the D2D-Rx of interest at a distance $R=\|x_0+\txi\|$ from the serving device, where a realization of $R$ is denoted by $r$,  is:
\begin{equation}
\sir(r)=\frac{P_\nrmd \htxi r^{-\alpha}}{\inter+\intra}. 
\end{equation}
Note that the $\sir$ expression is not a function of the transmit power $P_\nrmd$ and therefore without loss of generality, we assume that $P_\nrmd=1$. 
\chb{For this setup, we study network performance in terms of coverage probability and $\ase$ which are formally defined next.
\begin{ndef}[Coverage probability]
The probability that $\sir$  of an arbitrary  link of interest at the receiver exceeds the required  threshold for successful demodulation and decoding.
\begin{align}
   \pc =  \E[\nb1\{\sir(r)>\T\}],
\end{align}
where $\T$ is a pre-determined threshold for successful demodulation and decoding at the receiver.
\label{Def coverage}
\end{ndef}
\begin{ndef}[Area spectral efficiency] The  average number of bits transmitted  per unit time
per unit bandwidth per unit area  can be defined as:
\begin{equation}
	\mathtt{ASE}= \lambda \log_2(1+\T) \mathbb{E}[\mathbf{1}\{\mathtt{SIR}(r) > \T\}],
	\label{eq:fixed_conditionalSE}
\end{equation}
where $\lambda$ is the number of simultaneously active transmitter per unit area.
\label{Def: ASE}
\end{ndef}
}


\section {Coverage Probability and $\ase$} \label{sec:contentplacement}
This is the first main technical section of the paper, where we first characterize the coverage-optimal cluster-centric content placement policy. We show that under this policy the content of interest for all the devices must be stored at the device closest to the cluster center. However, this may not be feasible due to storage and/or energy constraints of mobile devices. Besides, such a placement would limit the number of simultaneously active D2D connections over a given frequency band in a given cluster to one. Since aggressive frequency reuse is one of the advantages of D2D, this is clearly not desirable. \chb{As a result, we assume that the content is distributed across devices in a cluster. To enable the analysis of cluster-centric content placement policies for this setup, we define several cases of fundamental interest where the D2D-Rx of interest and/or the device which has its content of interest are parameterized in terms of their locations with respect to the cluster center. Easy-to-use expressions for coverage probability and $\ase$ are then derived for these cases.}
\subsection{Coverage-Optimal Content Placement}
\label{subsec: Coverage-Optimal Content Placement}
In this subsection, we study the coverage-optimal content placement problem in the proposed clustered D2D model. Note that while it is preferable to place the content required by each device at its immediately neighboring device, such {\em device-centric} content placement is not realistic. \chb{Therefore, we focus on the {\em cluster-centric} content placement, where the goal is to place the content in such a way that it improves the collective performance of the whole network. This can be achieved by fixing the {\em point of reference} for content placement to be the cluster-center instead of a particular receiver. To fix this key idea, we begin with a simple problem where we assume that the content of interest for the whole cluster is placed at a single device in $\ncalN^{x_0}_{\rm t}$ that is $k^{th}$ closest to the cluster center, where $k=1$ and $k=N_{\rm t}$ correspond to the closest and farthest devices from the set $\ncalN^{x_0}_{\rm t}$ to the cluster center, respectively. Our first goal is to find the value of $k$ that optimizes the performance of the whole cluster. We cast this problem as the coverage maximization problem, where coverage probability of a D2D-Rx of interest  is
\begin{align}
   \pc = \E[\nb1\{\sir(\|x_0+s_k\|)>\T\}],
   \label{eq:CC-Pc}
\end{align}
where $x_0$ is the location of the cluster center, and $s_k$ is the location of the $k^{th}$ closest device to the cluster center, which is also the serving device.
The optimal value of $k$ that maximizes this coverage probability is derived in the next Lemma.}
\begin{lemma}\label{lem: Optimal content placement2}
\chb{The optimal value of $k$ that maximizes the coverage probability given by \eqref{eq:CC-Pc} for the whole cluster is }
\begin{align}
\argmax_{k \in \{1,2,..., N_{\rm t}\}}\E[\nb1\{\sir(\|x_0+s_k\|)>\T\}]=1.%
\end{align}
\end{lemma}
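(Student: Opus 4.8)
The plan is to reduce the coverage probability in \eqref{eq:CC-Pc} to a decreasing functional of the serving distance $R_k = \|x_0 + s_k\|$ and then to establish a stochastic ordering of these serving distances in $k$. First I would integrate out the desired-signal fading $\htxi \sim \exp(1)$: since $\sir(R_k) = \htxi R_k^{-\alpha}/(\inter+\intra)$,
\[
\pc(k) = \E\!\left[\exp\!\left(-\T R_k^{\alpha}\, I\right)\right], \qquad I = \inter + \intra .
\]
For the single-device placement under consideration only one device per cluster is active (frequency reuse one), so $\intra = 0$ and $I = \inter$. By the stationarity of the parent PPP $\Phi_\nrmc$ together with Slivnyak's theorem, the interference generated at the receiver (the origin) by the remaining clusters has a law that does not depend on the location of the representative cluster centre nor on its internal device positions; hence $I$ is independent of $R_k$. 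Writing $\phi(r) = \E_I[\exp(-\T r^{\alpha} I)]$, which is non-increasing in $r$, we get $\pc(k) = \E[\phi(R_k)]$, so it suffices to prove $R_k \leq_{\rm st} R_{k+1}$ for every $k$, since a non-increasing functional of a stochastically increasing family is non-increasing in $k$.

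The core step is therefore this stochastic ordering. Writing $x_0 = -b_0$, where $b_0$ is the location of the uniformly chosen D2D-Rx of interest relative to its cluster centre, we have $R_k = \|s_k - b_0\|$ with $b_0$ a zero-mean isotropic Gaussian of variance $\sigma_\nrma^2$, independent of the transmitting-device positions. Conditioning on $\|s_k\|=\rho$ and using rotational symmetry,
\[
\P\!\left(R_k > t \,\middle|\, \|s_k\| = \rho\right) = 1 - \P\!\big(b_0 \in B(s_k,t)\,\big|\,\|s_k\|=\rho\big) =: 1 - h_t(\rho),
\]
which depends on $s_k$ only through $\rho=\|s_k\|$. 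The key analytic fact is that $h_t(\rho)$, the Gaussian measure of a ball of fixed radius $t$ whose centre lies at distance $\rho$ from the mean, is non-increasing in $\rho$; this is Anderson's inequality for the symmetric, radially decreasing Gaussian density. Thus $\P(R_k>t \mid \|s_k\|=\rho)$ is non-decreasing in $\rho$.

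To finish, I would use the defining property of the order statistics: since $s_1,\dots,s_{N_{\rm t}}$ are ordered by distance to the cluster centre, $\|s_k\| \le \|s_{k+1}\|$ holds pathwise. Hence for every $t$,
\[
\P(R_k > t) = \E\big[1 - h_t(\|s_k\|)\big] \le \E\big[1 - h_t(\|s_{k+1}\|)\big] = \P(R_{k+1} > t),
\]
i.e. $R_k \leq_{\rm st} R_{k+1}$. Combined with the first paragraph this gives $\pc(1) \ge \pc(2) \ge \cdots \ge \pc(N_{\rm t})$, so the maximiser is $k=1$. The main obstacle is precisely this stochastic-ordering step: for a \emph{fixed} receiver location the serving distance is not monotone in $k$, because pushing $s_k$ radially outward can move it closer to that fixed $b_0$; the ordering is recovered only after averaging over the isotropic Gaussian law of $b_0$, which is exactly where Anderson's inequality enters. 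The decoupling of signal from interference (the independence of $I$ from $R_k$) is what makes the reduction to a one-dimensional comparison in the serving distance legitimate.
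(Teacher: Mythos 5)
Your proof is correct and follows the same skeleton as the paper's: both arguments reduce the claim to the stochastic ordering $R_k \leq_{\rm st} R_{k+1}$ of the serving distances, and both obtain that ordering from the pathwise inequality $\|s_k\| \le \|s_{k+1}\|$ of the order statistics combined with the fact that the distance from an isotropic Gaussian point to a fixed point at distance $\rho$ from the Gaussian's mean is stochastically increasing in $\rho$. The differences are in how the two supporting steps are discharged. For the key monotonicity fact, the paper writes the conditional CDF of $R$ given $t_k$ as $F_R(r|t_k)=1-Q_1(t_k/\sigma_\nrma, r/\sigma_\nrma)$ and cites the monotonicity of the Marcum Q-function in its first argument, whereas you invoke Anderson's inequality; these are the same fact under two names (the Marcum-Q property is exactly the Gaussian-ball case of Anderson), so either citation works. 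For the reduction from $\sir$ coverage to a monotone functional of $R_k$, the paper rewrites $\sir = S_k/(I_T - S_k)$ with $I_T$ the total received power and asserts that since $I_T$ ``is not a function of $k$,'' marginal stochastic dominance of the distances transfers to the $\sir$; you instead integrate out the exponential fading and argue that $I=\inter$ is independent of $R_k$ (taking $\intra=0$ under reuse-one). Your version is actually the more careful one: the transfer of a \emph{marginal} stochastic ordering of $R_k$ to the coverage probability genuinely requires the interference to be independent of the serving distance, which is delicate because $\intra$ and $R_k$ both depend on $x_0$; your restriction to $\intra=0$, justified by single-device placement, makes this step airtight where the paper's one-line assertion glosses over it. Your closing observation that the ordering fails conditionally on a fixed receiver location and is recovered only after averaging over the Gaussian law of the receiver is also exactly right, and is the same reason the paper must work with the unconditional Rician law rather than the cosine-law distribution conditioned on $\nu_0$. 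One small caveat you should state explicitly: you need the \emph{law} of $\inter$ to be the same for all $k$, not just its independence from $R_k$; this holds in the paper's model because inter-cluster active transmitters are sampled uniformly at random irrespective of the placement policy, but it would fail if the interfering clusters also placed their active transmitter at the $k^{th}$ closest device to their own centers.
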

\begin{proof}
See Appendix~\ref{proof: Optimal content placement}.

\end{proof}
\chb{
An intuitive interpretation of the above result is that all the devices in a given cluster should be served by a device that is {\em on an average} closest to all of them. As proved formally in the above Lemma, this device is the one that is closest to the cluster center. While this result is potentially useful in determining the coverage-optimal location of a cache-enabled small cell or a dedicated storage device, this simple policy limits the frequency reuse capability of D2D networks by concentrating all the content at a single device. Besides, such a policy may be infeasible due to storage and energy constraints of mobile devices. Therefore, it is important to distribute the content across multiple devices in a cluster. 

As noted above, for cluster-centric content placement, the point of reference will be the cluster-center instead of a particular receiver. This means the D2D-Rx of interest and/or the device that has cached its content of interest can be parametrized in terms of their locations with respect to the cluster center. For instance, it is precise to say that a receiver of interest will have its content of interest cached at a device that is $k^{th}$ closest to the cluster center from the set $\ncalN^{x_0}_{\rm t}$, where $k\in[1,N_{\rm t}]$. The value of $k$ will depend upon the content placement strategy being adopted, as discussed in the context of optimizing the total hit probability in Section~\ref{subsec: Total Hit Probability}. For performance comparison, we also consider a random placement strategy, where the content requested by the D2D-Rx of interest is available at a device chosen uniformly at random from the set $\ncalN^{x_0}_{\rm t}$. 

To analyze the performance of the above setup, we also need to define how the D2D-Rx of interest is chosen. We consider two choices: (i) D2D-Rx of interest is parameterized with respect to the cluster center as done for the transmitter above (say $\ell^{th}$ closest to the cluster center from the set $\ncalN^{x_0}_{\rm r}$), and (ii) the D2D-Rx of interest is chosen uniformly at random from the cluster. While the latter provides insights into the typical network performance, the former is useful in understanding how the performance of devices located towards the center of the cluster (small values of $\ell$) differs from those located at the edge of the cluster (large values of $\ell$). For this setup, we focus on the following three cases, each providing useful insights into the performance of D2D networks:
}

\begin{itemize}
\item {\em $k$-Tx case:}  In this case, the D2D-Rx of interest is chosen uniformly at random from a given cluster and the content of interest for this receiver is available at the $k^{th}$ closest transmitting device to the cluster center (in the set $\ncalN^{x_0}_{\rm t}$) from the same cluster. By tuning the value of $k$, we can study the effect of the location of the content/cache (relative to the cluster center) on the performance. 
%
\item {\em $\ell$-Rx case:} In this case, the D2D-Rx of interest is the $\ell^{th}$ closest device to the cluster center in the set $\ncalN^{x_0}_{\rm r}$ and its content of interest is available at a device chosen uniformly at random in $\ncalN^{x_0}_{\rm t}$. By tuning the value of $\ell$, we can understand how the performance of users located towards the center of the cluster differ from those located towards the cluster edge. 
\item {\em Baseline case:}  In the baseline case, we assume that the D2D-Rx of interest is chosen uniformly at random from $\ncalN^{x_0}_{\rm r}$, and the device containing its content of interest is also chosen uniformly at random from $\ncalN^{x_0}_{\rm t}$. This simple case will act as a baseline for performance comparisons.
\end{itemize}
\chb{Note that we can, in principle, define $k$-Tx $\ell$-Rx case, where the D2D-Rx of interest is the $\ell^{th}$ closest device to the cluster center from the set $\ncalN^{x_0}_{\rm r}$ and its content of interest is available at the $k^{th}$ closest device to the cluster center from the set  $\ncalN^{x_0}_{\rm t}$. Due to lack of space and the fact that the essence of this case will be captured approximately in the new {\em generative model} studied in Section~\ref{Sec:Future of D2D network}, we do not consider this explicitly. }

As discussed in detail in Section~\ref{sec:sysmod}, the effect of frequency reuse is studied by assuming that multiple D2D links in a given cluster can be activated simultaneously. In particular, once the location of the serving device is decided, the locations of intra-cluster interfering transmitters are sampled uniformly at random from the remaining points of $\ncalN^{x_0}_{\rm t}$. Note that the location of these interfering devices can be sampled in more sophisticated ways (e.g., biased to lie closer to the cluster center). This will be discussed in Section~\ref{Sec:Future of D2D network}. We now derive the coverage probabilities for the three cases described above in the next subsection. 

\subsection{Coverage Probability Analysis}
\label{sub:Coverage Probability Analysis }
\chb{Before going into the detailed analysis of coverage probability,  we characterize the distributions of the distances from intra- and inter-cluster devices  to the D2D-Rx of interest under various polices. Using  these distance distributions, we derive the  Laplace transform of distribution of intra- and inter-cluster interference distributions. As will be evident from our analysis, characterizing   Laplace transform of interference distribution is key intermediate result for the coverage probability analysis. }
The distances between the D2D receiver of interest and the various inter/intra cluster interfering devices are in general correlated. Focusing first on the intra-cluster devices, denote the distances from the D2D-Rx of interest to the intra-cluster devices by $\{w\}$, where $w=\|x_0+\aa\|$. Clearly these distances are correlated because of the common distance between the cluster center to the D2D-Rx of interest, $\nu_0=\|x_0\|$. As discussed in detail in \cite{MehrnazD2D1} for {\em device-centric} content placement, this correlation can be handled by conditioning on the common distance $\nu_0=\|x_0\|$, after which the distances $\{w\}$ become {\em conditionally} i.i.d., which lends tractability to the analysis of the Laplace transform of interference distribution, thus resulting in tractable expressions for coverage probability and $\ase$. For the current cluster model, the conditional  distance distribution  $f_W(w| \nu_0)$  is characterized by Rician distribution \cite{MehrnazD2D1}: 
\begin{equation}
\label{Eq: Rice distribution}
f_W(w| \nu_0)=\frac{w}{ \sigma_\nrma^2} \exp\left(-\frac{w^2+\nu_0^2}{2 \sigma_\nrma^2}\right) I_0\left(\frac{w \nu_0}{\sigma_\nrma^2}\right),\quad w>0.
\end{equation}
where $I_0(.)$ is  the modified Bessel function of the first kind with order zero. Since probability density function (PDF) of the Rician distribution will be frequently used in the sequel, we define its functional form below to simplify the notation.
\begin{ndef}(Rician distribution). The PDF of the Rician distribution $f_Y(y|z)$ is 
\begin{equation}
\mathtt{Ricepdf}(y,z;\sigma^2)=\frac{y}{ \sigma^2} \exp\left(-\frac{y^2+z^2}{2 \sigma^2}\right) I_0\left(\frac{y z}{\sigma^2}\right), \quad y>0,
\end{equation}
where   $\sigma$ is the scale parameter of the distribution.
\end{ndef}

Note that while the intra-cluster distances to the D2D-Rx of interest become conditionally i.i.d. by conditioning on $\nu_0=\|x_0\|$, there is still a possibility of dependence between the distances to the serving and interfering devices when the serving device is not chosen uniformly at random from the cluster. The easiest way to understand this dependence is by recalling that while the sequence of distances to intra-cluster devices, $w=\|x_0+\aa\|$, is conditionally i.i.d., the ``ordered'' choice of the serving device impacts the distribution of the remaining elements in the sequence (distances to the interfering devices). This is clearly true in the $k$-Tx case, where the serving device is chosen to be the $k^{th}$ closest device to the cluster center. However, it turns out that this dependence can be handled by first conditioning on the distance from the serving device to the D2D-Rx of interest, denoted by $t_k=\|s_k\|$, and then partitioning the intra-cluster interfering devices into two subsests: (i) devices that are closer than the serving device to the cluster center, denoted by $a\in\Axx_{\rm in}$, where the distance to the cluster center is denoted by $t_{\rm in}=\|a\|$, and (ii) devices that are farther than the serving device to the cluster center, denoted by $a\in \Axx_{\rm out}$, where the distance to the cluster center is denoted by $t_{\rm out}=\|a\|$. Please refer to  \figref{Fig: Sysmtem model case 1} for the pictorial representation. In the following Lemma, we prove that the distances from devices in $a\in\Axx_{\rm in}$ and $a\in \Axx_{\rm out}$ are respectively i.i.d., which lends tractability to the interference analysis. This result along with the conditional distribution of the distances is given next.
  \begin{figure}[t!]
\centering{
        \includegraphics[width=.7\linewidth]{./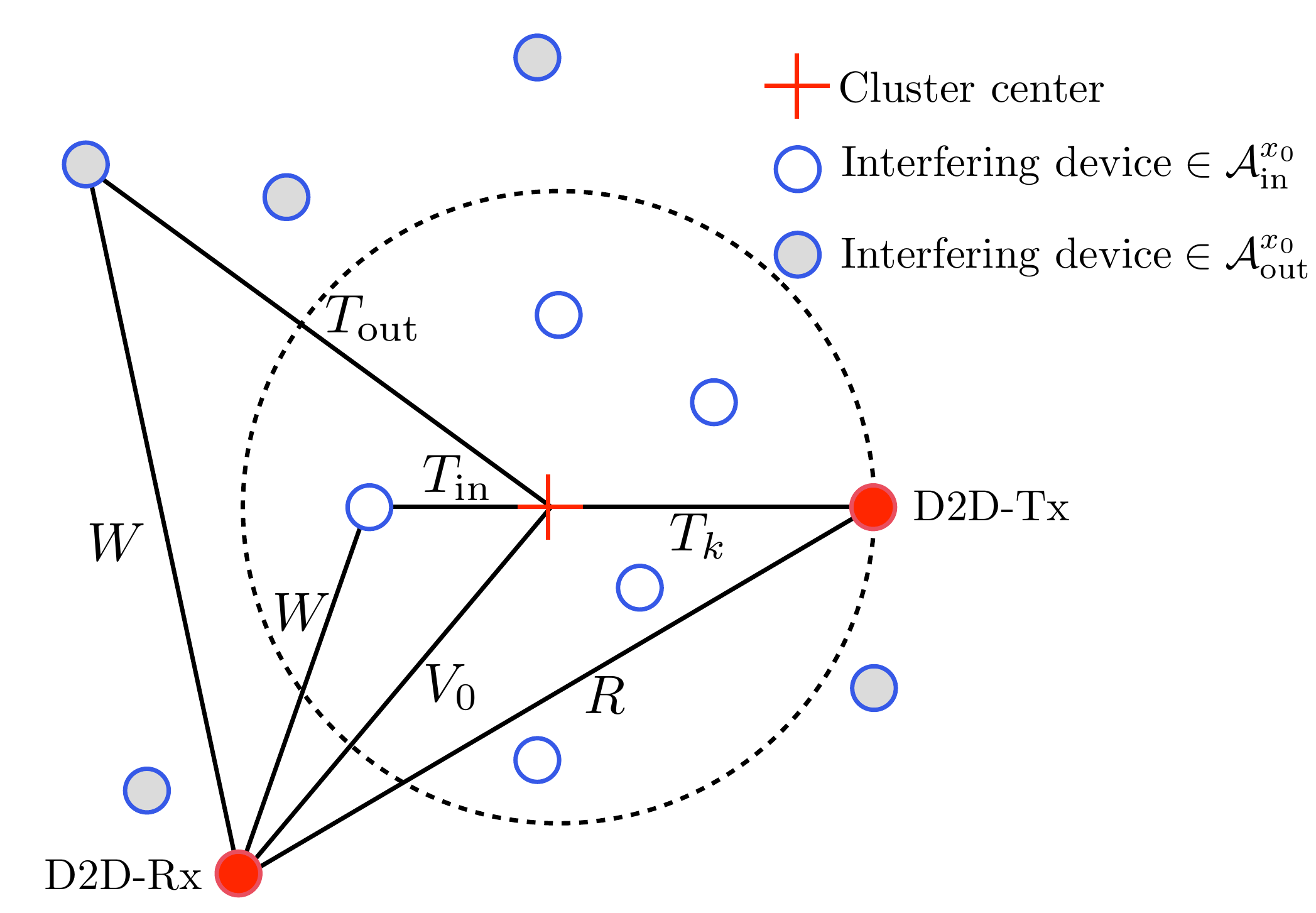}
              \caption{Illustration of intra-cluster devices for the $k$-Tx case.}
                \label{Fig: Sysmtem model case 1}
                }
\end{figure}

\begin{lemma}[Distance of intra-cluster interfering device to the D2D-Rx of interest in the $k$-Tx case]
The  distances from the intra-cluster interfering devices to the D2D-Rx of interest in the $k$-Tx case, i.e., $\{w=\|x_0+\aa\|\}$ are conditionally i.i.d., conditioned  on $\nu_0=\|x_0\|$, and $t=\|a\|$ (where $t$ can be either  $t_{\rm in}$ or $t_{\rm out}$), with PDF
\begin{equation}\label{eq: intra-cluster distance case 1}
f_W(w|\nu_0,t)= \frac{1}{\pi}\frac{w/\nu_0 t}{\sqrt{1-\Big(\frac{\nu_0^2+t^2-w^2}{2 \nu_0 t}\Big)^2}}, |\nu_0-t|<w<\nu_0+t,
\end{equation}
where, if $t=t_{\rm in}$, 
\begin{equation} 
f_{T_{\rm in}}(t_{\rm in}|t_k)=\left\{
 \begin{array}{cc}
 \frac{\frac{t_{\rm in}}{  \sigma_\nrma ^2}\exp\left(-\frac{t_{\rm in}^2}{2 \sigma_\nrma^2}\right)}{1-\exp\left(-\frac{t_k^2}{2 \sigma_\nrma^2}\right)}, & t_{\rm in}<t_k\\
 0, & t_{\rm in}\geq t_k
 \end{array}\right.,
 \end{equation}
if $t=t_{\rm out}$, then,
\begin{equation} 
f_{T_{\rm out}}(t_{\rm out}|t_k)=\left\{
 \begin{array}{cc}
 \frac{\frac{t_{\rm out}}{  \sigma_\nrma ^2}\exp\left(-\frac{t_{\rm out}^2}{2 \sigma_\nrma^2}\right)}{\exp\left(-\frac{t_k^2}{2 \sigma_\nrma^2}\right)}, & t_{\rm out}>t_k\\
 0, & t_{\rm out}\leq t_k
 \end{array}\right.
 \end{equation}
\label{lem: distance from intra-cluster device to typical device in case 1}
\end{lemma}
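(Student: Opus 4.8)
The plan is to reduce the claim to two independent facts about the isotropic Gaussian offspring and one standard order-statistics identity. Since each cluster member's offset $\aa$ relative to its cluster center is a symmetric (isotropic) Gaussian in $\R^2$ with variance $\sigma_\nrma^2$, its magnitude $t=\|\aa\|$ is Rayleigh distributed with density $f_T(t)=(t/\sigma_\nrma^2)\exp(-t^2/(2\sigma_\nrma^2))$ and CDF $F_T(t)=1-\exp(-t^2/(2\sigma_\nrma^2))$, while its argument is uniform on $[0,2\pi)$ and independent of $t$. Because the offsets are i.i.d.\ across devices, the magnitudes are i.i.d.\ Rayleigh and the angles are i.i.d.\ uniform and jointly independent of the magnitudes. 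These are the only probabilistic inputs needed.

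First I would derive the conditional density $f_W(w\mid\nu_0,t)$. After conditioning on $\nu_0=\|x_0\|$ and on $t=\|\aa\|$ for a given interferer, the only remaining randomness in $w=\|x_0+\aa\|$ is the angle $\phi$ between $x_0$ and $\aa$, which is uniform on $[0,2\pi)$. By the law of cosines $w^2=\nu_0^2+t^2+2\nu_0 t\cos\phi$, so $w$ ranges over $(|\nu_0-t|,\nu_0+t)$ and $\cos\phi=(w^2-\nu_0^2-t^2)/(2\nu_0 t)$. Performing the change of variables from $\phi$ to $w$, accounting for the two-to-one nature of the map (since $\phi$ and $2\pi-\phi$ give the same $w$) and using $|dw/d\phi|=\nu_0 t|\sin\phi|/w$ together with $|\sin\phi|=\sqrt{1-\cos^2\phi}$, produces exactly the stated $f_W(w\mid\nu_0,t)$; note that the sign inside the square does not matter since it is squared. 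Conditional independence of the $\{w\}$ across interferers is inherited directly from the mutual independence of the angles $\{\phi\}$.

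Next I would establish the distributions of $t_{\rm in}$ and $t_{\rm out}$. Conditioning on the serving distance-to-center $t_k=\|s_k\|$, which is the $k^{th}$ order statistic of the $N_{\rm t}$ i.i.d.\ Rayleigh magnitudes, I invoke the standard order-statistics identity: conditioned on the value $t_k$ of the $k^{th}$ order statistic of an i.i.d.\ continuous sample, the $k-1$ smaller magnitudes form an i.i.d.\ sample from the distribution truncated to $[0,t_k)$ with density $f_T(\cdot)/F_T(t_k)$, the $N_{\rm t}-k$ larger magnitudes form an i.i.d.\ sample from the distribution truncated to $(t_k,\infty)$ with density $f_T(\cdot)/(1-F_T(t_k))$, and the two groups are mutually independent. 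Because the active intra-cluster interferers are sampled uniformly at random from the remaining $N_{\rm t}-1$ positions, a selection independent of the magnitude values within each group, the distance of any interferer closer to the center than the serving device ($a\in\Axx_{\rm in}$) inherits the lower-truncated Rayleigh density, which upon substituting $F_T(t_k)=1-\exp(-t_k^2/(2\sigma_\nrma^2))$ is precisely $f_{T_{\rm in}}(t_{\rm in}\mid t_k)$, while the distance of any interferer farther away ($a\in\Axx_{\rm out}$) inherits the upper-truncated Rayleigh density $f_{T_{\rm out}}(t_{\rm out}\mid t_k)$.

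Finally I would combine the two steps: given $\nu_0$ and $t_k$, each interferer is first assigned a magnitude $t$ distributed as $f_{T_{\rm in}}$ or $f_{T_{\rm out}}$ according to its subset, and then, conditioned on its $t$, its distance $w$ to the D2D-Rx of interest follows $f_W(w\mid\nu_0,t)$ from the first step, with all the $w$'s conditionally independent (see \figref{Fig: Sysmtem model case 1} for the partition). The main obstacle here is conceptual rather than computational: the ordered choice of the serving device as the $k^{th}$ closest correlates the magnitudes of the remaining devices, so the partition into \emph{in} and \emph{out} interferers is not independent of the serving distance. The crux is recognizing that this correlation is entirely absorbed by conditioning on $t_k$, after which the order-statistics identity restores independence and delivers the clean truncated-Rayleigh marginals, while the uniform-at-random activation of interferers guarantees no further distributional distortion within each subset.
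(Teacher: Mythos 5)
Your proof is correct and follows essentially the same route as the paper: the law of cosines with a uniformly distributed angle gives $f_W(w\mid\nu_0,t)$ (the paper computes the CDF and differentiates, you use a Jacobian change of variables --- the same computation), and the center-distances of the in/out interferers are obtained as truncated Rayleigh densities conditioned on $t_k$. Your explicit appeal to the order-statistics conditioning identity (smaller and larger samples are i.i.d.\ truncated draws, independent across the two groups, with uniform activation preserving this) is in fact a more careful justification of the truncation and conditional-i.i.d.\ steps than the paper's, which states the truncated densities directly and cites prior work for the conditional independence.
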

\begin{proof}
See Appendix \ref{App: proof of distance from intra-cluster device to typical device in case 1}.
\end{proof}
Using this distance distribution, the exact expression of the conditional Laplace transform of intra-cluster interference distribution in  the $k$-Tx case is given next. \chr{Please note that the corresponding result appearing in the shorter version of this paper~\cite{AfsDhiC2015b} is an approximation.}
\begin{lemma} In the $k$-Tx case, the conditional Laplace transform of distribution of intra-cluster interference  \eqref{eq: intra cluster interference}, conditioned on $\nu_0$, where content of interest  is placed at distance $t_k=\|s_k\|$ from cluster center  is $\ncalL_{\intra}(s,t_k|\nu_0)=$
\begin{align}
\notag
\sum_{n=0}^{N_{\rm t}-1}\sum_{l=0}^{g_{\rm m}} \frac{{n \choose l}p^l(1-p)^{n-l}}{I({1-p};n-g_{\rm m},1+g_{\rm m})} M_{\rm in}(s,t_k|\nu_0)^l\\
\times M_{\rm out}(s,t_k|\nu_0)^{n-l}\frac{(\bar{m}_\nrma-1)^n e^{-(\bar{m}_\nrma-1)}}{n! \xi}
\label{eq: Laplace intra case 1}
\end{align}
with,
\small
\begin{align*}
&M_{\rm in}(s,t_k|\nu_0)= \int_{0}^{t_k} \int_{w_{\rm in}^{\rm L}}^{w_{\rm in}^{\rm U}} \frac{f_{W}(w|\nu_0,t_{\rm in})}{1+s w^{-\alpha}}f_{T_{\rm in}}(t_{\rm in}|t_k) \nrmd w \nrmd t_{\rm in},\\
&M_{\rm out}(s,t_k|\nu_0)=\int_{t_k}^{\infty} \int_{w_{\rm out}^{\rm L}}^{w_{\rm out}^{\rm U}} \frac{f_{W}(w|\nu_0,t_{\rm out})}{1+s w^{-\alpha}}f_{T_{\rm out}}(t_{\rm out}|t_k) \nrmd w \nrmd t_{\rm out},
\end{align*}
\normalsize
where $w_{\rm in}^{\rm L}=|\nu-t_{\rm in}|$, $w_{\rm in}^{\rm U}=\nu_0+t_{\rm in}$, $w_{\rm out}^{\rm L}=|\nu_0-t_{\rm out}|$, $w_{\rm out}^{\rm U}=\nu_0+t_{\rm out}$, $p=\frac{k-1}{N_{\rm t}-1}$, $g_{\rm m}=\min(n,k-1)$, $\xi=\sum_{j=0}^{N_{\rm t}-1}\frac{(\bar{m}_\nrma-1)^j e^{-(\bar{m}_\nrma-1)}}{j!}$, ${I({1-p};n-g_{\rm m},1+g_{\rm m})} $ is  regularized incomplete beta function, and density functions  of $f_W(w|\nu_0,t)$, $f_{T_{\rm in}}(t_{\rm in}|t_k)$, and $f_{T_{\rm out}}(t_{\rm out}|t_k)$ are given by Lemma \ref{lem: distance from intra-cluster device to typical device in case 1}.
\label{lem: Laplace exact case 1}
\end{lemma}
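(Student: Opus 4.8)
The plan is to compute the conditional Laplace transform by peeling off, in turn, the three independent sources of randomness in $\intra$: the i.i.d.\ $\exp(1)$ fading coefficients $\haa$, the (conditionally independent) interferer-to-receiver distances $w=\|x_0+\aa\|$, and the random configuration of intra-cluster interferers, i.e.\ their number and their split into devices closer to / farther from the cluster center than the serving device. Conditioning on $\nu_0=\|x_0\|$ and on the serving distance $t_k=\|s_k\|$ throughout is what makes this peeling legitimate.

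First I would write $\ncalL_{\intra}(s,t_k\mid\nu_0)=\E\big[\exp\big(-s\sum_{\aa\in\Axx\setminus\txi}\haa\|x_0+\aa\|^{-\alpha}\big)\,\big|\,\nu_0,t_k\big]$ and condition further on the number $n$ of interferers and on how many of them, say $l$, belong to $\Axx_{\rm in}$ (closer to the center than $s_k$), the remaining $n-l$ lying in $\Axx_{\rm out}$. By Lemma~\ref{lem: distance from intra-cluster device to typical device in case 1}, given $\nu_0$ and $t_k$ the distances $\{w\}$ are conditionally independent, each inner interferer's distance governed by $f_W(w\mid\nu_0,t_{\rm in})$ averaged against $f_{T_{\rm in}}(t_{\rm in}\mid t_k)$ and each outer one by $f_W(w\mid\nu_0,t_{\rm out})$ averaged against $f_{T_{\rm out}}(t_{\rm out}\mid t_k)$. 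Taking the fading expectation first gives the standard factor $\E_h[e^{-sh w^{-\alpha}}]=(1+s w^{-\alpha})^{-1}$ per interferer; integrating this against the relevant distance densities produces exactly the per-interferer factors $M_{\rm in}(s,t_k\mid\nu_0)$ and $M_{\rm out}(s,t_k\mid\nu_0)$. Conditional independence then factorizes the expectation, so that conditioned on $(n,l)$ the transform equals $M_{\rm in}^{\,l}\,M_{\rm out}^{\,n-l}$.

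Next I would average over the interferer configuration. The count $n$ is Poisson$(\bar{m}_\nrma-1)$ truncated to $\{0,\dots,N_{\rm t}-1\}$, contributing the weight $(\bar{m}_\nrma-1)^n e^{-(\bar{m}_\nrma-1)}/(n!\,\xi)$ with $\xi$ the truncation normalizer. Given $n$, each interferer is drawn from the remaining $N_{\rm t}-1$ positions, of which $k-1$ lie closer to the center than $s_k$ and $N_{\rm t}-k$ lie farther, so an interferer is inner with probability $p=(k-1)/(N_{\rm t}-1)$, giving $l$ the binomial weight $\binom{n}{l}p^l(1-p)^{n-l}$, subject to the physical constraint $l\le\min(n,k-1)=g_{\rm m}$ since only $k-1$ inner positions exist. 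Renormalizing over $l\in\{0,\dots,g_{\rm m}\}$ requires the constant $\sum_{l=0}^{g_{\rm m}}\binom{n}{l}p^l(1-p)^{n-l}$, which I would put in closed form via the classical binomial-CDF identity $\sum_{l=0}^{g_{\rm m}}\binom{n}{l}p^l(1-p)^{n-l}=I(1-p;\,n-g_{\rm m},\,1+g_{\rm m})$. Assembling the fading/distance factors with these combinatorial weights yields \eqref{eq: Laplace intra case 1}.

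I expect the combinatorial bookkeeping of the inner/outer split, rather than the fading-and-distance integration, to be the main obstacle; the latter is routine once Lemma~\ref{lem: distance from intra-cluster device to typical device in case 1} supplies conditional independence. The delicate points are (i) justifying that conditioning on $t_k$ makes the inner distances i.i.d.\ among themselves and the outer distances i.i.d.\ among themselves, so the product factorization $M_{\rm in}^{\,l}M_{\rm out}^{\,n-l}$ is valid, and (ii) correctly handling the truncation $l\le g_{\rm m}$ and identifying its normalizer as the regularized incomplete beta function; getting the arguments of $I(\cdot;\cdot,\cdot)$ in the right order is the step most prone to error.
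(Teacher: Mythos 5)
Your proposal is correct and follows essentially the same route as the paper's proof: splitting the interferers into $\Axx_{\rm in}$ and $\Axx_{\rm out}$, taking the exponential-fading expectation to get the per-interferer factor $(1+sw^{-\alpha})^{-1}$, invoking the conditional i.i.d.\ distance distributions of Lemma~\ref{lem: distance from intra-cluster device to typical device in case 1} to factorize into $M_{\rm in}^{\,l}M_{\rm out}^{\,n-l}$, and then averaging over the truncated binomial split ($l\le g_{\rm m}$, normalizer $I(1-p;n-g_{\rm m},1+g_{\rm m})$) and the truncated Poisson count (normalizer $\xi$). The combinatorial bookkeeping you flagged as delicate, including the binomial-CDF/beta-function identity and the probability $p=(k-1)/(N_{\rm t}-1)$, matches the paper exactly.
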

\begin{proof}
See Appendix \ref{proof: lemma Laplace intra case 1}.
\end{proof}
\chb{For the other two cases ($\ell$-Rx and baseline), the selection of serving device is done uniformly at random, which does not induce any dependence in the distances from the serving and interfering devices, leading to a simpler expression for the Laplace transform of intra-cluster interference distribution for these cases. Note that conditioning on $ \nu_0=\|x_0\|$ is still necessary to handle the correlation induced by the common term $x_0$ in the intra-cluster distances, as discussed earlier in this section.}
\begin{lemma}  For the $\ell$-Rx and baseline cases, the conditional Laplace transform of the intra-cluster interference distribution,  conditioned on the distance from the D2D-Rx of interest to the  cluster center, $\nu_0=\|x_0\|$, is
\begin{align}\label{Eq: laplace intra typical _sum}
 \calL_{\intra} (s |\nu_0)=\sum_{n=0}^{N_{\rm t}-1} \Big ( M(w|\nu_0)\Big)^n  
 \frac{(\bar{m}_\nrma-1)^n e^{-(\bar{m}_\nrma-1)}}{n! \xi}
\end{align}
with   $M(w|\nu_0)=\int_0^\infty \frac{1}{1+s w^{-\alpha}}f_{W}(w|\nu_0)\nrmd w$. Assuming $\bar{m}_\nrma\ll N_{\rm t}$, we have
 \begin{align}\label{Eq: laplace intra typical}
 \calL_{\intra} (s |\nu_0)\simeq\frac{1}{\xi} \exp\Big(-(\bar{m}_\nrma-1)(1-M(w|\nu_0))\Big),
\end{align}
 where $\xi=\sum_{j=0}^{N_{\rm t}-1}\frac{(\bar{m}_\nrma-1)^j e^{-(\bar{m}_\nrma-1)}}{j!}$, and $f_{W}(w|\nu_0)= \mathtt{Ricepdf}(w,\nu_0;\sigma_\nrma^2)$.
\label{lem: lap intra typical}
\end{lemma}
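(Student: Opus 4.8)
The plan is to evaluate $\calL_{\intra}(s|\nu_0) = \E[\exp(-s\intra)\mid\nu_0]$ directly by conditioning on the number of intra-cluster interferers and then exploiting the conditional i.i.d.\ structure of the interferer distances that was established around \eqref{Eq: Rice distribution}. First I would condition on there being $N=n$ interfering devices. In both the $\ell$-Rx and baseline cases the serving device is selected uniformly at random from $\ncalN^{x_0}_{\rm t}$, so—unlike the ordered selection of the $k$-Tx case—it imposes no ordering on the remaining devices. Consequently, conditioned on $\nu_0=\|x_0\|$, the $n$ interferer distances $\{w_i = \|x_0+a_i\|\}$ are i.i.d.\ with the Rician density $f_W(w|\nu_0)=\mathtt{Ricepdf}(w,\nu_0;\sigma_\nrma^2)$, and the fading coefficients $\haa\sim\exp(1)$ are independent of everything else. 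The conditional Laplace transform therefore factorizes into a product of $n$ identical single-interferer terms.

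For a single interferer I would first average over the fading using the exponential moment generating function $\E[e^{-uh}]=(1+u)^{-1}$, turning $\E[\exp(-s\,h\,w^{-\alpha})]$ into $(1+s\,w^{-\alpha})^{-1}$, and then average over the distance $w$ against $f_W(w|\nu_0)$. This produces exactly $M(w|\nu_0)=\int_0^\infty (1+s\,w^{-\alpha})^{-1} f_W(w|\nu_0)\,\nrmd w$, so the Laplace transform conditioned on $N=n$ equals $M(w|\nu_0)^n$.

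To remove the conditioning on $n$, I would average against the law of the number of intra-cluster interferers, which by the model of Section~\ref{sec:sysmod} is Poisson with mean $\bar{m}_\nrma-1$ truncated to $\{0,\dots,N_{\rm t}-1\}$, i.e.\ $\P(N=n)=\frac{(\bar{m}_\nrma-1)^n e^{-(\bar{m}_\nrma-1)}}{n!\,\xi}$ with normalizer $\xi=\sum_{j=0}^{N_{\rm t}-1}\frac{(\bar{m}_\nrma-1)^j e^{-(\bar{m}_\nrma-1)}}{j!}$. Summing $M(w|\nu_0)^n$ against this truncated PMF yields the exact expression \eqref{Eq: laplace intra typical _sum}. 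For the approximation \eqref{Eq: laplace intra typical}, I would observe that when $\bar{m}_\nrma\ll N_{\rm t}$ the truncation is essentially inactive, so the finite sum can be extended to $n=\infty$ at negligible cost; the resulting series is the Taylor expansion of the exponential, giving $\frac{1}{\xi}\exp\!\big(-(\bar{m}_\nrma-1)(1-M(w|\nu_0))\big)$.

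The genuinely delicate point—and the only step that is more than a routine moment-generating-function computation—is justifying the conditional independence of the interferer distances. This rests on two observations already noted in the preceding discussion: conditioning on $\nu_0=\|x_0\|$ decouples the correlation induced by the shared cluster center, and uniform random selection of the serving transmitter leaves the law of the remaining distances unperturbed. It is precisely this uniform selection that unifies the $\ell$-Rx and baseline cases: although the $\ell$-ordering of the receiver alters the marginal distribution of $\nu_0$, the Lemma is stated conditionally on $\nu_0$, so the two cases share the identical conditional form. Once this independence is in hand, everything downstream is a straightforward product-over-interferers and geometric-to-exponential series calculation.
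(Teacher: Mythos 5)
Your proposal is correct and follows essentially the same route as the paper's proof: exponential-fading averaging to get the $(1+s w^{-\alpha})^{-1}$ factor, the conditional i.i.d.\ (Rician) structure of the interferer distances given $\nu_0$, averaging $M(w|\nu_0)^n$ against the truncated Poisson law of the number of interferers, and extending the truncated sum to an infinite series to obtain the exponential approximation when $\bar{m}_\nrma \ll N_{\rm t}$. The only cosmetic difference is that the paper performs the location average in Cartesian coordinates against $f_A(\a)$ and then converts to polar form to recover the Rician density, whereas you invoke the Rician distance distribution directly; the two computations are identical.
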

\begin{proof}
See Appendix \ref{proof: lemma Laplace intra}.
\end{proof}
\begin{remark}
As discussed in the sequel, the optimal number of simultaneously active links is much smaller than the total number of potential transmitters. Hence, the assumption $\bar{m}_\nrma \ll N_{\rm t}$ taken to derive the simpler expression~\eqref{Eq: laplace intra typical} in Lemma~\ref{lem: lap intra typical} is fairly reasonable. As a result, \eqref{Eq: laplace intra typical} will in general be quite accurate.
\end{remark}
\begin{remark} \label{Rem:kTxApprox}
While Lemma~\ref{lem: lap intra typical} is exact for $\ell$-Rx and baseline cases, it also provides a tractable approximation for the $k$-Tx case, whose exact expression for the Laplace transform of intra-cluster interference distribution given by Lemma~\ref{lem: Laplace exact case 1} is much more complicated due to the presence of two summations. Lemma~\ref{lem: lap intra typical} is an approximation for the $k$-Tx case because it ignores the effect of ``ordered'' selection of serving device on the distance distributions of the intra-cluster interference devices. However, this approximation for the $k$-Tx case will be numerically shown to be quite tight in Section~\ref{sec:NumResults}. 
\end{remark}

The Laplace transform of intra-cluster interference distribution given by Lemma~\ref{lem: lap intra typical} can be simplified further under the following assumption without loosing much accuracy.
\begin{assumption}[Un-correlated intra-cluster distances assumption for $k$-Tx and baseline cases]\label{Ass: Identical intra-cluster distances} 
Since the devices are normally distributed around the cluster centers, the distances from the intra-cluster devices to the D2D-Rx of interest are Rayleigh distributed with the following PDF when the D2D-Rx of interest is chosen uniformly at random~\cite{MehrnazD2D1} 
\begin{equation} 
  f_{W}(w)= \frac{w}{2 \sigma_{\nrma} ^2}\exp\left(-\frac{w^2}{4 \sigma_{\nrma}^2}\right), \quad w>0.
  \label{Eq: Intra_cluster distance _typical approximation}
    \end{equation}
However, as discussed earlier in this section, the distances are correlated due to the presence of the common distance $\nu_o=\|x_0\|$, due to which we conditioned on this distance in \eqref{Eq: Rice distribution}. However, if we ignore this correlation, we can simplify the analysis by assuming that the distances are i.i.d. Rayleigh distributed with the PDF given by~\eqref{Eq: Intra_cluster distance _typical approximation}. This approximation is however not applicable for the $\ell$-Rx case where the D2D-Rx of interest is not chosen uniformly at random.
\end{assumption}
Under this assumption the approximation for Laplace transform
of intra-cluster interference distribution is given next. It is applicable for the $k$-Tx and baseline cases.
\begin{cor}
  Under Assumption \ref{Ass: Identical intra-cluster distances}, the Laplace transform of intra-cluster interference distribution in $k$-Tx and baseline cases is
\begin{multline}\label{Eq: app laplace intra typical}
 \tilde{\calL}_{\intra} (s)=\frac{1}{\xi}      \exp\Big(-(\bar{m}_\nrma-1)\int_0^\infty \frac{s w^{-\alpha}}{1+s w^{-\alpha}}\\
  f_{W}(w)\nrmd w\Big) , \quad
\end{multline}
where    $\xi=\sum_{j=0}^{N_{\tt t}}\frac{(\bar{m}_\nrma-1)^j e^{-(\bar{m}_\nrma-1)}}{j!}$ and $f_{W}(w)$ given by \eqref{Eq: Intra_cluster distance _typical approximation}.
\label{cor: app lap intra}
\end{cor}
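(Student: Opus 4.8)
The plan is to obtain the Corollary directly from Lemma~\ref{lem: lap intra typical} by invoking Assumption~\ref{Ass: Identical intra-cluster distances} to remove the conditioning on $\nu_0$, followed by a short algebraic simplification of the exponent. First I would start from the approximate conditional form already established in Lemma~\ref{lem: lap intra typical},
\begin{equation*}
\calL_{\intra}(s|\nu_0)\simeq\frac{1}{\xi}\exp\Big(-(\bar{m}_\nrma-1)\big(1-M(w|\nu_0)\big)\Big),
\end{equation*}
where $M(w|\nu_0)=\int_0^\infty \frac{1}{1+s w^{-\alpha}}\,\mathtt{Ricepdf}(w,\nu_0;\sigma_\nrma^2)\,\nrmd w$. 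This expression already absorbs the Poisson summation over the number of intra-cluster interferers, so no further probabilistic machinery is needed.

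The key step is then to apply Assumption~\ref{Ass: Identical intra-cluster distances}: since the D2D-Rx of interest is chosen uniformly at random in the $k$-Tx and baseline cases, I would ignore the correlation induced by the common term $\nu_0=\|x_0\|$ and replace the conditional Rician density $f_W(w|\nu_0)$ by the unconditional Rayleigh density $f_W(w)$ of~\eqref{Eq: Intra_cluster distance _typical approximation}. This substitution removes the dependence on $\nu_0$, so that $M(w|\nu_0)$ becomes the unconditional $M(w)=\int_0^\infty \frac{f_W(w)}{1+s w^{-\alpha}}\,\nrmd w$ and the whole Laplace transform becomes unconditional. It then only remains to rewrite $1-M(w)$. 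Because $f_W$ is a probability density, $\int_0^\infty f_W(w)\,\nrmd w=1$, so
\begin{equation*}
1-M(w)=\int_0^\infty\Big(1-\frac{1}{1+s w^{-\alpha}}\Big)f_W(w)\,\nrmd w=\int_0^\infty\frac{s w^{-\alpha}}{1+s w^{-\alpha}}\,f_W(w)\,\nrmd w,
\end{equation*}
which is exactly the exponent appearing in the Corollary. Substituting this back yields the claimed $\tilde{\calL}_{\intra}(s)$.

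There is no substantive analytical obstacle here, as the statement follows from an already-proved lemma combined with a single approximation and an elementary normalization identity; the only care required lies in justifying the de-conditioning step. The point I would emphasize is that Assumption~\ref{Ass: Identical intra-cluster distances} is precisely what legitimizes swapping the conditionally i.i.d.\ Rician distances for unconditionally i.i.d.\ Rayleigh distances, and that this is valid only because the D2D-Rx of interest is sampled uniformly at random (hence applicable to the $k$-Tx and baseline cases, but not to the $\ell$-Rx case). The tilde on $\tilde{\calL}_{\intra}$ signals that the resulting closed form is an approximation, whose tightness is ultimately verified numerically in Section~\ref{sec:NumResults}.
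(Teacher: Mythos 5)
Your proposal is correct and follows exactly the route the paper intends: the corollary is stated without a separate proof precisely because it is immediate from Lemma~\ref{lem: lap intra typical} once Assumption~\ref{Ass: Identical intra-cluster distances} is invoked to replace the conditional Rician density $f_W(w|\nu_0)$ by the unconditional Rayleigh density $f_W(w)$, with the normalization identity $1-M = \int_0^\infty \frac{s w^{-\alpha}}{1+s w^{-\alpha}} f_W(w)\,\nrmd w$ putting the exponent in the stated form. Your remark on why the de-conditioning is legitimate only when the D2D-Rx is chosen uniformly at random (excluding the $\ell$-Rx case) matches the paper's own justification.
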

We will use this simpler expression to provide easy to compute expression for coverage probability later in this section. 
We now derive the Laplace transform of inter-cluster interference distribution. Recall that the inter-cluster interferers are sampled uniformly at random in all three cases, which means the following result is exact for all three cases.
\begin{lemma}\label{Lem: Lap_Inter} For all three cases, the Laplace transform of distribution of inter-cluster interference at D2D-Rx of interest in~\eqref{eq: inter cluster interference} is
\begin{multline}\label{Eq: Lap_Inter}
\calL_{\inter} (s) =\exp\Big(-2 \pi \lambda_\nrmc\int_0^\infty \Big(1-\exp\Big(-\bar{m}_\nrma\int_0^\infty \frac{s u^{-\alpha}}{1+s u^{-\alpha}}
 \\f_U(u|\nu)\nrmd u \Big)\nu \nrmd \nu\Big)\Big),
\end{multline}
\normalsize
 where  $f_{U}(u| \nu)=\mathtt{Ricepdf}(u,\nu;\sigma_{\nrma}^2)$.
\end{lemma}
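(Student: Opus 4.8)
The plan is to compute $\calL_{\inter}(s)=\E[e^{-s\inter}]$ directly, exploiting the hierarchical structure of the Poisson cluster process: the cluster centers form the PPP $\Phi_\nrmc$, and within each cluster the active interferers, their locations, and their fading are mutually independent. Setting $P_\nrmd=1$ as justified earlier, the first step is to condition on the parent process $\Phi_\nrmc$ and use the independence of the per-cluster interference contributions given the cluster-center locations, which lets me write the conditional Laplace transform as a product of single-cluster factors over $x\in\Phi_\nrmc\setminus x_0$.

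For a single interfering cluster centered at $x$ with $\nu=\|x\|$, I would evaluate its contribution via three nested averages. Because each active device sits at absolute location $x+\a$ with $\a$ drawn from the zero-mean symmetric normal law $f_A$, the displacement $x+\a$ is a two-dimensional Gaussian with mean $x$, so its distance to the origin $u=\|x+\a\|$ is Rician with $f_U(u|\nu)=\mathtt{Ricepdf}(u,\nu;\sigma_\nrma^2)$, and these distances are i.i.d. across the active devices of the cluster. Averaging the Rayleigh fading first gives $\E_{\ha}[e^{-s\ha u^{-\alpha}}]=(1+su^{-\alpha})^{-1}$, and averaging over the i.i.d. distances yields the per-device factor $\int_0^\infty(1+su^{-\alpha})^{-1}f_U(u|\nu)\,\nrmd u$. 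Since $|\Ax|$ is Poisson with mean $\bar{m}_\nrma$, applying the probability generating function of a Poisson random variable collapses the per-device factors into $\exp\!\big(-\bar{m}_\nrma\int_0^\infty\frac{su^{-\alpha}}{1+su^{-\alpha}}f_U(u|\nu)\,\nrmd u\big)$, where I use $1-\int_0^\infty(1+su^{-\alpha})^{-1}f_U\,\nrmd u=\int_0^\infty\frac{su^{-\alpha}}{1+su^{-\alpha}}f_U\,\nrmd u$ since $f_U$ integrates to one.

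The final step is to de-condition on $\Phi_\nrmc$. Because the D2D-Rx of interest is attached to the representative cluster at $x_0$, by Slivnyak's theorem the remaining cluster centers $\Phi_\nrmc\setminus x_0$ again form a PPP of density $\lambda_\nrmc$, so I can apply its probability generating functional $\E[\prod_x f(x)]=\exp\!\big(-\lambda_\nrmc\int_{\R^2}(1-f(x))\,\nrmd x\big)$ to the per-cluster factor derived above. Passing to polar coordinates, the angular integral contributes a factor $2\pi$ and the radial integral runs over $\nu=\|x\|$ with element $\nu\,\nrmd\nu$, which reproduces exactly \eqref{Eq: Lap_Inter}.

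I expect the main subtlety to be the bookkeeping of the two-level randomness, namely the Poisson number of active devices inside each cluster nested inside the PPP of cluster centers, and in particular justifying that the per-cluster distances to the origin are conditionally i.i.d. Rician and independent of the Poisson count. Since the inter-cluster interferers are sampled uniformly at random in all three cases, no order-statistics dependence, unlike in Lemma~\ref{lem: distance from intra-cluster device to typical device in case 1}, enters here; this is precisely why a single expression holds for all three cases.
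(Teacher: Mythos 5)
Your proof is correct and follows essentially the same route as the paper: the paper obtains Lemma~\ref{Lem: Lap_Inter} as the special case $\bar{m}_\nrmb=0$ of Lemma~\ref{lem: Future Laplace inter}, whose proof is exactly your chain of conditioning on $\Phi_\nrmc$, averaging the exponential fading, applying the Poisson PGF to the number of active devices within each cluster, and then applying the PPP PGFL over the cluster centers in polar coordinates. The only cosmetic difference is that you introduce the Rician distance law $f_U(u|\nu)$ before the Poisson/PGFL steps, whereas the paper works with the Gaussian density $f_A$ in Cartesian coordinates and converts to polar (obtaining the Rician form) at the end.
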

\begin{proof}
The Laplace transform of the inter-cluster interference distribution is special case of the Lemma \ref{lem: Future Laplace inter}. Since the proof follows on the same lines as that of Lemma \ref{lem: Future Laplace inter}, its skipped. \end{proof}

\subsubsection{Coverage probability analysis of $k$-Tx case}
Recall that the D2D-Rx of interest in this case is chosen uniformly at random and the D2D-Tx of interest is the $k^{th}$ closest transmitting device to the cluster center (in the set $\ncalN^{x_0}_{\rm t}$). We first derive the serving distance distribution for this case. 
\begin{lemma}
The PDF of the serving distance, i.e., $r=\|x_0+s_k\|$, conditioned on the distances $\nu_0=\|x_0\|$ and $t_k=\|s_k\|$ for the $k$-Tx case is 
\begin{align}
f_{R}(r|\x,\y)&= \frac{1}{\pi} \frac{\z/{\x,\y}}{\sqrt{1-\left(\frac{\x^2+\y^2-\z^2}{2\x\y}\right)^2}}, \:\: |\x-\y|<\z<\x+\y, \ \ \ \label{eq: fR case 1}
\end{align}
with
\small
\begin{align}
f_{V_0}(\x)&=\frac{\x}{  \sigma ^2}\exp\left(-\frac{\x^2}{2 \sigma_\nrma^2}\right), \quad \x>0 \label{eq: fV case 1}\\
f_{T_k}(\y)&=\frac{N_{\rm t}!}{(k-1)!(N_{\rm t}-k)!}{F(\y)}^{k-1}(1-F(\y))^{N_{\rm t}-k} f(\y) \label{eq: fT case1}
\end{align}
\normalsize
 where  $f(\y)=\frac{\y}{  \sigma ^2}\exp(-\frac{\y^2}{2 \sigma_\nrma^2})$, and $F(\y)=1-\exp(-\frac{\y^2}{2 \sigma_\nrma^2})$.
 \label{lem: serving dist case1}
\end{lemma}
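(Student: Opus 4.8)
The plan is to split the serving-distance law into three independent ingredients: the marginal of $V_0=\|x_0\|$, the marginal of $T_k=\|s_k\|$, and the conditional law of $r=\|x_0+s_k\|$ given these two lengths. Geometrically, $x_0$ (length $\nu_0$), $s_k$ (length $t_k$) and $x_0+s_k$ (length $r$) form a triangle with apex at the cluster center, so once the two side lengths are fixed, $r$ is determined by the angle subtended at the cluster center. I would therefore establish the two marginal laws first and then the angular (conditional) law.

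First I would obtain $f_{V_0}$. Since the D2D-Rx of interest is a device drawn uniformly at random from the representative cluster, its displacement from the cluster center is an isotropic Gaussian with per-coordinate variance $\sigma_\nrma^2$, and because this receiver sits at the origin the displacement has magnitude $\nu_0=\|x_0\|$; the norm of such a Gaussian is Rayleigh, giving \eqref{eq: fV case 1}. Next, each transmitting device in $\ncalN^{x_0}_{\rm t}$ has distance-to-center equal to the norm of its own i.i.d. isotropic Gaussian offset, hence i.i.d. Rayleigh with CDF $F$ and PDF $f$. As $s_k$ is by definition the $k^{th}$ closest of these $N_{\rm t}$ devices to the cluster center, $T_k$ is exactly the $k^{th}$ order statistic of $N_{\rm t}$ i.i.d. Rayleigh variables, and the standard order-statistic density yields \eqref{eq: fT case1}. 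Because $V_0$ and $T_k$ are functions of the offsets of the disjoint sets $\ncalN^{x_0}_{\rm r}$ and $\ncalN^{x_0}_{\rm t}$, they are independent, so the joint density factorizes as written.

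For the conditional law $f_R(r\mid\nu_0,t_k)$, the crux is that the direction of $s_k$ is uniform on $[0,2\pi)$ and independent of $\|s_k\|$, and, this being the delicate point, this uniformity survives the ``$k^{th}$ closest'' selection, since the ranking criterion depends only on magnitudes. Thus, conditioned on $\nu_0$ and $t_k$, the angle $\phi$ between $x_0$ and $s_k$ at the cluster center is uniform and independent of both lengths. Writing $r^2=\nu_0^2+t_k^2-2\nu_0 t_k\cos\phi$ by the law of cosines and changing variables from $\phi$ to $r$ (noting that each $r$ in $(|\nu_0-t_k|,\nu_0+t_k)$ arises from two symmetric angles) produces the factor $\bigl[1-(\tfrac{\nu_0^2+t_k^2-r^2}{2\nu_0 t_k})^2\bigr]^{-1/2}$, giving \eqref{eq: fR case 1}. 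This computation is functionally identical to the intra-cluster distance calculation in Lemma~\ref{lem: distance from intra-cluster device to typical device in case 1}, so I would reuse that argument rather than redo it.

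I expect the main obstacle to be precisely the claim that selecting the $k^{th}$ closest transmitter does not bias its direction: an order-statistic selection could a priori correlate the chosen device's direction with its length. The clean resolution is that the offset factorizes into independent magnitude and (uniform) direction parts, while the rank is a measurable function of the magnitudes alone; hence conditioning on the rank and on the length leaves the direction uniform, and the triangle argument of Lemma~\ref{lem: distance from intra-cluster device to typical device in case 1} carries over verbatim.
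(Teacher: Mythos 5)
Your proof is correct and follows essentially the same route as the paper: Rayleigh marginal for $V_0$, order statistics of i.i.d.\ Rayleigh center distances for $T_k$, and the cosine-law/uniform-angle change of variables (the machinery of Lemma~\ref{lem: distance from intra-cluster device to typical device in case 1}) for the conditional law of $R$. The one step you treat more explicitly than the paper---that the $k^{th}$-closest selection cannot bias the direction of $s_k$, since the rank is a function of the magnitudes alone while the isotropic Gaussian offset has independent, uniform direction---is precisely the justification the paper leaves implicit when it says the proof follows ``on the same lines'' as that lemma.
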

\begin{proof}
The PDF of serving distance $r=\|x_0+s_k\|$ conditioned on the $v_0$ and $t_k$, i.e., $f_R(r|\x,\y)$ can be derived exactly on the same lines as $f_{W}(w|\nu_0, t)$ given by \eqref{eq: intra-cluster distance case 1}. Hence, the proof  is  skipped. Here, $f_{V_0}(\x)$ is Rayleigh distributed owing to the fact that the D2D-Rx of interest is a randomly chosen device where  devices are normally scattered around the cluster center. Finally, for $f_{T_k}$ note that the distances of intra-cluster devices to the cluster center are i.i.d. Rayleigh distributed with $T_k$ being the $k^{th}$ smallest sample out of $N_{\rm t}$ elements, whose distribution follows by order statistics (see~\cite[eq (3)]{david1970order}).
\end{proof}
Using this result, we now derive the coverage probability for the $k$-Tx case in the following theorem.
\begin{theorem}[Coverage probability: $k$-Tx case] Using Laplace transform of distribution of interference in \eqref{eq: Laplace intra case 1}, and \eqref{Eq: Lap_Inter}, the coverage probability of the D2D-Rx of interest is  
\begin{align}\notag
{{\tt P}_{c_k}^{\rm Tx}}=\int_0^{\infty}\int_0^{\infty}  \int_{r^{\rm L}}^{r^{\rm U}} &\calL_{\inter} (\T r^{\alpha} ) \calL_{\intra} (\T r^\alpha ,t_k|\nu_0)f_{R}(r|\x,\y)\\
 &\times f_{V_0}(\x)f_{T_k}(\y) \nrmd r \nrmd\nu_0 \nrmd \y, \label{Eq: Pc case 1}
\end{align}
with $r^{\rm L}= |v-t_k|$, and $r^{\rm U}= v+t_k$, where $f_{R}(r|\x,\y)$, $f_{V_0}(\x)$, and $f_{T_k}(\y)$ are given by \eqref{eq: fR case 1}, \eqref{eq: fV case 1}, and \eqref{eq: fT case1} respectively.
\label{Thm: Coverage case1 }
\end{theorem}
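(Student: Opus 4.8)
The plan is to start from Definition~\ref{Def coverage} and exploit the Rayleigh fading assumption to turn the coverage event into a product of Laplace transforms, and then de-condition over the relevant distances. Since $P_\nrmd=1$, the event $\{\sir(r)>\T\}$ is equivalent to $\{\htxi > \T r^{\alpha}(\inter+\intra)\}$. Conditioning on the serving distance $r$ and on the two interference terms, and using the fact that $\htxi\sim\exp(1)$, the complementary CDF of the exponential gives
\begin{equation*}
\P\big(\htxi > \T r^{\alpha}(\inter+\intra)\,\big|\,r,\inter,\intra\big)=\exp\!\big(-\T r^{\alpha}(\inter+\intra)\big).
\end{equation*}

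Next I would use the independence of the intra- and inter-cluster interference, which holds because the two sums in \eqref{eq: intra cluster interference} and \eqref{eq: inter cluster interference} are over disjoint sets of devices (the representative cluster versus all other clusters, whose centers form a PPP independent of $x_0$ by Slivnyak's theorem). Conditioning additionally on $\nu_0=\|x_0\|$ and $t_k=\|s_k\|$, the expectation of the product factorizes, so that the conditional coverage probability becomes
\begin{equation*}
\E\!\big[\exp(-\T r^{\alpha}\inter)\big]\,\E\!\big[\exp(-\T r^{\alpha}\intra)\,\big|\,\nu_0,t_k\big]=\calL_{\inter}(\T r^{\alpha})\,\calL_{\intra}(\T r^{\alpha},t_k|\nu_0),
\end{equation*}
where the inter-cluster factor is unconditional because the interfering cluster centers form a stationary PPP seen from the receiver at the origin (Lemma~\ref{Lem: Lap_Inter}), and the intra-cluster factor is the conditional Laplace transform established in Lemma~\ref{lem: Laplace exact case 1}.

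Finally, I would de-condition by integrating against the joint density of $(r,\nu_0,t_k)$. Because the D2D-Rx of interest and the $k^{\rm th}$ closest transmitter lie in the disjoint sets $\ncalN^{x_0}_{\rm r}$ and $\ncalN^{x_0}_{\rm t}$, the distances $\nu_0$ and $t_k$ are independent with marginals $f_{V_0}$ in \eqref{eq: fV case 1} and $f_{T_k}$ in \eqref{eq: fT case1}; conditioned on these, the serving distance $r=\|x_0+s_k\|$ has the density $f_R(r|\nu_0,t_k)$ of Lemma~\ref{lem: serving dist case1}, whose support (the triangle inequality) yields the limits $r^{\rm L}=|\nu_0-t_k|$ and $r^{\rm U}=\nu_0+t_k$. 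Substituting the product $f_R(r|\nu_0,t_k)f_{V_0}(\nu_0)f_{T_k}(t_k)$ and integrating over $r$, $\nu_0$, and $t_k$ then reproduces \eqref{Eq: Pc case 1}.

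The main obstacle is getting the conditioning structure right rather than any hard computation. The subtlety specific to the $k$-Tx case is that the ordered selection of the serving device correlates the serving distance with the distances of the intra-cluster interferers, so the intra-cluster interference must be conditioned jointly on $\nu_0$ and $t_k$ (via the split into $\Axx_{\rm in}$ and $\Axx_{\rm out}$ established in Lemma~\ref{lem: distance from intra-cluster device to typical device in case 1}), whereas the inter-cluster interference requires only the unconditional Laplace transform. Once this decoupling is in place, the remaining steps are the standard exponential-to-Laplace substitution and the order-statistics de-conditioning.
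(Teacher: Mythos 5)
Your proposal is correct and follows essentially the same route as the paper's proof: apply the exponential CCDF of the Rayleigh fading gain, factor the conditional expectation into $\calL_{\inter}(\T r^{\alpha})\calL_{\intra}(\T r^{\alpha},t_k|\nu_0)$ using the independence of intra- and inter-cluster interference, and then de-condition over $R$ given $(\nu_0,t_k)$ followed by $V_0$ and $T_k$ with the densities of Lemma~\ref{lem: serving dist case1}. Your write-up is in fact more explicit than the paper's (notably in justifying the unconditional inter-cluster factor and the integration limits $|\nu_0-t_k|<r<\nu_0+t_k$), but there is no substantive difference in the argument.
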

\begin{proof}
From the definition of coverage probability, we have
\begin{align}
{{\tt P}_{c_k}^{\rm Tx}} &=\mathbb{E}_{T_k} \mathbb{E}_{V_0} \mathbb{E}_R\left[ \nbbP \left\{ h_{0x_0} > \T r^\alpha (\inter+\intra) \,\Big|\,R,V_0,T_k  \right\} \right] \nonumber \\
&\stackrel{(a)}{=} \mathbb{E}_{T_k} \mathbb{E}_{V_0} \nbbE_R\left[ \nbbE\left[\exp\left(- \T r^\alpha (\inter+\intra) \right) \Big|\,R,V_0,T_k \right]\right]\nonumber
\end{align}
where $(a)$ follows from $h_{0x_0}\sim \exp(1)$.  {The result follows from the fact  that intra- and inter-cluster interference powers  are independent,} followed by  the  expectation over  $R$ given $\nu_0$ and $t_k$, followed by expectation over $V_0$ and $T_k$. The PDFs of $V_0$ and $T_k$ are given by \eqref{eq: fV case 1} and \eqref{eq: fT case1},  respectively.
\end{proof}
As discussed in Remark~\ref{Rem:kTxApprox}, the exact expression for the Laplace transform of the intra-cluster interference distribution given by \eqref{eq: Laplace intra case 1} in Lemma~\ref{lem: Laplace exact case 1} is quite complicated due to the presence of two summations. To improve tractability, the simpler expression of Lemma~\ref{lem: lap intra typical} can be used. This leads to an approximation since the dependence of the distances from the intra-cluster interfering devices on the selection of the serving device is not captured. The approximate result is given next. The proof follows on the same line as that of Theorem~\ref{Thm: Coverage case1 }.

%
\begin{cor} Using Laplace transform of intra-cluster interference distribution given by Lemma \ref{lem: lap intra typical}, the coverage probability of $k$-Tx case can be approximated as
\begin{multline}\label{eq: coverage case 1}
{{\tt P}_{c_k}^{\rm Tx}} \simeq \int_0^{\infty}\int_0^{\infty}  \int_0^{\infty} \calL_{\inter} (\T r^{\alpha} ) \calL_{\intra} (\T r^\alpha |\nu_0)f_{R}(r|\x,\y)\\
 \times f_{V_0}(\x)f_{T_k}(\y) \nrmd r \nrmd\nu_0 \nrmd \y,
\end{multline}
where $\calL_{\inter}(.)$ is given by \eqref{Eq: Lap_Inter},   and $f_{R}(r|\x,\y)$, $f_{V_0}(\x)$, $f_{T_k}(\y)$ are given by \eqref{eq: fR case 1}, \eqref{eq: fV case 1}, and \eqref{eq: fT case1} respectively.
\label{corr: approx Coverage case1 }
\end{cor}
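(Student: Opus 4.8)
The plan is to mirror the derivation of Theorem~\ref{Thm: Coverage case1 } line for line, deviating only at the single step where the intra-cluster interference is handled, substituting the simpler Laplace transform of Lemma~\ref{lem: lap intra typical} for the exact one of Lemma~\ref{lem: Laplace exact case 1}. First I would start from the definition of coverage probability and condition on the three relevant distances $V_0=\nu_0$, $T_k=t_k$, and $R=r$:
\begin{align*}
{{\tt P}_{c_k}^{\rm Tx}} = \mathbb{E}_{T_k}\mathbb{E}_{V_0}\mathbb{E}_R\left[\P\left\{\htxi>\T r^\alpha(\inter+\intra)\,\big|\,R,V_0,T_k\right\}\right].
\end{align*}
Since $\htxi\sim\exp(1)$, the inner conditional probability equals $\E[\exp(-\T r^\alpha(\inter+\intra))]$, and by the independence of the intra- and inter-cluster interference this factors as $\calL_{\inter}(\T r^\alpha)\,\calL_{\intra}(\T r^\alpha,t_k|\nu_0)$, exactly as in the exact proof.

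The one and only deviation is the next step: I would replace the exact conditional intra-cluster Laplace transform $\calL_{\intra}(\T r^\alpha,t_k|\nu_0)$ of Lemma~\ref{lem: Laplace exact case 1} by the simpler form $\calL_{\intra}(\T r^\alpha|\nu_0)$ of Lemma~\ref{lem: lap intra typical}. This is where the $\simeq$ enters, because Lemma~\ref{lem: lap intra typical} treats the intra-cluster interferer distances as conditionally i.i.d.\ Rician given only $\nu_0$, thereby ignoring the dependence induced by the ordered ($k^{th}$-closest) selection of the serving device, i.e., the partitioning of interferers into the ``inside'' and ``outside'' subsets described in Lemma~\ref{lem: distance from intra-cluster device to typical device in case 1}. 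This is precisely the approximation flagged in Remark~\ref{Rem:kTxApprox}.

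Finally, I would de-condition by integrating against the serving distance PDF $f_{R}(r|\x,\y)$ from \eqref{eq: fR case 1} and the PDFs $f_{V_0}(\x)$ and $f_{T_k}(\y)$ from \eqref{eq: fV case 1} and \eqref{eq: fT case1}. Because the approximate intra-cluster Laplace transform no longer depends on $t_k$, the inner integral over $r$ may be written over $[0,\infty)$, with the geometric support restriction $|\nu_0-t_k|<r<\nu_0+t_k$ already enforced by $f_{R}(r|\x,\y)$; this yields exactly \eqref{eq: coverage case 1}. Everything apart from the substitution is the same routine tower-of-expectations manipulation as in Theorem~\ref{Thm: Coverage case1 }, so the only nontrivial point, and hence the main obstacle, is arguing that the replacement is accurate, which is the content of Remark~\ref{Rem:kTxApprox} and is confirmed numerically in Section~\ref{sec:NumResults}.
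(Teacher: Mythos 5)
Your proposal is correct and matches the paper's own treatment: the paper gives no separate derivation for this corollary, stating only that the proof follows the same lines as Theorem~\ref{Thm: Coverage case1 } with the exact conditional Laplace transform $\calL_{\intra}(\T r^\alpha,t_k|\nu_0)$ of Lemma~\ref{lem: Laplace exact case 1} replaced by the simpler $\calL_{\intra}(\T r^\alpha|\nu_0)$ of Lemma~\ref{lem: lap intra typical}, which is exactly the single substitution you identify as the source of the $\simeq$. Your added observation that the integration over $r$ can run over $[0,\infty)$ because the support $|\nu_0-t_k|<r<\nu_0+t_k$ is already enforced by $f_R(r|\nu_0,t_k)$ is a correct detail the paper leaves implicit.
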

 Although   the above coverage probability expression for $k$-Tx case seems to be  involved,  it can be easily evaluated by Quasi-Monte Carlo numerical integration methods (because the integrations are essentially expectations) \cite{caflisch1998monte}. 
  Using the approximation of the Laplace transform of the intra-cluster interference distribution  given by Corollary \ref{cor: app lap intra}, we  can further simplify  coverage probability expression in the next corollary.
\begin{cor}
By ignoring  intra-cluster distance correlations under Assumption \ref{Ass: Identical intra-cluster distances}, the coverage probability of the $k$-Tx case can be approximated as 
 \begin{equation}
 {{\tt P}_{c_k}^{\rm Tx}} \simeq \int_0^{\infty}\int_0^{\infty} \calL_{\inter} (\T r^{\alpha} ) \tilde{\calL}_{\intra} (\T r^\alpha )f_R(r| \y) f_{T_k}(\y) \nrmd r \nrmd \y
 \end{equation}
 \normalsize
 where $f_R(r| \y)= \mathtt{Ricepdf}(r,\y;\sigma_\nrma^2)$, and
 $f_{T_k}(t_k)$ is given by \eqref{eq: fT case1}.
\label{corr: app coverage case 1}
\end{cor}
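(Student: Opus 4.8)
The plan is to start from the approximate coverage expression of Corollary~\ref{corr: approx Coverage case1 } and show that, once the intra-cluster distance correlations are dropped per Assumption~\ref{Ass: Identical intra-cluster distances}, the innermost integral over $\nu_0$ collapses into a single Rician serving-distance density, removing one of the three integrations. Recall that Corollary~\ref{corr: approx Coverage case1 } writes ${{\tt P}_{c_k}^{\rm Tx}}$ as a triple integral over $r$, $\nu_0$, and $t_k$ of the product $\calL_{\inter}(\T r^\alpha)\,\calL_{\intra}(\T r^\alpha|\nu_0)\,f_R(r|\nu_0,t_k)\,f_{V_0}(\nu_0)\,f_{T_k}(t_k)$, in which two factors, namely the intra-cluster Laplace transform of Lemma~\ref{lem: lap intra typical} and the conditional serving-distance density~\eqref{eq: fR case 1}, depend on the common distance $\nu_0=\|x_0\|$ whose law $f_{V_0}$ is the Rayleigh pdf~\eqref{eq: fV case 1}.

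Under Assumption~\ref{Ass: Identical intra-cluster distances} the intra-cluster distances are modeled as i.i.d.\ Rayleigh variables that do not reference $\nu_0$; by Corollary~\ref{cor: app lap intra} the intra-cluster Laplace transform is then $\tilde{\calL}_{\intra}(\T r^\alpha)$, which is $\nu_0$-free. Since $\calL_{\inter}(\T r^\alpha)$ is likewise independent of $\nu_0$ (inter-cluster interferers are sampled independently), both transforms and $f_{T_k}(t_k)$ can be pulled outside the $\nu_0$ integral, so that after interchanging the order of integration the only $\nu_0$-dependent object left is the inner marginalization
\[
f_R(r|t_k)=\int_0^\infty f_R(r|\nu_0,t_k)\,f_{V_0}(\nu_0)\,\nrmd \nu_0 .
\]

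The crux is to evaluate this integral and recognize it as a Rician density, for which I would invoke the geometric picture rather than integrate~\eqref{eq: fR case 1} against~\eqref{eq: fV case 1} directly. Placing the D2D-Rx of interest at the origin, the cluster center occupies the random position $x_0$; because the receiver is normally scattered about its cluster center with per-component variance $\sigma_\nrma^2$, the vector $x_0$ is an isotropic zero-mean $2$-D Gaussian with per-component variance $\sigma_\nrma^2$ and $\nu_0=\|x_0\|$ is Rayleigh exactly as in~\eqref{eq: fV case 1}. Conditioned on $t_k=\|s_k\|$ the serving displacement has fixed norm, so $x_0+s_k$ is a $2$-D Gaussian with a mean vector of norm $t_k$ and per-component variance $\sigma_\nrma^2$, whence its magnitude $r=\|x_0+s_k\|$ is Rician with non-centrality $t_k$ and scale $\sigma_\nrma$; that is $f_R(r|t_k)=\mathtt{Ricepdf}(r,t_k;\sigma_\nrma^2)$, the same mechanism that produces~\eqref{Eq: Rice distribution}. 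Substituting this back, and keeping the order-statistics density $f_{T_k}(t_k)$ from~\eqref{eq: fT case1} for the $k^{th}$ closest transmitter, yields the claimed double integral.

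The main obstacle is precisely this marginalization step: showing that the conditional chord density~\eqref{eq: fR case 1} averaged over the Rayleigh $\nu_0$ reduces to a Rician. The Gaussian-offset interpretation sidesteps the otherwise messy direct integration of the arcsine-type chord kernel against the Rayleigh weight, so I would present it that way; the alternative is to verify the identity by explicit computation, which is the same calculation that underlies~\eqref{Eq: Rice distribution} and is routine but tedious.
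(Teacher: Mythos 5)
Your proposal is correct and is essentially the paper's own argument: both proofs use Assumption~\ref{Ass: Identical intra-cluster distances} to replace the intra-cluster Laplace transform with the $\nu_0$-free $\tilde{\calL}_{\intra}$ of Corollary~\ref{cor: app lap intra}, and both rest on the same key fact that, conditioned only on $t_k=\|s_k\|$, the serving distance $r=\|x_0+s_k\|$ is Rician because $x_0$ is an isotropic zero-mean Gaussian in $\R^2$. The only difference is presentational: you obtain $f_R(r|t_k)=\mathtt{Ricepdf}(r,t_k;\sigma_\nrma^2)$ by invoking the Gaussian-offset picture (marginalizing the triple integral of Corollary~\ref{corr: approx Coverage case1 } over $\nu_0$), whereas the paper's appendix carries out exactly the ``routine but tedious'' computation you sidestep --- writing the Gaussian density of $x_0+s_k$ centered at $s_k$, converting to polar coordinates via the Jacobian, and integrating out the angle to produce the Bessel factor $I_0$ --- before concluding, as you do, along the lines of Theorem~\ref{Thm: Coverage case1 }.
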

\begin{proof}
See Appendix \ref{proof:  Corollary app coverage case1}.
\end{proof}
The tightness of the approximation will be validated in the numerical results section (Section~\ref{sec:NumResults}).

\subsubsection{Coverage probability analysis of $\ell$-Rx case}
\label{subsec: Coverage probability analysis of  Case 2 }
We now derive the coverage probability for the $\ell$-Rx case, where the D2D-Rx of interest is the $\ell^{th}$ closest device to the cluster center from the set $\ncalN^{x_0}_{\rm r}$ and its serving device is chosen uniformly at random from the set $\ncalN^{x_0}_{\rm t}$. 
\begin{theorem}[Coverage probability: $\ell$-Rx case] The coverage probability in $\ell$-Rx case is 
\small
\begin{align} \label{eq: coverage case 2}
{{\tt P}_{c_\ell}^{\rm Rx}}=  \int_0^\infty  \int_0^\infty \ncalL_{\inter}(\T r^{\alpha})\ncalL_{\intra}(\T r^{\alpha}|t_\ell)  f_R(r| t_\ell)  f_{T_\ell}(t_\ell) \nrmd r \nrmd t_\ell,
\end{align}
\normalsize with $f_{T_\ell}(t_\ell)=\frac{N_{\rm r}!}{(\ell-1)!(N_{\rm r}-\ell)!}{F(t_\ell)}^{\ell-1}(1-F(t_\ell))^{N_{\rm r}-\ell} f(t_\ell)$, where $f(t_\ell)=\frac{t_\ell}{  \sigma_\nrma ^2}\exp(-\frac{t_\ell^2}{2 \sigma_\nrma^2})$, $F(t_\ell)=1-\exp(-\frac{t_\ell^2}{2 \sigma_\nrma^2})$,  and $f_R(r| t_\ell)= \mathtt{Ricepdf}(r,t_\ell;\sigma_\nrma^2)$.
\label{Thm: Coverage case2 }
\end{theorem}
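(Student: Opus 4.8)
The plan is to follow the same template as the proof of the $k$-Tx coverage result (Theorem~\ref{Thm: Coverage case1 }), adapting it to the reversed roles of the serving and receiving devices. Starting from Definition~\ref{Def coverage}, I would condition on the order-statistic distance $T_\ell$ and the serving distance $R$ and write
\begin{align*}
{{\tt P}_{c_\ell}^{\rm Rx}} = \E_{T_\ell}\E_R\big[\P\{h_{0x_0} > \T r^\alpha(\inter + \intra)\mid R, T_\ell\}\big].
\end{align*}
Because $h_{0x_0}\sim\exp(1)$, the inner probability equals $\E[\exp(-\T r^\alpha(\inter+\intra))\mid R,T_\ell]$, and since the intra- and inter-cluster interference powers are independent this expectation factorizes into $\ncalL_{\inter}(\T r^\alpha)\,\ncalL_{\intra}(\T r^\alpha\mid t_\ell)$, evaluated at $s=\T r^\alpha$.

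The key structural step, which is what distinguishes this case from the $k$-Tx case, is identifying the correct conditioning distance. Here the D2D-Rx of interest is the $\ell^{\rm th}$ closest \emph{receiving} device to the cluster center, so the distance from the D2D-Rx to the cluster center plays the role of $\nu_0$ in the general framework and is precisely $t_\ell$. Since the receiving-device offsets are i.i.d. normal around the center, their distances to the center are i.i.d. Rayleigh, and $t_\ell$ is the $\ell^{\rm th}$ order statistic out of $N_{\rm r}$ such samples; its PDF $f_{T_\ell}$ then follows from the standard order-statistics formula exactly as $f_{T_k}$ was obtained in Lemma~\ref{lem: serving dist case1}. Conditioned on $t_\ell$, the intra-cluster interference Laplace transform is the one given by Lemma~\ref{lem: lap intra typical} with $\nu_0$ replaced by $t_\ell$, and the inter-cluster Laplace transform is the case-independent expression of Lemma~\ref{Lem: Lap_Inter}.

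The remaining ingredient is the serving-distance distribution $f_R(r\mid t_\ell)$. Since the serving device is chosen uniformly at random from $\ncalN^{x_0}_{\rm t}$ and is normally scattered around the cluster center, the distance $r$ from it to the D2D-Rx of interest (which lies at distance $t_\ell$ from the center) is Rician, i.e. $f_R(r\mid t_\ell)=\mathtt{Ricepdf}(r,t_\ell;\sigma_\nrma^2)$, identical in form to~\eqref{Eq: Rice distribution}. Integrating first over $r$ against $f_R(r\mid t_\ell)$ and then over $t_\ell$ against $f_{T_\ell}$ assembles the claimed double integral~\eqref{eq: coverage case 2}.

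I expect the only genuinely subtle point to be justifying that no serving--interferer distance dependence arises here. Unlike the $k$-Tx case of Lemma~\ref{lem: distance from intra-cluster device to typical device in case 1}, where the \emph{ordered} selection of the serving device forced the partitioning of the intra-cluster interferers into the $\Axx_{\rm in}$ and $\Axx_{\rm out}$ subsets, in the $\ell$-Rx case the serving device is drawn uniformly at random, so conditioning on $t_\ell$ alone renders the remaining intra-cluster distances conditionally i.i.d. This is exactly why the simpler exact Lemma~\ref{lem: lap intra typical} applies directly and the two-summation machinery of Lemma~\ref{lem: Laplace exact case 1} is not needed; everything else is routine once the conditioning distance is correctly identified as $t_\ell$.
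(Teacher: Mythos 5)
Your proposal is correct and follows essentially the same route as the paper's own proof: conditioning on $T_\ell$ and $R$, using the exponential fading to factorize the conditional success probability into $\ncalL_{\inter}(\T r^\alpha)\,\ncalL_{\intra}(\T r^\alpha\mid t_\ell)$, taking $f_R(r\mid t_\ell)$ as the Rician distribution (derived as in Corollary~\ref{corr: app coverage case 1}), and obtaining $f_{T_\ell}$ from order statistics of the i.i.d. Rayleigh distances. Your added remark on why the exact Lemma~\ref{lem: lap intra typical} applies (uniform selection of the serving device induces no serving--interferer dependence) is a point the paper makes in the discussion preceding that lemma rather than inside the proof, so it is a welcome clarification but not a different argument.
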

\begin{proof}
The proof follows on the same lines as the proof of Theorem \ref{Thm: Coverage case1 }. By definition, the coverage probability is
\begin{equation}
{{\tt P}_{c_\ell}^{\rm Rx}}=\nbbE_{T_\ell} \nbbE_R \left[\nbbE\left[\exp\left(- \T r^\alpha (\inter+\intra) \right) \Big| R, T_\ell \right]\right],
\end{equation}
where the PDF of serving distance $r=\|s_\ell+\aa\|$ conditioned on $t_\ell=\|s_\ell\|$ can be derived on the same lines as the PDF of the serving distance in Corollary \ref{corr: app coverage case 1}. This is because in this case, file of interest is available inside the cluster uniformly at random and D2D-Rx of interest is $\ell^{th}$ closest receiving device to the cluster center.  Thus, the D2D-Tx of interest is located at  $x_0+\aa$ where $x_0=s_\ell$, and $\aa$ is sampled from zero-mean complex Gaussian random variable.  
\end{proof}
\subsubsection{Coverage probability analysis of the baseline case}
\label{subsec: Coverage probability analysis of  Case 3 }
In the baseline case, we assume that both the D2D-Rx of interest and D2D-Tx of interest are sampled uniformly at random. The coverage probability for this case is given next. \chb{For the complete proof, please refer to \cite{MehrnazD2D1}, where the same case was used as the baseline case for device-centric content placement strategies}. Here we just provide a proof sketch.
\begin{theorem} [Coverage probability: baseline case]
\label{thm: coverage case 3}
The coverage probability in the baseline case is  

\small
\begin{align}\label{eq: coverage case 3}
 {{\tt P}_{c}^{\rm B}}=  \int_0^\infty  \int_0^\infty &\ncalL_{\inter}(\T r^{\alpha})\ncalL_{\intra}(\T r^{\alpha}|\nu_0) 
  f_R(r| \nu_0) f_{V_0}(\nu_0) \nrmd r \nrmd \nu_0,
\end{align}
\normalsize
where
 $ f_R(r| \nu_0)=\mathtt{Ricepdf}(r,\nu_0;\sigma_\nrma^2),$
and $f_{V_0}(\x)= \frac{\x}{  \sigma_\nrma ^2}\exp\left(-\frac{\x^2}{2 \sigma_\nrma^2}\right).$
\label{Thm: Coverage case3 }
\end{theorem}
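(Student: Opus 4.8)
The plan is to mirror the proof of Theorem~\ref{Thm: Coverage case1 }, specialized to the situation where \emph{both} endpoints of the link of interest are drawn uniformly at random. Starting from Definition~\ref{Def coverage}, I would write ${\tt P}_{c}^{\rm B}=\E[\nb1\{\sir(r)>\T\}]$ and use $h_{0x_0}\sim\exp(1)$ to rewrite the coverage event $\sir(r)>\T$ as $h_{0x_0}>\T r^\alpha(\inter+\intra)$. Conditioned on the serving distance $r$, on $\nu_0=\|x_0\|$, and on the interferer configuration, the exponential tail yields the conditional probability $\exp(-\T r^\alpha(\inter+\intra))$, so that
\begin{equation*}
{\tt P}_{c}^{\rm B}=\E\big[\exp(-\T r^\alpha \inter)\exp(-\T r^\alpha \intra)\big].
\end{equation*}

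Next, since $\intra$ depends only on the devices of the representative cluster while $\inter$ depends only on the remaining clusters of $\Phi_\nrmc$, the two terms are independent, and the conditional expectation factors into the product of their Laplace transforms evaluated at $s=\T r^\alpha$. These are precisely $\ncalL_{\intra}(\T r^\alpha|\nu_0)$ from Lemma~\ref{lem: lap intra typical} and $\ncalL_{\inter}(\T r^\alpha)$ from Lemma~\ref{Lem: Lap_Inter}. Conditioning on $\nu_0$ is retained because, as explained before Lemma~\ref{lem: lap intra typical}, it is what renders the intra-cluster distances $\{w\}$ conditionally i.i.d.\ and thereby validates the product form of $\ncalL_{\intra}$; no ordered conditioning (e.g.\ on $t_k$ as in the $k$-Tx case) is needed here because the serving device is selected uniformly rather than by rank.

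It remains to supply the two distance distributions and integrate. Because the D2D-Rx of interest is a uniformly chosen device whose offset from the cluster center follows the symmetric normal law in $\R^2$, the distance $\nu_0=\|x_0\|$ is Rayleigh with density $f_{V_0}(\nu_0)=\frac{\nu_0}{\sigma_\nrma^2}\exp(-\nu_0^2/(2\sigma_\nrma^2))$. Given $\nu_0$, the serving device---also uniformly chosen---has a Gaussian offset $\txi$ about the cluster center, so the serving distance $r=\|x_0+\txi\|$ is Rician, $f_R(r|\nu_0)=\mathtt{Ricepdf}(r,\nu_0;\sigma_\nrma^2)$; this is read directly off \eqref{Eq: Rice distribution} with $\txi$ playing the role of the generic intra-cluster offset. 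Since, given $\nu_0$, the serving distance is independent of both interference terms, I would take the expectation over $r$ first and then over $\nu_0$, assembling the nested integral of \eqref{eq: coverage case 3}. There is essentially no hard analytical step---the result is a direct specialization of Theorem~\ref{Thm: Coverage case1 } using the simpler uniform-selection distance distributions---so the only point requiring care is the conditional-independence bookkeeping: making explicit that conditioning on $\nu_0$ alone decouples the serving distance from both interference contributions and legitimizes factoring the Laplace transforms.
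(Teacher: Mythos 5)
Your proposal is correct and follows essentially the same route as the paper's own proof: both condition on $\nu_0$ and the serving distance, use $h\sim\exp(1)$ to obtain the conditional factor $\exp(-\T r^\alpha(\inter+\intra))$, exploit independence of intra- and inter-cluster interference to factor into the two Laplace transforms, and then decondition using the Rician density of $r$ given $\nu_0$ and the Rayleigh density of $\nu_0$. The paper merely compresses these steps by pointing to the $k$-Tx proof, whereas you spell them out explicitly, including the bookkeeping that uniform selection of the serving device removes the need for any ordered conditioning.
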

\begin{proof}
The proof follows on the same lines as Theorem \ref{Thm: Coverage case1 }. By definition, the coverage probability is
\begin{equation}
 {{\tt P}_{c}^{\rm B}}=\nbbE_{V_0} \nbbE_R\left[ \nbbE\left[\exp\left(- \T r^\alpha (\inter+\intra) \right) \Big| R, V_0 \right]\right],
\end{equation}
where the PDF of  serving distance $r=\|x_0+a_0\|$  conditioned on $\nu_0=\|x_0\|$  is Rician distributed \cite{MehrnazD2D1}. Further, the D2D-Rx of interest is chosen uniformly at random,  which is sampled from a Gaussian distribution in $\nbbR^2$, and hence $\nu_0=\|x_0\|$ simply follows  Rayleigh distribution. 
\end{proof}

\subsection{Area  Spectral Efficiency Analysis}
After studying the coverage probability for all three policies in the previous subsection, we now focus on the area spectral efficiency ($\ase$), which is  defined in Definition \ref{Def: ASE}.
%
%
This definition is specialized to our  setup in the following proposition.

\begin{prop}
The $\ase$ for the three cases is given by:
\begin{equation}\label{eq: ASE}
    \mathtt{ASE}=\bar{m}_\nrma\lambda_\nrmc \log_2(1+\T) \pc,
\end{equation}
where $\pc$ is given by~\eqref{Eq: Pc case 1}, \eqref{eq: coverage case 2}, and~\eqref{eq: coverage case 3} for $k$-Tx, $\ell$-Rx, and baseline cases respectively. Here, $\bar{m}_\nrma \lambda_\nrmc$ denotes the average number of simultaneously active transmitting devices.
\end{prop}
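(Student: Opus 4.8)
The plan is to specialize the general definition of $\ase$ in Definition~\ref{Def: ASE} to the Poisson cluster process at hand. Recall that Definition~\ref{Def: ASE} reads $\ase = \lambda \log_2(1+\T)\,\E[\nb1\{\sir(r)>\T\}]$, where $\lambda$ is the spatial intensity of simultaneously active transmitters. Two of the three factors require no work: $\log_2(1+\T)$ is a constant, and the indicator expectation $\E[\nb1\{\sir(r)>\T\}]$ is, by Definition~\ref{Def coverage}, \emph{exactly} the coverage probability $\pc$ of the representative link. Since the representative link is the $k$-Tx, $\ell$-Rx, or baseline link depending on the scenario, this factor is precisely the case-specific coverage probability derived in Theorems~\ref{Thm: Coverage case1 }, \ref{Thm: Coverage case2 }, and \ref{Thm: Coverage case3 }, i.e.\ \eqref{Eq: Pc case 1}, \eqref{eq: coverage case 2}, and \eqref{eq: coverage case 3}. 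Thus the entire content of the proposition reduces to identifying $\lambda$.

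The one substantive step is therefore to compute the intensity of the active-transmitter process $\Psi=\cup_{x\in\Phi_\nrmc}\Ax$. I would argue as follows. The cluster centers $\Phi_\nrmc$ form a PPP of density $\lambda_\nrmc$, and, independently of the parent locations, each cluster centered at $x$ contributes a point set $\Ax$ whose cardinality $|\Ax|$ is Poisson with mean $\bar{m}_\nrma$. By Campbell's theorem applied to $\Phi_\nrmc$ (equivalently, the standard expression for the first-order intensity of a Poisson cluster process as the product of the parent intensity and the mean number of offspring per parent), the mean number of active transmitters per unit area is $\lambda = \bar{m}_\nrma\lambda_\nrmc$. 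Substituting $\lambda=\bar{m}_\nrma\lambda_\nrmc$ together with $\E[\nb1\{\sir(r)>\T\}]=\pc$ into Definition~\ref{Def: ASE} immediately yields \eqref{eq: ASE}.

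The calculation is short, so there is no real ``hard part''; the only point I would be careful to spell out is the consistency of the active-transmitter count with the conditioning used throughout the coverage analysis. In the system model, once the link of interest is fixed as active, the remaining interferers are sampled so that the \emph{average} number of simultaneously active transmitters per cluster remains $\bar{m}_\nrma$ (the intra-cluster interferer count is Poisson with mean $\bar{m}_\nrma-1$, plus the one link of interest). Hence the per-cluster mean activity used to obtain $\lambda=\bar{m}_\nrma\lambda_\nrmc$ matches the model assumption, and no correction factor is needed. Because Definition~\ref{Def: ASE} defines the $\ase$ directly in terms of the representative link's coverage, the substitution is exact for each of the three cases once the appropriate $\pc$ is inserted.
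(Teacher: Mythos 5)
Your proposal is correct and follows essentially the same route as the paper, which states the proposition without a separate proof precisely because it is an immediate specialization of Definition~\ref{Def: ASE}: the density of simultaneously active transmitters in the Poisson cluster process is $\lambda=\bar{m}_\nrma\lambda_\nrmc$ (parent intensity times mean offspring activity), and the indicator expectation is the case-specific coverage probability. Your additional remark on the consistency of the per-cluster activity count (one active serving link plus a Poisson number of interferers with mean $\bar{m}_\nrma-1$) matches the paper's own modeling assumption and is a worthwhile clarification, not a deviation.
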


\begin{remark} [Trade-off between channel orthogonalization and more aggressive spectrum reuse]
Intra-cell channel orthogonalization, i.e., a small number of simultaneously active devices per cluster, reduces intra-cluster interference at the expense of less spectrum reuse. We cast this classical tradeoff between higher number of simultaneously  active links (i.e., more spectrum reuse) and higher interference as the problem of finding the optimum $\bar{m}_\nrma$ that maximizes $\ase$:
\begin{equation}\label{Eq: ASE optimization}
  \mathtt{ASE}^*= \max _{\bar{m}_\nrma \in {1,...,N_{\tt t}}} \bar{m}_\nrma \lambda_\nrmc \log_2(1+\T) \pc.
\end{equation}
We will revisit this trade-off over the number of simultaneously active links in the numerical results section.
\chb{ By  solving this $\ase$ optimization problem numerically, we will demonstrate that optimum number of simultaneously active links is significantly different for the three cases, which means it is highly dependent on the choice of content placement policy.}
  \end{remark}

\section {New generative model for the cluster-centric D2D networks}
\label{Sec:Future of D2D network}
A key takeaway from the analyses of $k$-Tx and $\ell$-Rx cases, which will become more apparent in the numerical results discussed in Section~\ref{sec:NumResults}, is that the network performance improves significantly when the device(s) on which the content is cached or the device(s) requesting content from the cache are biased to lie closer to the cluster center. This means that in addition to the D2D link of interest, the intra-cluster interfering links may be more likely to have a transmitter or receiver closer to the cluster center. Incorporating such a behavior in the original model of Section~\ref{sec:sysmod} will require fixing the indices of the interfering devices in each cluster (relative to the cluster centers), as we did for the serving and receiving devices in the $k$-Tx and $\ell$-Rx cases. While this is certainly doable in principle, the final expressions will be prohibitively complex due to the dependence amongst all the distances involved in the coverage probability evaluation. For instance, the intra-cluster distances will have to be jointly handled through their joint distribution that will be evaluated using order statistics on the same lines as the serving distances in $k$-Tx and $\ell$-Rx  cases. Deconditioning on such joint distributions will result in multi-fold integrals that are not easy to evaluate. 

Therefore, to study the effect of this biasing of potential transmitters and receivers towards the cluster center, we propose a generative model in which the device locations follow a {\em double-variance Thomas cluster process}, where each cluster consists of a denser and a sparser subcluster. As discussed in this section, selecting the locations of the transmitters or receivers uniformly at random from the denser subcluster allows us to model the above described biasing while maintaining tractability. The generative model is illustrated in \figref{Fig: NetTopowith2sigma}. More formal details about the proposed model are presented next.%

   \begin{figure}[t!]
\centering{
        \includegraphics[width=.8\linewidth]{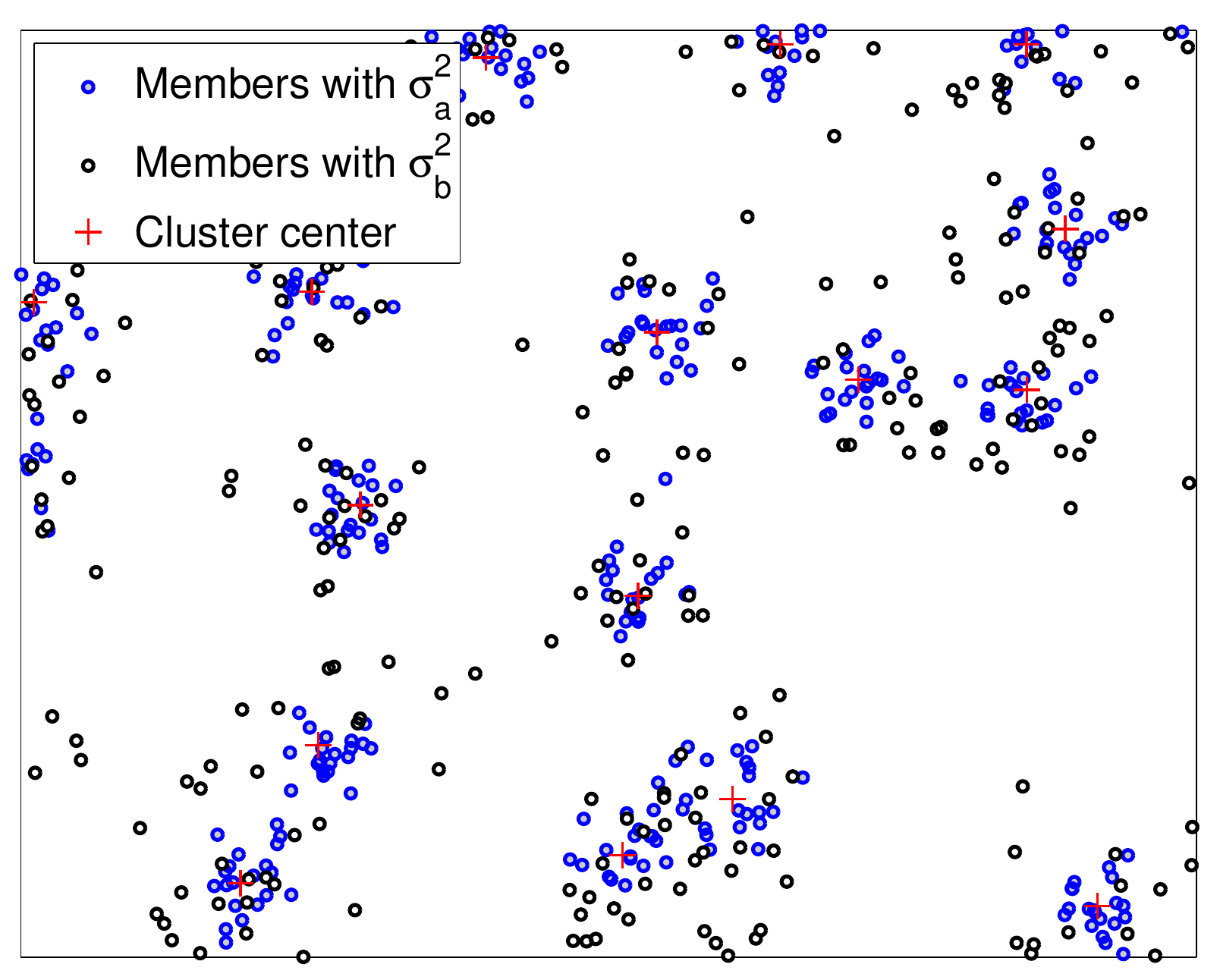}
              \caption{Illustration of the double-variance Thomas cluster model when cluster members (devices) are normally distributed around the cluster center with $\sigma_\nrma=10$ and $\sigma_\nrmb=30$.}
                \label{Fig: NetTopowith2sigma}
                }
\end{figure}
\subsection{System Setup and Channel Model} 

We model the clustered D2D network using a {\em double-variance Thomas cluster process} whose cluster centers are distributed according to a PPP $\Phi_\nrmc$ of density $\lambda_\nrmc$. A cluster around $x \in \Phi_\nrmc$ is formed of two subclusters, denser and sparser, of normally distributed devices with scattering variances $\sa^2$ and $\sb^2$, respectively. The analysis will be performed at a typical device, i.e., a device chosen uniformly at random from the either subclusters. This means that the D2D-Rx of interest will be normally distributed around the cluster center with variance $\sa^2$ or $\sb^2$. The simultaneously active transmitters of the two subclusters are denoted by $\Ax$ (denser) and $\Bx$ (sparser), where $|\Ax|$ and $|\Bx|$ are Poisson distributed with mean $\bar{m}_\nrma$ and $\bar{m}_\nrmb$, respectively. Since we want to bias the location of the D2D-Tx of interest towards the cluster center, we sample it uniformly at random from the denser subcluster $\Axx$. While the other case in which it is sampled from $\Bxx$ is not important for the current discussion, it can be handled exactly on the same lines. As was the case in the original model, the number of intra-cluster interfering devices in $\Axx$ is modeled as a PPP with mean $\bar{m}_\nrma-1$ to ensure that the average number of simultaneously active devices in this subcluster is $\bar{m}_\nrma$ (for consistency). Now assuming the serving device to be at $x_0+a_0 \in \Axx$, the intra-cluster interference at D2D-Rx of interest at the origin can be expressed as:

\small
\begin{align}\label{Eq: Intra_cluster_Int_typical_future}
      \intra=\  \sum_{\aa\in \Axx \setminus \txi} P_\nrmd \haa \|x_0+\aa\|^{-\alpha} 
              +\sum_{\bb\in \Bxx } P_\nrmd \hbb\|x_0+\bb\|^{-\alpha}.
\end{align}
\normalsize
Similarly, the inter-cluster interference can be expressed as: 
 \small
\begin{align}\label{Eq: Intera_cluster_Int_typical_future} 
\inter=\mathlarger{\sum_{x\in \phi_{\nrmc}\setminus x_0}}\Big[\sum_{\a \in \Ax} \ha \|x+\a\|^{-\alpha}+\sum_{\b \in \Bx} \hb \|x+\b\|^{-\alpha}\Big].
\end{align}
\normalsize
\subsection{Coverage Probability}
We first characterize the Laplace transform of inter-cluster and intra-cluster interference distributions in the following Lemmas.
\begin{lemma} 
\label{lem: Future Laplace intra} Assuming the serving device to be chosen uniformly at random from $\Axx$,
 the Laplace transform of the distribution of intra-cluster interference in~\eqref{Eq: Intra_cluster_Int_typical_future}, conditioned on $\nu_0=\|x_0\|$,  is given by $\calL_{\intra} (s |\nu_0)=$
\begin{multline}\label{Eq: laplace intra typical_futre}
 \exp\Big(-(\bar{m}_{\nrma}-1)\int_0^\infty \frac{s w_\nrma^{-\alpha}}{1+s w_\nrma^{-\alpha}}f_{W_{\nrma}}(w_\nrma|\nu_0)\nrmd w_\nrma\\
 -(\bar{m}_{\nrmb})\int_0^\infty \frac{s w_\nrmb^{-\alpha}}{1+s w_\nrmb^{-\alpha}}f_{W_{\nrmb}}(w_\nrmb|\nu_0)\nrmd w_\nrmb\Big), \quad \quad
\end{multline}
where $f_{W_k}(w_k| \nu_0)=\mathtt{Ricepdf}(w_k,\nu_0;\sigma_k^2)
,\:  k\in \{\nrma, \nrmb\}$.
\label{lem: lap intra typical_future}
\end{lemma}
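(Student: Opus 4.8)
The plan is to exploit two structural facts: the independence of the denser and sparser subclusters, and the conditional i.i.d.\ structure of the intra-cluster distances once we condition on $\nu_0=\|x_0\|$. Writing \eqref{Eq: Intra_cluster_Int_typical_future} as $\intra=\intra^{\nrma}+\intra^{\nrmb}$, the sum of the contributions from the disjoint point sets $\Axx\setminus\txi$ and $\Bxx$, the two contributions are independent by construction of the double-variance Thomas process. Moreover, conditioned on $x_0$ each contribution depends on $x_0$ only through $\nu_0$ (by rotational symmetry of the Gaussian offsets), so the conditional Laplace transform factors cleanly as $\calL_{\intra}(s|\nu_0)=\E[e^{-s\intra^{\nrma}}|\nu_0]\,\E[e^{-s\intra^{\nrmb}}|\nu_0]$, and I can treat each factor separately.

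For the denser factor I would first condition on the number of interfering devices $n$, which is Poisson with mean $\bar{m}_\nrma-1$ (the serving device at $\txi$ having been removed from $\Axx$). Given $\nu_0$ and $n$, the distances $w_\nrma=\|x_0+\aa\|$ are conditionally i.i.d.\ with density $f_{W_\nrma}(w_\nrma|\nu_0)=\mathtt{Ricepdf}(w_\nrma,\nu_0;\sigma_\nrma^2)$, exactly the Rician form \eqref{Eq: Rice distribution} with scale $\sigma_\nrma$, since $x_0+\aa$ is a planar Gaussian of variance $\sigma_\nrma^2$ centered at a point of norm $\nu_0$. The product over the $n$ devices then reduces to the $n$-th power of the single-device transform $M_\nrma:=\E[(1+sw_\nrma^{-\alpha})^{-1}|\nu_0]$, obtained by first averaging out the unit-mean exponential fading via $\E[e^{-s\haa w^{-\alpha}}]=(1+sw^{-\alpha})^{-1}$ and then integrating against the Rician density.

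The final step is to decondition on $n$ through the Poisson probability generating function, $\E[M_\nrma^{n}]=\exp(-(\bar{m}_\nrma-1)(1-M_\nrma))$, and to rewrite $1-M_\nrma=\int_0^\infty \frac{s w_\nrma^{-\alpha}}{1+s w_\nrma^{-\alpha}}f_{W_\nrma}(w_\nrma|\nu_0)\,\nrmd w_\nrma$, which produces the first exponential in \eqref{Eq: laplace intra typical_futre}. The sparser factor is handled verbatim, the only changes being the scale $\sigma_\nrmb$ in the Rician density and the Poisson mean $\bar{m}_\nrmb$ rather than $\bar{m}_\nrmb-1$, since the serving device lies in $\Axx$ and therefore nothing is removed from $\Bxx$. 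Multiplying the two exponentials yields the claimed expression.

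I do not expect a genuine obstacle: the argument is a direct two-subcluster extension of the single-variance computation in Lemma~\ref{lem: lap intra typical}. The only points needing care are bookkeeping ones --- justifying the factorization from the independence of the subclusters, and keeping the Poisson means consistent ($\bar{m}_\nrma-1$ versus $\bar{m}_\nrmb$) so that exactly one device is removed and only from the denser subcluster --- together with the observation that conditioning on $\nu_0$ rather than on the full vector $x_0$ suffices, which follows from rotational symmetry.
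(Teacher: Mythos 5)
Your proposal is correct and follows essentially the same route as the paper's proof: split the interference into the independent $\Axx\setminus\txi$ and $\Bxx$ contributions, average out the unit-mean exponential fading to get the $(1+sw^{-\alpha})^{-1}$ factors, apply the Poisson PGFL (equivalently, your conditioning on $n$ followed by the Poisson probability generating function) with means $\bar{m}_\nrma-1$ and $\bar{m}_\nrmb$, and integrate against the conditionally i.i.d.\ Rician distance densities with scales $\sigma_\nrma$ and $\sigma_\nrmb$. The only cosmetic difference is that the paper performs the PGFL step in Cartesian coordinates over $\R^2$ and converts to polar (Rician) form afterward, whereas you work with the distance densities from the outset; the bookkeeping points you flag (exactly one device removed, and only from $\Axx$) are exactly what the paper's derivation encodes.
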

\begin{proof}
See Appendix \ref{proof: lemma Laplace intra future}.
\end{proof}    
\begin{lemma}
\label{lem: Future Laplace inter}
Laplace transform of the distribution of  inter-cluster interference at D2D-Rx of interest in~\eqref{Eq: Intera_cluster_Int_typical_future} is given by $\calL_{\inter} (s )=$
\small
\begin{multline}\label{Eq: Lap_Inter_future}
\exp\Big(-2 \pi \lambda_\nrmc\int_0^\infty \Big(1-\exp\Big(-\ma \int_0^\infty \frac{s u_\nrma^{-\alpha}}{1+s u_\nrma^{-\alpha}} f_{U_{\nrma}}(u_\nrma|\nu)
\nrmd u_\nrma \\-\mb \int_0^{\infty}\frac{s u_\nrmb^{-\alpha}}{1+s u_\nrmb^{-\alpha}} f_{U_\nrmb}(u_\nrmb|\nu)\nrmd u_\nrmb\Big)\nu \nrmd \nu\Big)\Big),
\end{multline}
\normalsize
where $f_{U_k}(u_k| \nu)=\mathtt{Ricepdf}(u_k,\nu_0;\sigma_k^2), \:\: k\in\{\nrma, \nrmb\}$.
\end{lemma}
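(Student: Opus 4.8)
The plan is to compute $\mathcal{L}_{\inter}(s)=\mathbb{E}[\exp(-s\inter)]$ directly from its definition, exploiting the hierarchical independence of the double-variance Thomas cluster process. First I would note that $x_0$ is itself a point of $\Phi_\nrmc$, so by Slivnyak's theorem the reduced process $\Phi_\nrmc\setminus x_0$ may be treated as the original PPP of density $\lambda_\nrmc$; this is what makes the parent-level probability generating functional (PGFL) applicable. Conditioned on $\Phi_\nrmc$, the contributions of distinct clusters are independent, so the Laplace transform factorizes into a product over cluster centers $x\in\Phi_\nrmc\setminus x_0$ of a per-cluster term.

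Next I would evaluate this per-cluster term. Within the cluster at $x$ the two subclusters are independent, so the term splits into a denser factor (over $\mathcal{A}^x$) and a sparser factor (over $\mathcal{B}^x$). For each interferer I would first average over the i.i.d.\ $\exp(1)$ fading, using $\mathbb{E}_h[\exp(-sh\|x+\aa\|^{-\alpha})]=(1+s\|x+\aa\|^{-\alpha})^{-1}$. Since the number of offspring in each subcluster is Poisson (means $\bar{m}_\nrma$ and $\bar{m}_\nrmb$) with i.i.d.\ normal displacements, the offspring-level PGFL gives each subcluster factor as $\exp\big(-\bar{m}_k\int_{\mathbb{R}^2}\frac{s\|x+c\|^{-\alpha}}{1+s\|x+c\|^{-\alpha}}f_k(c)\,dc\big)$ for $k\in\{\nrma,\nrmb\}$.

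The key reduction is then to replace the two-dimensional integral over the displacement $c$ by a one-dimensional integral over the distance $u_k=\|x+c\|$. Conditioned on $\|x\|=\nu$, a normally scattered displacement $c$ with variance $\sigma_k^2$ makes $\|x+c\|$ Rician-distributed with density $f_{U_k}(u_k|\nu)=\mathtt{Ricepdf}(u_k,\nu;\sigma_k^2)$, exactly as in \eqref{Eq: Rice distribution}. This turns each subcluster factor into the claimed inner integral and shows that the per-cluster term depends on $x$ only through $\nu=\|x\|$.

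Finally I would apply the parent-level PGFL of the PPP $\Phi_\nrmc$ to the product of per-cluster terms, obtaining $\exp(-\lambda_\nrmc\int_{\mathbb{R}^2}(1-g(\|x\|))\,dx)$ with $g(\nu)$ the per-cluster term; converting to polar coordinates and integrating out the angle to produce the factor $2\pi$ yields \eqref{Eq: Lap_Inter_future}. The main obstacle is not any single hard estimate but the careful bookkeeping of the nested expectations --- applying the PGFL twice, once at the offspring level inside each cluster and once at the parent level, and justifying the Slivnyak reduction of $\Phi_\nrmc\setminus x_0$ --- together with the distance reduction to the Rician density; once these are set up, the remaining computation is routine.
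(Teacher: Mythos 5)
Your proposal is correct and follows essentially the same route as the paper's proof: factorize over clusters by conditional independence, average the $\exp(1)$ fading to get the $(1+s\,u^{-\alpha})^{-1}$ terms, apply the offspring-level Poisson PGFL with the Rician distance reduction, and finish with the parent-level PGFL of the PPP in polar coordinates. Your explicit appeal to Slivnyak's theorem to justify treating $\Phi_\nrmc\setminus x_0$ as a PPP of the same density is a detail the paper leaves implicit, but it is not a different argument.
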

\begin{proof}
See Appendix \ref{proof : lap inter future}.
\end{proof}
Coverage probability when the content of interest is available at a device chosen uniformly at random from $\Axx$ is given by the following Theorem.
\begin{theorem}
Using the expression for Laplace transform of the intra-cluster interference distribution in \eqref{Eq: laplace intra typical_futre} and the inter-cluster interference in \eqref{Eq: Lap_Inter_future}, the coverage probability at a randomly chosen device from the double-variance model is
\small
\begin{align}\label{eq: coverage future}
 \pc=  \int_0^\infty  \int_0^\infty &\ncalL_{\inter}(\T r^{\alpha})\ncalL_{\intra}(\T r^{\alpha}|\nu_0) 
  f_R(r| \nu_0) f_{V_0}(\nu_0) \nrmd \,r \nrmd \nu_0,
\end{align}
\normalsize
 \begin{align}\label{Eq: typical serving distance future}
&\text{with,} \quad f_R(r| \nu_0)=\mathtt{Ricepdf}(r,\nu_0;\sigma_\nrma^2)
%
\end{align}
where $ f_{V_0}(\nu_0)=\frac{\nu_0}{   \sigma_k^2}\exp(-\frac{\nu_0^2}{2 \sigma_k^2})$. Here, $\sigma_k=\sigma_\nrma$ if D2D-Rx of interest  is located at denser cluster and  $\sigma_k=\sigma_\nrmb$ otherwise.
\label{thm: coverage new generative}
\end{theorem}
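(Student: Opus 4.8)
The plan is to follow verbatim the deconditioning template used in the proofs of Theorem~\ref{Thm: Coverage case1 } and Theorem~\ref{thm: coverage case 3}, substituting the double-variance Laplace transforms. First I would start from Definition~\ref{Def coverage}, $\pc=\E[\nb1\{\sir(r)>\T\}]$, and substitute $\sir(r)=h_{a_0}\,r^{-\alpha}/(\inter+\intra)$ (recall $P_\nrmd=1$), so that the coverage event is equivalent to $h_{a_0}>\T r^\alpha(\inter+\intra)$. Since the serving-link fading satisfies $h_{a_0}\sim\exp(1)$, conditioning on the serving distance $R$ and the center distance $V_0=\|x_0\|$ and averaging over the fading converts the coverage probability into $\E[\exp(-\T r^\alpha(\inter+\intra))\mid R,V_0]$, i.e.\ the conditional Laplace transform of the total interference evaluated at $s=\T r^\alpha$.

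Next I would use the independence of the intra- and inter-cluster interference terms in~\eqref{Eq: Intra_cluster_Int_typical_future} and~\eqref{Eq: Intera_cluster_Int_typical_future}, which factors the joint Laplace transform as $\ncalL_{\inter}(\T r^\alpha)\,\ncalL_{\intra}(\T r^\alpha\mid\nu_0)$, the two factors being supplied directly by Lemma~\ref{lem: Future Laplace inter} and Lemma~\ref{lem: Future Laplace intra}. The key point here is that the intra-cluster factor already aggregates both subclusters, the denser $\Axx$ and the sparser $\Bxx$, so no extra bookkeeping of the two interfering populations is needed at this stage. As in the single-variance analysis, conditioning on $\nu_0=\|x_0\|$ is precisely what removes the correlation induced by the common term $x_0$ and makes the intra-cluster distances conditionally i.i.d., so the product form of the Laplace transform is legitimate.

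The remaining step is to decondition over the two distance distributions. Because the serving device sits at $x_0+\txi\in\Axx$ with $\txi$ drawn from a zero-mean planar Gaussian of variance $\sigma_\nrma^2$, the serving distance $r=\|x_0+\txi\|$ conditioned on $\nu_0$ is Rician with scale $\sigma_\nrma^2$, i.e.\ $f_R(r\mid\nu_0)=\mathtt{Ricepdf}(r,\nu_0;\sigma_\nrma^2)$, exactly as for the serving distance in Theorem~\ref{thm: coverage case 3}; meanwhile $V_0=\|x_0\|$ is the norm of a planar zero-mean Gaussian and is therefore Rayleigh, with scale $\sigma_k$, $k\in\{\nrma,\nrmb\}$, determined by the subcluster from which the typical receiver was sampled, giving $f_{V_0}(\nu_0)=\frac{\nu_0}{\sigma_k^2}\exp(-\nu_0^2/(2\sigma_k^2))$. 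Integrating $\ncalL_{\inter}\,\ncalL_{\intra}\,f_R\,f_{V_0}$ first over $r$ (given $\nu_0$) and then over $\nu_0$ produces~\eqref{eq: coverage future}. The only genuinely delicate point, rather than any hard calculation, is keeping the two variances straight: the serving distance must carry $\sigma_\nrma^2$ because the transmitter is deliberately biased into the denser subcluster, whereas the receiver-to-center distance carries $\sigma_k$ according to which subcluster the typical receiver belongs to. Everything else is a faithful transcription of the baseline deconditioning argument with the new Laplace transforms plugged in.
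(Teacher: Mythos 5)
Your proposal is correct and follows essentially the same route as the paper's own proof: the standard deconditioning template (exponential fading $\Rightarrow$ conditional Laplace transform of total interference at $s=\T r^\alpha$, factored by independence of intra- and inter-cluster terms), followed by averaging over the Rician serving distance $f_R(r\mid\nu_0)=\mathtt{Ricepdf}(r,\nu_0;\sigma_\nrma^2)$ and the Rayleigh distance $f_{V_0}(\nu_0)$ with scale $\sigma_k$ set by the receiver's subcluster. Your explicit remark on keeping the two variances straight (serving link tied to $\sigma_\nrma$ since the transmitter is biased into the denser subcluster, receiver-to-center distance tied to $\sigma_k$) is exactly the ``slight difference in the distance distributions'' the paper notes.
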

\begin{proof}
The proof follows on the same lines as Theorem \ref{Thm: Coverage case1 } with slight difference in the distance distributions. By definition, coverage probability is
\begin{equation}
\pc=\nbbE_{V_0} \nbbE_R \left[\nbbE\left[\exp\left(- \T r^\alpha (\inter+\intra) \right) \Big| R, V_0 \right]\right],
\end{equation}
where the PDF of  serving distance $r=\|x_0+a_0\|$,  conditioned on $\nu_0=\|x_0\|$,  follows Rician distribution \cite{MehrnazD2D1}. 
Here, D2D-Rx of interest is a randomly chosen device, which is sampled from a Gaussian distribution in $\nbbR^2$ with scattering variance $\sigma_\nrma$ ($\sigma_\nrmb$) if D2D-Rx of interest belongs to the denser (sparser) cluster. Hence $\nu_0=\|x_0\|$ follows Rayleigh distribution.
\end{proof}
It is worth highlighting that  the overall  performance of the \emph{double-variance} process will depend upon the following features: i)  {\em serving link distance:} it decreases when D2D-Tx of interest or D2D-Rx of  interest are located in the  denser subcluster, ii) {\em inter-cluster  interference:} it  decreases with the increase of the number of simultaneously active D2D-Txs in the denser subcluster compared to the sparser subcluster (keeping the total same), and iii) {\em intra-cluster interference:} it increases with the increase in the number of simultaneously active D2D-Txs in the denser subcluster. Here, the first two features, i.e., decreasing serving link distance and  inter-cluster interference, improves coverage probability while increasing intra-cluster interference degrades the coverage.
\section{Results and Discussion} \label{sec:NumResults}
\subsection{Numerical Results}
\subsubsection{Validation of results}
  \begin{figure}[t!]
\centering{
        \includegraphics[width=.82\linewidth]{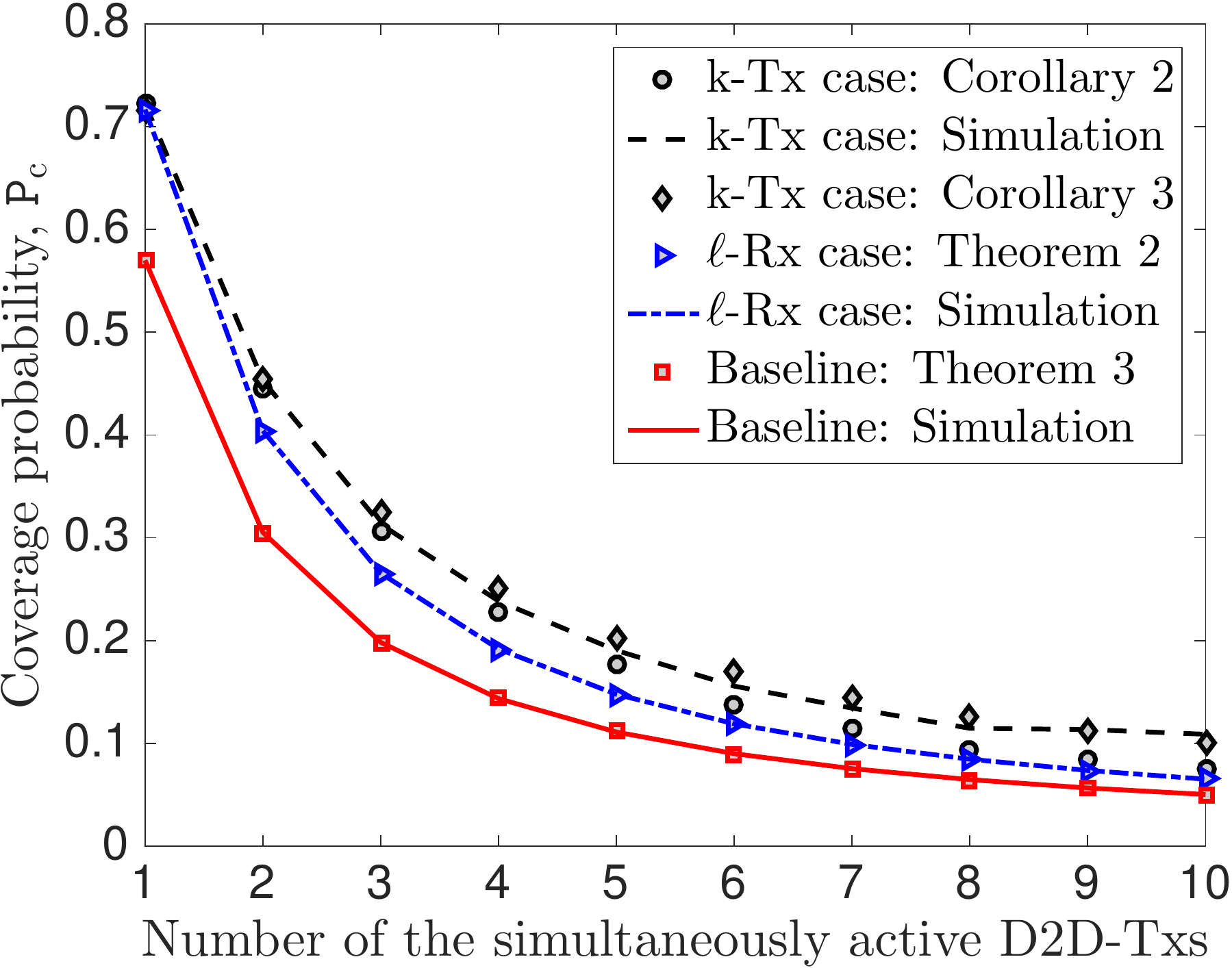}
              \caption{Coverage probability versus number of simultaneously
active D2D-Txs when $\sigma_\nrma = 30$,  $\lambda_\nrmc =50$ clusters $/$ km$^2$, $N_{\tt t}=N_{\tt r}=40$.
}
                \label{Fig: Validation}
                }
\end{figure}

In this section, we validate the accuracy of the analytical results, and tightness of the approximations by means of simulations. 
In all the simulations, the locations of cluster centers are a realization of a PPP and devices are normally  scattered around them.
 For this setup,
we set the $\sir$ threshold, $\beta$, as $0$ dB, path-loss exponent, $\alpha$ as $4$, and study the coverage probability for the three cases. While the easy-to-compute exact results for the $\ell$-Rx and baseline cases, given by Theorems~\ref{Thm: Coverage case2 } and~\ref{Thm: Coverage case3 }, are shown to match the simulations exactly, thus validating the analysis, the approximations for $k$-Tx case given by Corollaries \ref{corr: approx Coverage case1 } and \ref{corr: app coverage case 1} are also shown to be fairly tight. Although the exact expression for $k$-Tx case, given by Theorem~\ref{Thm: Coverage case1 }, is not straightforward to compute numerically, the tightness of approximation given by Corollary \ref{corr: approx Coverage case1 } means that it can be used as the proxy for the exact result. The results also show that the $k$-Tx and $\ell$-Rx cases lead to higher coverage probability than the baseline case. The difference in performance will be even more prominent in the $\ase$ that will be discussed in the next subsection.

   
\subsubsection{Performance comparison across three cases}
 \begin{figure}[t!]
\centering{
        \includegraphics[width=.85\linewidth]{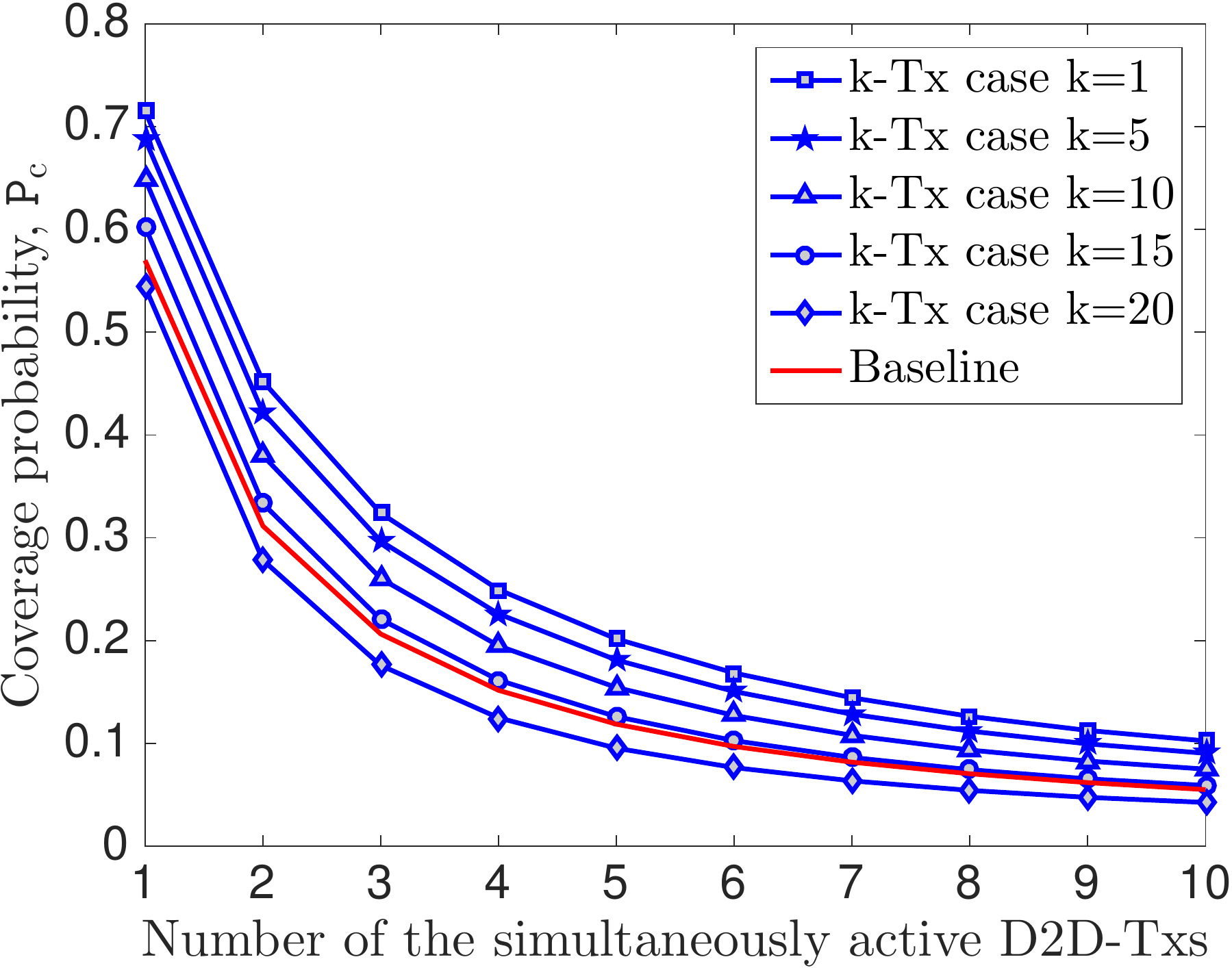}
              \caption{$k$-Tx case: Coverage probability versus number of simultaneously
active D2D-Txs when $\sigma_\nrma = 30$, $\lambda_\nrmc =50$ clusters $/$ km$^2$ and $N_{\tt t}=30$.
}
                \label{Fig: Pc_case1}
                }
\end{figure}
 \begin{figure}[t!]
\centering{
        \includegraphics[width=.85\linewidth]{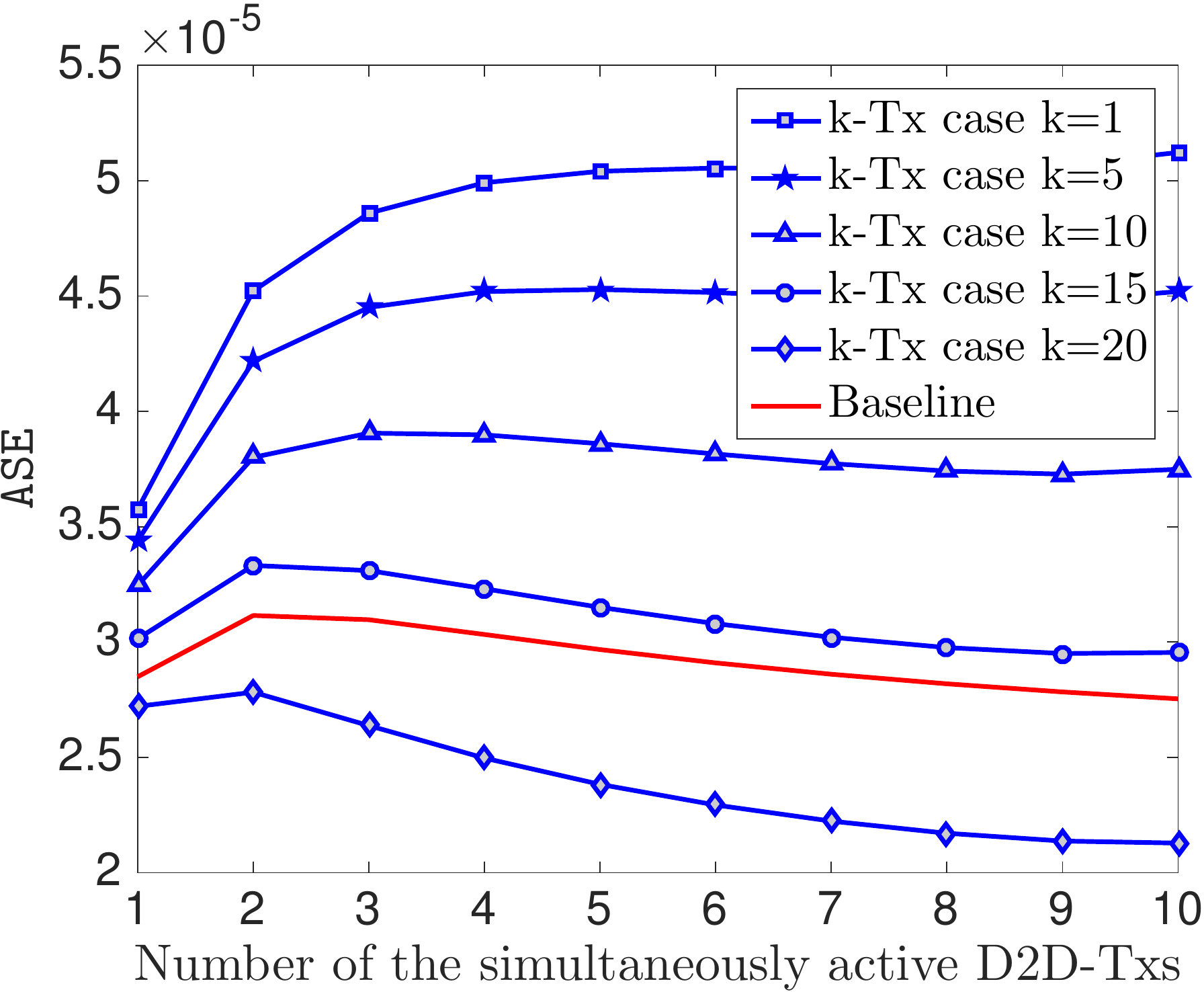}
              \caption{$k$-Tx case: $\ase$ versus number of simultaneously
active D2D-Txs when $\sigma_\nrma = 30$, $\lambda_\nrmc =50$ clusters $/$ km$^2$ and $N_{\tt t}=30$.
}
                \label{Fig: ASE_case1}
                }
\end{figure}
Recall that there is a clear trade-off between the optimal number of simultaneously active D2D-Txs and the resulting interference power. While increasing the  number of simultaneously active transmitters potentially increases $ \ase$, it comes at a price of an increased interference. 
 As shown in \figref{Fig: ASE_case1} for the $k$-Tx case, the optimal number of simultaneously active D2D-Txs increases with the decrease in distance from  the D2D-Tx of interest to the cluster center (i.e., decreasing $k$).   \figref{Fig: Pc_case1} and \figref{Fig: ASE_case1} show that the 
 coverage probability and $\ase$ are optimum when the content of interest for the D2D-Rx of interest is available at the closest device to the cluster center. The results also show that biasing the content of interest for the D2D-Rx towards the cluster center leads to a significant performance improvement in both coverage probability and $\ase$ compared to the baseline case. On the contrary, it can be seen that both coverage and $\ase$ in $k$-Tx case may become worse than the baseline case when the content is cached far from the cluster center (e.g., $k=20$ case in \figref{Fig: Pc_case1} and \figref{Fig: ASE_case1}). 

Similar trends are observed in \figref{Fig: Pc_case2} and \figref{Fig: ASE_case2} for the $\ell$-Rx case. In particular, the results show that the coverage and $\ase$ increase as the distance from D2D-Rx of interest to the cluster center is reduced. The impact of $\ell$ on the results, especially on the coverage probability, is however not as prominent as it was in the $k$-Tx case. This is due to the fact that while the serving distance reduces with decreasing $k$, the distances to the intra-cluster interfering devices also decrease in general, thus leading to a higher intra-cluster interference.


 \begin{figure}[t!]
\centering{
        \includegraphics[width=.85\linewidth]{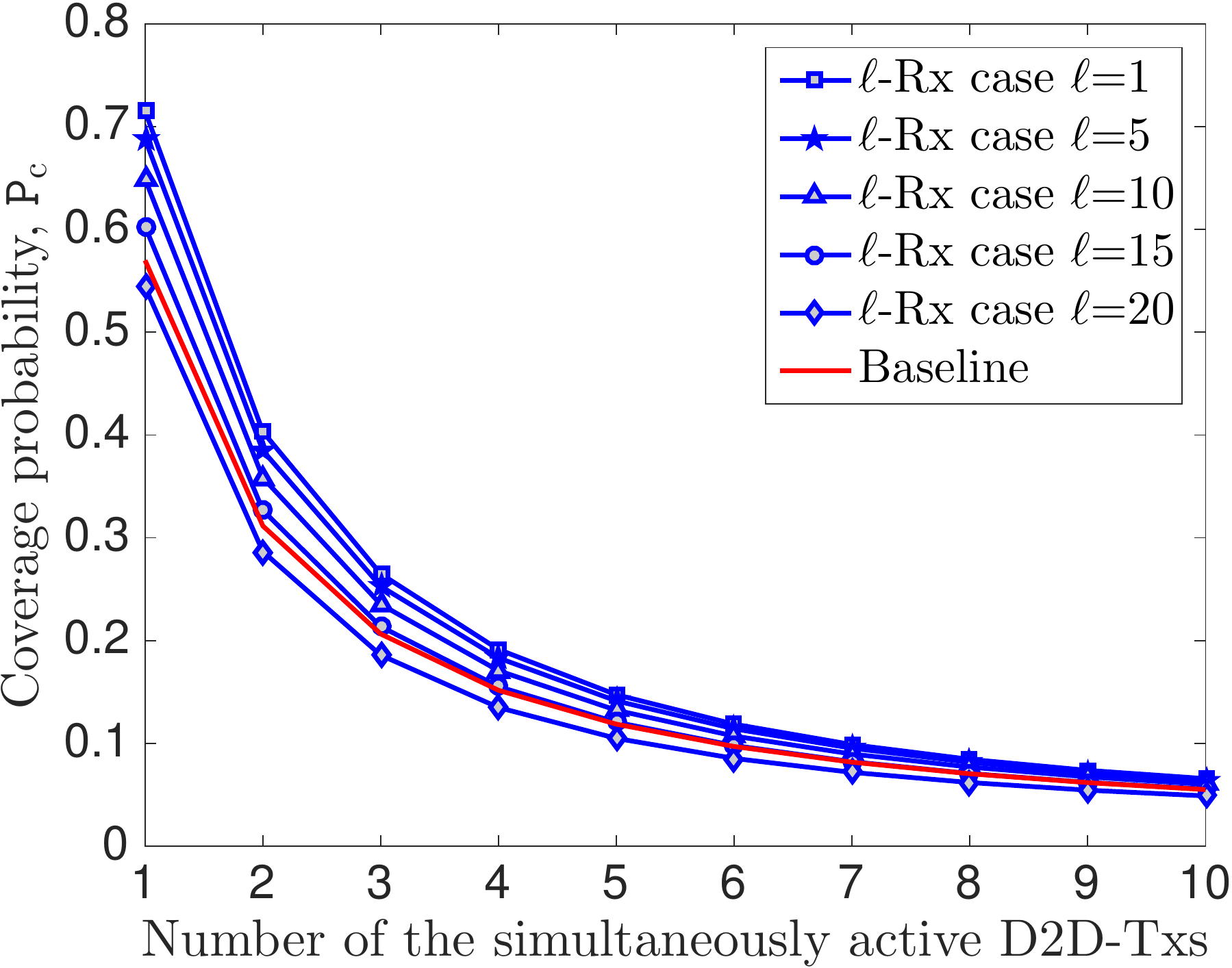}
              \caption{$\ell$-Rx case: Coverage probability versus number of simultaneously
active D2D-Txs when $\sigma_\nrma = 30$, $\lambda_\nrmc =50$ clusters $/$ km$^2$ and $N_{\tt r}=30$.
}
                \label{Fig: Pc_case2}
                }
\end{figure}

 \begin{figure}[t!]
\centering{
        \includegraphics[width=.85\linewidth]{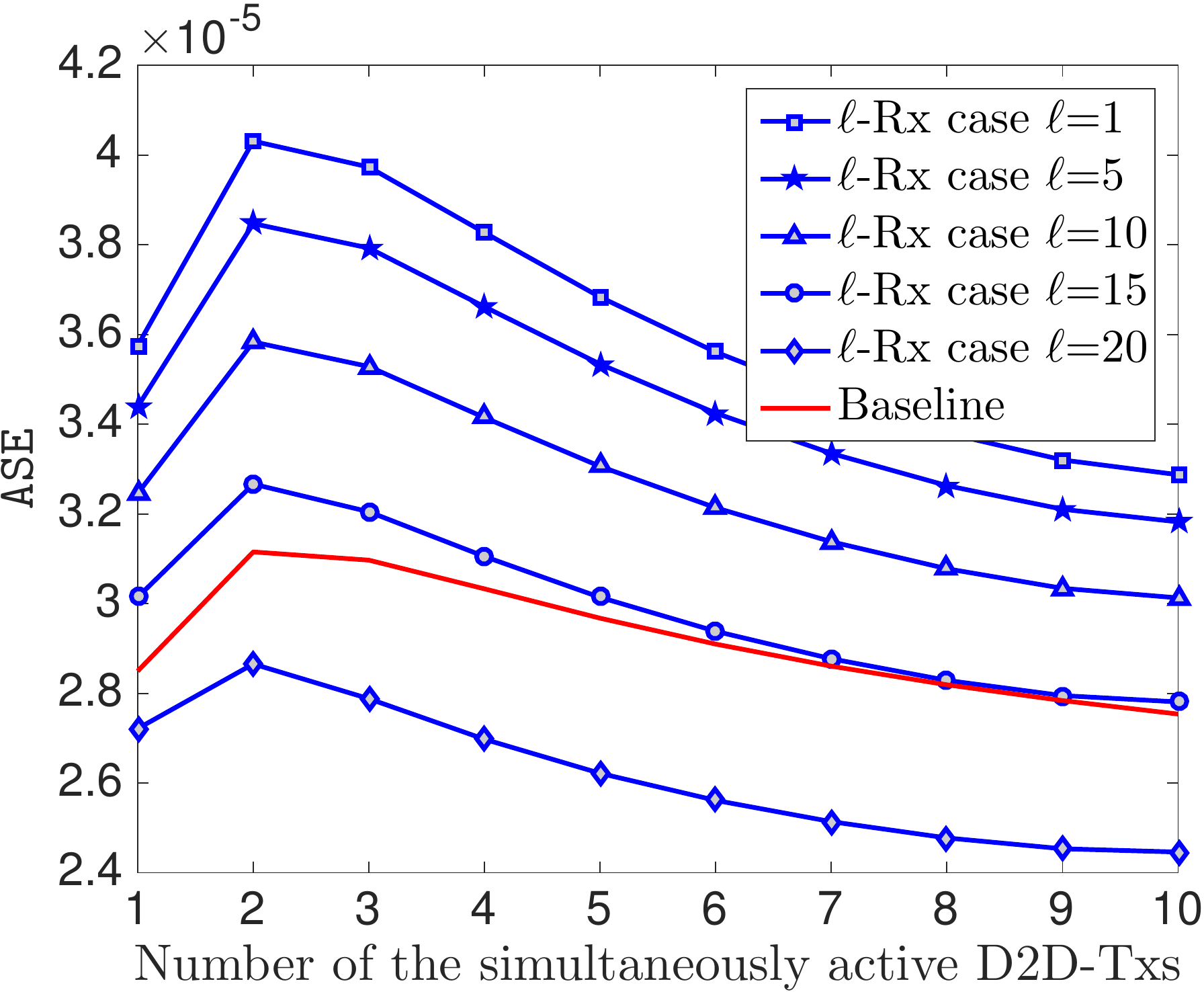}
              \caption{$\ell$-Rx case: $\ase$ versus number of simultaneously
active D2D-Txs when $\sigma_\nrma = 30$, $\lambda_\nrmc =50$ clusters $/$ km$^2$ and $N_{\tt r}=30$.
}
                \label{Fig: ASE_case2}
                }
\end{figure}
\chb{
\subsection{Applications of the Results to Total Hit Probability}
\label{subsec: Total Hit Probability}
In this section, we use the coverage probability results derived in this paper to study the D2D network performance in terms of the the {\em total hit probability}. We assume that the library of popular content for the representative cluster is known {\em a priori}. It is denoted by the set $\{c_1, c_2, ..., c_{\cal J }\}$, where the content is ordered in terms of decreasing popularity, which means $c_1$ denotes the most popular content. As is usually the case, we assume that the content popularity follows Zipf's distribution, i.e., the probability that the content $c_{j}$ is requested by the 
D2D-Rx of interest is ${\tt P}_{R_j}= \frac{j^{-\gamma}}{\sum_{j=1}^{\cal J}j^{-\gamma}}$,
where $\gamma$ is Zipf's parameter and ${\cal J}$ is total number of files~\cite{cha2007tube}. Note that the arguments presented in this section are not specific to Zipf distribution and can be easily extended to any given popularity distribution. The total hit probability can now be defined as the probability that the D2D-Rx of interest is able to successfully download its content of interest, which in turn depends upon two events: (i) this content of interest is available within the cluster, and (ii) the D2D-Rx of interest is in the coverage of the device that has this content (i.e., $\sir \ge \T$). Mathematical definition of total hit probability will be provided shortly.


Due to the limited storage capacities, each device in general cannot cache the whole library. For notational simplicity, we assume that each device caches exactly one content and the number of popular contents is greater than the total number of devices, i.e., ${\cal J}\ge N_{\tt t}$. Given that there are $ {N_{\tt t}}$ transmitting devices in each cluster, each device caches one of the ${N_{\tt t}}$ most popular contents \cite{6787081}. To evaluate the total hit probability, it is possible to have various cache {\em placement} and cache {\em gathering} policies. Due to space limitation, we confine our analysis to the following two strategies that directly build on the coverage probability results derived in this paper. For both these strategies, the D2D-Rx of interest is assumed to be chosen uniformly at random from the representative cluster, i.e., we confine to the $k$-Tx and baseline cases.

 \subsubsection{Uniform content placement}   
 \label{subsubsec: Uniform content placement}
In this setup, we assume that the popular contents are uniformly distributed inside the cluster. Recall that the coverage probability when the file of interest is available inside the cluster  uniformly at random is denoted  by ${\tt P}_{\rm c}^{\tt B}$ (baseline case). The total hit probability is
\begin{align}
{\tt P}_{\rm hit}&=\sum_{j=1}^{ N_{\rm t}}  {\tt P}_{R_j} {\tt P}_{\rm c}^{\tt B}
\end{align}
where ${\tt P}_{\rm c}^{\tt B}$ is the coverage probability given by \eqref{eq: coverage case 3}.

\subsubsection{ Cluster-centric content placement} \label{subsubsec: Cluster centric content placement}
Based on the intuition provided by Lemma  \ref{lem: Optimal content placement2} for the cluster-centric content placement, the most popular content should be placed at the transmitting device closest to the cluster center. This means that in our setup, $c_1$ should be placed at the device closest to the cluster center, and $c_2$ should be placed at the second closest device to the cluster center and so on. Hence the caching probability of the content $c_j$  at the $k^{th}$ closest transmitting device to the cluster center is ($j, k <N_{\tt t}$):
\begin{align*} 
b_{j,k}=\left\{
 \begin{array}{cc}
1 & j=k,\\
 0, & {\tt otherwise}.
 \end{array}\right.
 \end{align*}
Recall that the coverage probability when the randomly chosen D2D-Rx of interest connects to the $k^{th}$ closest transmitting device to the cluster center ($k$-Tx case) was denoted by ${{\tt P}_{c_k}^{\rm Tx}}$. Hence, the total hit probability can be expressed as
\begin{equation}
{\tt P}_{\tt hit}= 
\sum_{j=1}^{N_{\tt t}}  {\tt P}_{R_j}  {{\tt P}_{c_j}^{\rm Tx}},
\end{equation}
where ${{\tt P}_{{\rm c}_k}^{\rm Tx}}$ is the coverage probability given by \eqref{Eq: Pc case 1}.  

We now plot the total hit probability results for the two cases in  \figref{FigR: Hit prob J_40}. As expected, the total hit probability is significantly higher in the cluster-centric content placement case. While the shape parameter of Zipf distribution, $\gamma$, does not impact the results in the uniform content placement case, increasing its value improves the hit probability in the cluster-centric content placement case. This is because with increasing $\gamma$, the most popular content is requested more often and since it is cached closer to the cluster center, the D2D-Rx of interest connects with the devices located closer to the cluster center more often. Since the coverage probability in the $k$-Tx case for lower values of $k$ is significantly higher than for higher values of $k$, this improves the overall hit probability.

}
%
%
\begin{figure}
\centering{
 \includegraphics[width=.85\linewidth]{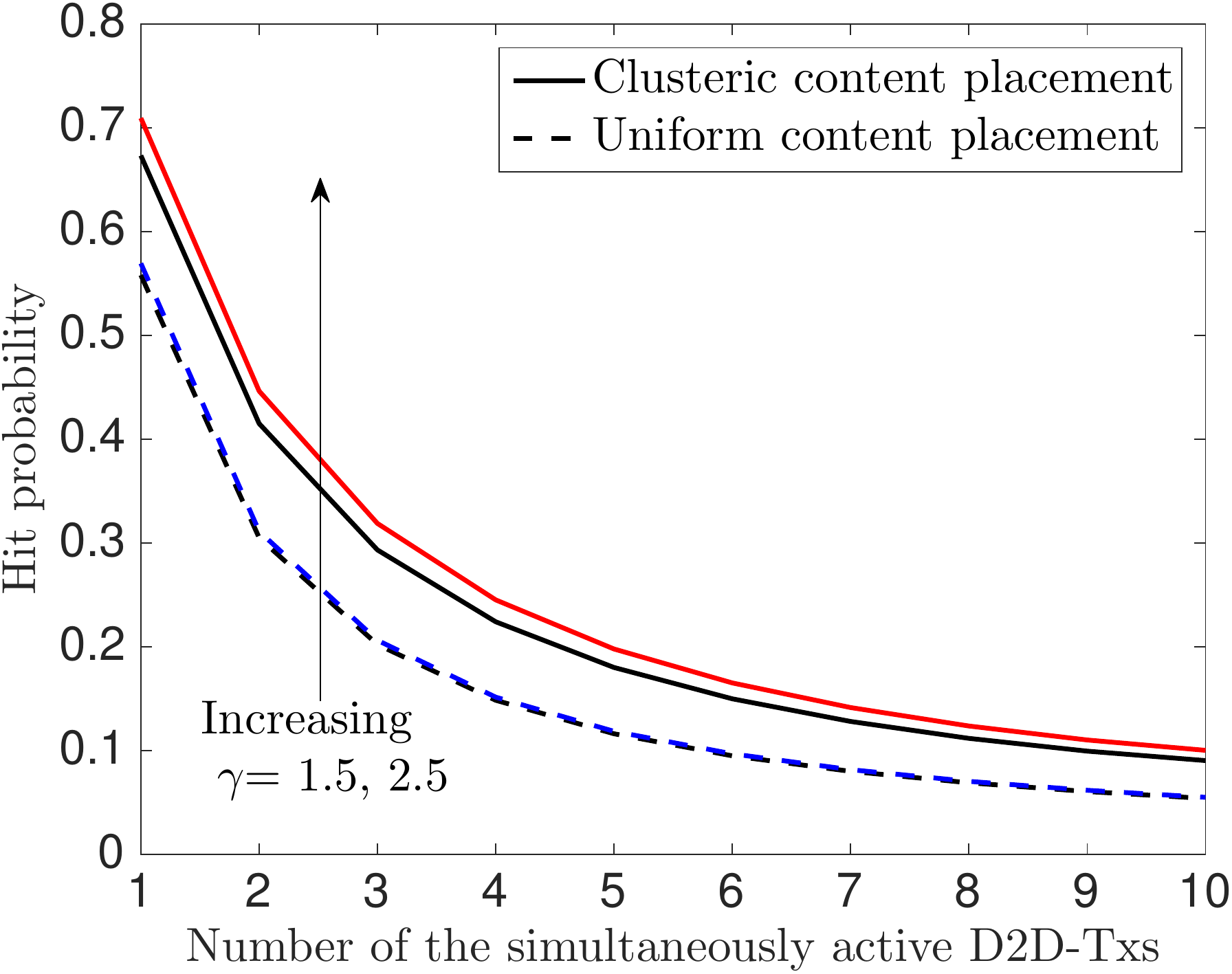}
  \caption{ Total hit probability versus number of simultaneously
active D2D-Txs when $\sigma_\nrma = 30$, $\lambda_\nrmc =50$ clusters $/$ km$^2$, $N_{\tt t}=30$, and ${\cal J}=40$.}
         \label{FigR: Hit prob J_40}
}
\end{figure}

\subsection{Performance of the  New Generative  Cluster-Centric Model} 
\label{subset: Performance of the  New Generative  Cluster-Centric Mode}
\chr{In this subsection, we study the coverage probability in the double-variance model of Section~\ref{Sec:Future of D2D network} as a function of the number of simultaneously active transmitters $\bar{m}_\nrma + \bar{m}_\nrmb$. The results are presented in \figref{Fig: Pc_future}. For each plot, we fix either $\bar{m}_\nrma$ or $\bar{m}_\nrmb$ and vary the other such that the sum is equal to the value on the x-axis. Recall that the analysis for this model was performed under the assumption that D2D-Tx of interest is sampled uniformly at random from the denser subcluster $\Axx$. It was stated that this assumption will lead to a better performance. This can be validated by noticing that the coverage probability for the case $\bar{m}_\nrma=0$ (the one where both the D2D-Tx and D2D-Rx of interest are in the sparser subcluster) is lower than all the cases in which the D2D-Tx of interest lies in $\Axx$.
Besides, the coverage probability when D2D-Rx of interest is located in the sparser subcluster is higher than the special case of $\bar{m}_\nrma=0$ discussed above, and lower than the other extreme in which both the D2D-Tx and D2D-Rx of interest are in the denser subcluster ($\bar{m}_\nrmb=0$). For all the cases, we can observe that the coverage probability is higher when the number of simultaneously active D2D-Txs in the sparser subcluster is higher (keeping $\bar{m}_\nrma+\bar{m}_\nrmb$ the same). This is because the active D2D-Txs in sparser subcluster cause less intra-cluster interference compared to when they lie in the denser subcluster. }

 \begin{figure}[t!]
\centering{
        \includegraphics[ width=.85\linewidth]{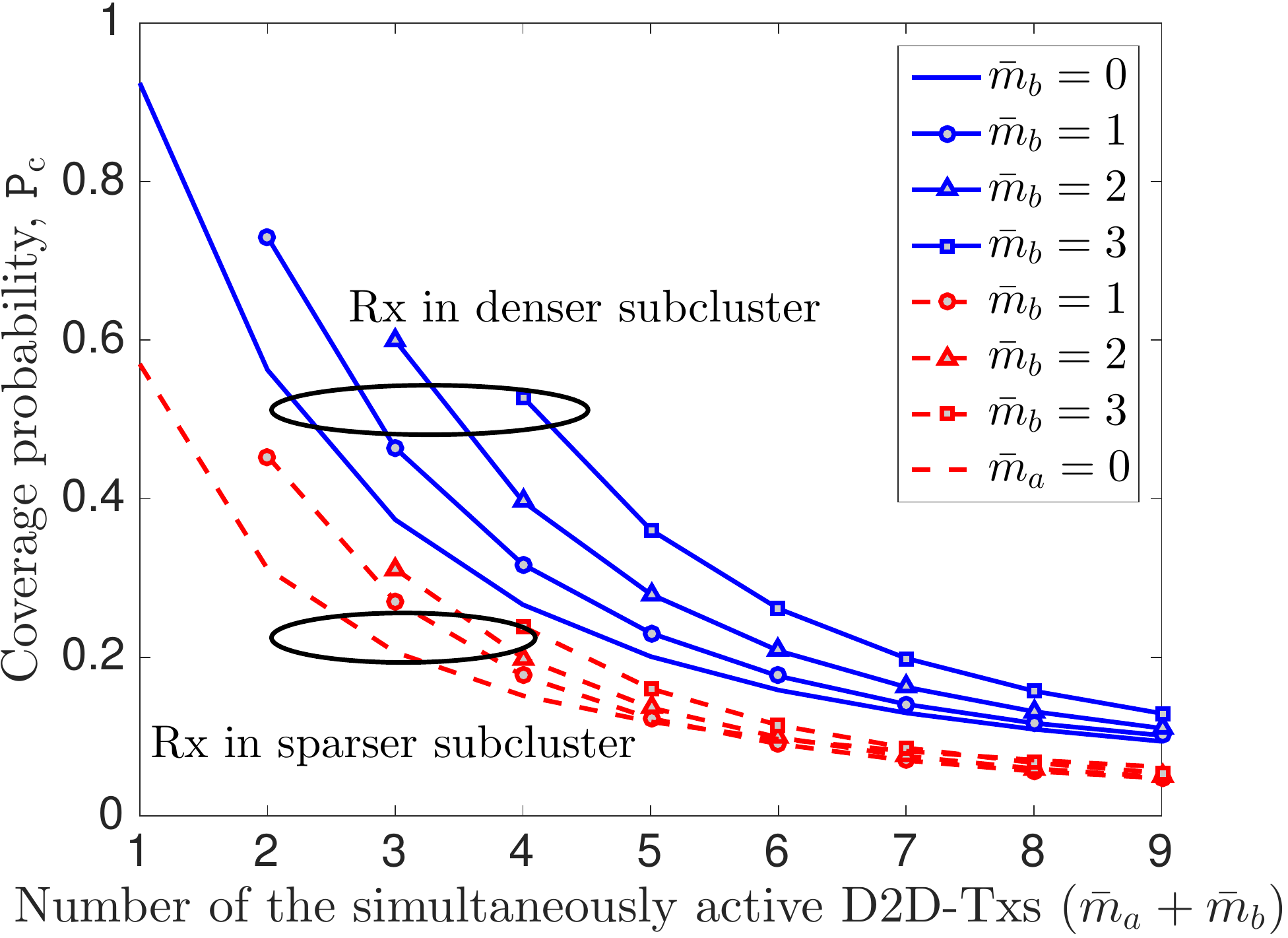}
              \caption{ Coverage probability versus number of simultaneously
active D2D-Txs when  and $\lambda_\nrmc =50$ clusters $/$ km$^2$, $\sa = 10$, and $\sb=30$.
}
                \label{Fig: Pc_future}
                }
\end{figure}
\section{Conclusion}
In this paper, we developed a realistic framework for the modeling and analysis of cache-enabled D2D networks. By modeling the D2D network as a Poisson cluster process, we focused on the performance analysis of the {\em cluster-centric} content placement policies where the content is placed in a cluster such that the collective performance of all the devices is improved. \chb{In particular, we defined and explored following two general cases where the location of the D2D-Rx of interest or the device that has cached its content of interest is parameterized in terms of its location relative to the cluster center: (i) $k$-Tx case: the serving device is the 
the $k^{th}$ closest device to the cluster center, and (ii) $\ell$-Rx case: the receiving device is the $\ell^{th}$ closest device to the cluster center. Using tools from stochastic geometry, we derived the coverage and  $\ase$ for these cases and compared them with the baseline case where both the D2D-Rx of interest and its serving device are chosen independently and uniformly at random from the same cluster.} The results concretely demonstrated that the network performance can be significantly enhanced if either the D2D-Tx of interest or the D2D-Rx of interest lie close to the cluster center. Based on this observation, we proposed and analyzed a new generative {\em double-variance Thomas cluster process} model that allowed us to tractably capture the fact that more intra-cluster interfering devices may lie closer to the cluster center. \chb{Several system design guidelines for content placement, including insights into the effect of realistic content placement policies on hit probabilities, have been provided.}
  
This work has many extensions. From the caching perspective, it is important to incorporate content popularity  distribution, social  interaction between devices and cache constraints, such as the memory of the caching devices. From the D2D network perspective, it is important to extend the analysis to the in-band case where the cellular and D2D transmissions share the same spectrum~\cite{AfsDhiC2015a}. From the cluster point process perspective, it is important to extend the analysis to more general classes of cluster processes. From the modeling perspective, extensive measurement campaigns must be carried out to understand the statistics of real-world clusters, such as the ones formed in the coffee shops, libraries, and restaurants.

\appendix
\subsection{Proof of Lemma \ref{lem: Optimal content placement2}} 
\label{proof: Optimal content placement}
Recall that the set of possible transmitting devices inside a representative cluster is denoted by $\ncalN^{x_0}_{\rm t} \in \{1,2,...,N_t\}$. Assuming the file of interest is located at $x_0+s_k$, we have
\begin{align*}
k^* =&\argmax_{ k\in \ncalN^{x_0}_{\rm t}}\E[\nb1 \{\sir(\|x_0+s_k\|)>\T\}]\\
\stackrel{(a)}{=}&\argmax_{ k\in \ncalN^{x_0}_{\rm t}} \E\left[\frac{ \htxi \|x_0+s_k\|^{-\alpha}}{I_T-\htxi \|x_0+s_k\|^{-\alpha}}>\T \right]\\
=&\argmax_{ k\in \ncalN^{x_0}_{\rm t}} \E\left[\frac{  \|x_0+s_k\|^{-\alpha}}{\frac{1}{\htxi}I_T-  \|x_0+s_k\|^{-\alpha}}>\T\right]\\
\stackrel{}{=}& \argmax_{ k\in \ncalN^{x_0}_{\rm t}} \E\left[X_k > \T \right]
\end{align*}
where $I_T$ in $(a)$ is the total received power at the D2D-Rx of interest from all the transmitters in the network, and $X_k$ is defined as the received $\sir$ from the $k^{th}$ closest serving device for the ease of notation. Since $I_T$ is not the function of $k$, $X_k \geq_{\rm st} X_j$ whenever $\|x_0+s_j\| \geq_{\rm st} \|x_0+s_k\|$, where $\geq_{\rm st}$ denotes first order stochastic dominance (or usual stochastic order). Note that since $x_0$ is  sampled from zero mean complex Gaussian random variable in  $\R^2$, the density function of $r=\|x_0+s_k\|$ conditioned on $t_k=\|s_k\|$  follows  Rician distribution with CDF $F_R(r| \y)=1-Q_1(\frac{\y}{\sigma_\nrma},\frac{r}{\sigma_\nrma})$, where $Q_1(\alpha,\beta)$ is the Marcum Q-function defined as $Q_1(\alpha,\beta) = \int_\beta^{\infty} y e^{-\frac{y^2 + \alpha^2}{2}} I_0(\alpha y) {\rm d} y$.
It turns out that $Q_1(\alpha,\beta)$ is monotonically increasing in $\alpha$ \cite[Property 11]{short2012computation}, which implies $Q_1(\frac{\y}{\sigma_\nrma},\frac{r}{\sigma_\nrma})$ is monotonically increasing in $t_k$, which implies $R(t_k)\le_{\rm st} R(t_k+1)$. This implies $\|x_0+s_j\| \geq_{\rm st} \|x_0+s_1\|$ $\forall j \neq 1$, which completes the proof. 
\subsection{Proof of Lemma \ref{lem: distance from intra-cluster device to typical device in case 1}}
\label{App: proof of distance from intra-cluster device to typical device in case 1}
Let the angle between the intra-cluster interfering device and the D2D-Rx of interest, as seen from the cluster center, be $\theta$. While the angle is uniformly distributed between $[0,2\pi]$, the ``direction'' is not important for the distance calculation, which means it suffices to assume it uniformly distributed between $[0,\pi]$. Now, the CDF of the distance between these two devices, 
conditioned on  $v_0=\|x_0\|$ and $t=\|a\|$ is 
\begin{align} 
F_{W}(w|\nu_0, t)&=\P [W<w|\nu_0, t]\\\notag
&\stackrel{(a)}{=}\P\left[\nu_0^2+t^2-2\nu_0 t \cos\theta\le w^2 |\nu_0, t\right]\\\notag
&=\P\left[\cos\theta\ge\frac{\nu_0^2+t^2-w^2}{2\nu_0 t}|\nu_0, t\right]\\\notag
&\stackrel{(b)}{=} \P\left[\theta <\cos^{-1}\left[\frac{\nu_0^2+t^2-w^2}{2\nu_0 t}\right]|\nu_0, t\right]\\\notag
&\stackrel{(c)}{=}\frac{1}{\pi} \cos^{-1}\left[\frac{\nu_0^2+t^2-w^2}{2\nu_0 t}|\nu_0, t\right],
\end{align}
where (a) follows from  the cosine law,  (b) follows from the fact that $\cos^{-1}$ is monotonically decreasing function, and (c) follows from the fact that $\theta \sim {\rm Unif}[0,\pi]$. Now, the conditional PDF of $f_W(w|\nu_0, t)$ is obtained by differentiating over $w$ as follows:
\begin{align*}
f_{W}(w|\nu_0, t)= \frac{1}{\pi} \frac{w/{\nu_0 t}}{\sqrt{1-\left(\frac{\nu_0^2+t^2-w^2}{2\nu_0 t}\right)^2}}, \:|\nu_0-t|<w<\nu_0+t.
\end{align*}
where $|.|$ denotes absolute value.
Using the fact that devices are normally scattered around cluster-center, the distance from intra-cluster devices to the cluster center, i.e., $t=\|a\|$ is Rayleigh distributed with parameter $\sigma_\nrma$, which implies that  the PDF of distance $t_{\rm in}$
of an intra-cluster interferer in the
set $ \Axx_{\rm in}$ i.e., $t_{\rm in}< t_k$ is truncated Rayleigh distribution 
 $$f_{T_{\rm in}}(t_{\rm in}|t_k)=f_T(t_{\rm in}|T_{\rm in }< t_k)= \frac{f_T(t_{\rm in})}{F_T(t_{k})}, \:\: t_{\rm in}< t_k,$$ 
where $f_T(.)$ and $F_T(.)$ are the PDF and CDF of Rayleigh distribution with parameter $\sigma_\nrma$ respectively. Similarly, the PDF of distance $t_{\rm out}$, where $t_{\rm out}> t_k$, is
 $$f_{T_{\rm out}}(t_{\rm out}|t_k)=f_T(t_{\rm out}|T_{\rm out }> t_k)= \frac{f_T(t_{\rm out})}{1-F_T(t_{k})}, \:\: t_{\rm out}> t_k.$$ 
 Using the same approach as \cite[Lemma 4]{MehrnazD2D1},  the  conditional i.i.d property of $w=\|x_0+a\|$, conditioned on  $v_0=\|x_0\|$ and $t=\|a\|$ can be formally proved.
 
\subsection{Proof of Lemma \ref{lem: Laplace exact case 1} }
\label{proof: lemma Laplace intra case 1}
Assuming that the file of interest is located at the $k^{th}$ closest transmitting device to the cluster center, we  divide the set of simultaneously active intra-cluster devices into three subsets: $\Axx=\{\Axx_{\rm in}, \txi, \Axx_{\rm out}\}$. Here $a_0$ denotes relative location of the serving device to the cluster center with distance $\|\txi\|=t_k$  away from it, where  $\Axx_{\rm in}=\big\{a\big| 
\|\aa\|<t_k \big\}$, and  $\Axx_{\rm out}=\big\{a\big|\|\aa\|>t_k \big\}$ are the set of devices  closer and further than serving device from cluster center, respectively. For this setup, the Laplace transform of distribution of intra-cluster interference  $\ncalL_{\intra}(s,t_k|\nu_0)$

\small
\begin{align}\notag
&\stackrel{(a)}{=}\E\Big[\exp\Big(-s \sum_{\aa \in \Axx \setminus \txi}  \haa\|x_0+ \aa\|^{-\alpha}\Big)\Big]\\\notag
&=\E\Big[\exp\Big(-s \Big(\sum_{\aa \in\Axx_{\rm in}}  \haa\|x_0+ \aa\|^{-\alpha}
\\&+\sum_{\aa \in\Axx_{\rm out}}  \haa\|x_0+ \aa\|^{-\alpha}\Big)\Big)\Big]\\\notag
&=\E\Big[\prod_{\aa \in\Axx_{\rm in}}\exp(-s\haa\|x_0+ \aa\|^{-\alpha} ) \\&\times\prod_{\aa \in\Axx_{\rm out}}\exp(-s\haa\|x_0+ \aa\|^{-\alpha} )\Big]\\\notag
&\stackrel{(b)}{=}\E\Big[\prod_{\aa \in\Axx_{\rm in}}\frac{1}{1+s\|x_0+ \aa\|^{-\alpha}}  \prod_{\aa \in\Axx_{\rm out}}\frac{1}{1+s\|x_0+ \aa\|^{-\alpha}}\Big] \\\notag
&\stackrel{(c)}{=}\sum_{n=0}^{{N_{\rm t}}-1}\sum_{l=0}^{g_{\rm m}} 
\Bigg(\underbrace{\int_{0}^{t_k} \int_{w_{\rm in}^{\rm L}}^{w_{\rm in}^{\rm U}} \frac{1}{1+s w^{-\alpha}} f_{W}(w|\nu_0,t_{\rm in})f_{T_{\rm in}}(t_{\rm in}|t_k) \nrmd w \nrmd t_{\rm in} }_{M_{\rm in}(s,t_k|\nu_0)} \Bigg)^l  \\\notag
&\times\Bigg(\underbrace{\int_{t_k}^{\infty} \int_{w_{\rm out}^{\rm L}}^{w_{\rm out}^{\rm U}} \frac{1}{1+s w^{-\alpha}} 
 f_{W}(w|\nu_0,t_{\rm out})f_{T_{\rm out}}(t_{\rm out}|t_k) \nrmd w \nrmd t_{\rm out}}_{M_{\rm out}(s,t_k|\nu_0)}\Bigg)^{n-l} \\\notag
 &\times \underbrace{\frac{{n \choose l} p^l(1-p)^{n-l}}{I({1-p};n-g_{\rm m},1+g_{\rm m})}}_{\P(L=l|L\le gm )}  \underbrace{\frac{(\bar{m}_\nrma-1)^n e^{-(\bar{m}_\nrma-1)}}{n! \xi}}_{\P(N=n|N\le {N_{\rm t}-1})}\\\notag
\end{align} 
\normalsize
with $w_{\rm in}^{\rm L}=|\nu-t_{\rm in}|$, $w_{\rm in}^{\rm U}=\nu_0+t_{\rm in}$, $w_{\rm out}^{\rm L}=|\nu_0-t_{\rm out}|$, $w_{\rm out}^{\rm U}=\nu_0+t_{\rm out}$,
 $p=\frac{k-1}{N_{\rm t}-1}$, $g_{\rm m}=\min(n,k-1)$, and $\xi=\sum_{j=0}^{N_{\rm t}-1}\frac{(\bar{m}_\nrma-1)^j e^{-(\bar{m}_\nrma-1)}}{j!}$.
Here (a) follows from definition of Laplace transform, (b) from the fact that $\haa\sim\exp(1)$, and (c)  from converting Cartesian to polar coordinates by using distance distribution given by Lemma \ref{lem: distance from intra-cluster device to typical device in case 1} along with conditional i.i.d. property of $f_{W}(w|\nu_0,t)$. Then, the result follows by  expectation over number of devices, where the  number of devices closer than serving device to the cluster center, i.e., $l$, is  binomial distributed conditioned on the total being less than $g_{\rm m}=\min(n,k-1)$. This condition is due to the fact that $l$ is always smaller than $k-1$ (since serving device is located at $k^{th}$ device in $k$-Tx case) and total number of active devices, i.e., $n$, where $n$  is Poisson distributed conditioned on total being less than $N_{\rm t}-1$. 
\subsection{Proof of Lemma  \ref{lem: lap intra typical}}
\label{proof: lemma Laplace intra}
The Laplace transform of distribution of intra-cluster interference can be derived as: $\ncalL_{\intra} (s|\nu_0)=\E\left[\exp\left(-s\intra\right)\right]$ 
\small
\begin{align}
\notag
\stackrel{(a)}{=}&\E\Big[\exp\Big(-s\sum_{\aa\in \Axx \setminus \txi}  \haa\|x_0+\aa\|^{-\alpha} 
\Big)\Big]\\\notag
\stackrel{}{=}&\E_{\Axx}\Big[\prod_{\aa\in \Axx \setminus \txi} \E_{\haa}\left[\exp\left(-s \haa\|x_0+\aa\|^{-\alpha} \right)\Big]\right] \\\notag
\stackrel{(b)}{=}&\E_{\Axx}\Big[\prod_{\aa\in \Axx \setminus \txi} \frac{1}{1+s \|x_0+\aa\|^{-\alpha}}  \Big] \\\notag
\stackrel{(c)}=&\sum_{n=0}^{N_{\tt t}-1} \Big(\int_{\R^2} \frac{1}{1+s \|x_0+\aa\|^{-\alpha}} f_A(\a) \nrmd \a \Big)^n 
\frac{(\bar{m}_\nrma-1)^n e^{-(\bar{m}_\nrma-1)}}{n! \xi}\\\notag
\stackrel{(d)}=&\sum_{n=0}^{N_{\tt t}-1} \Big(\int_0^\infty \frac{1}{1+s w^{-\alpha}}f_{W}(w|\nu_0) \nrmd w \Big)^n 
\frac{(\bar{m}_\nrma-1)^n e^{-(\bar{m}_\nrma-1)}}{n! \xi}\notag
  \end{align}
\normalsize
with $\xi=\sum_{j=0}^{N_{\tt t}-1}\frac{(\bar{m}_\nrma-1)^j e^{-(\bar{m}_\nrma-1)}}{j!}$, where (a) follows from the definition of Laplace transform,  (b) follows from the fact that $\haa$ is exponential distributed with mean unity, (c) follows from expectation over number of devices that is Poisson distributed conditioned on the total being less than $N_{\tt t}-1$ along with the fact that locations of devices conditioned on the location of cluster center, $x_0$, are i.i.d, and (d)
 follows from converting Cartesian to polar coordinates using the fact that $f_W(w|\nu_0)$, the density function of distance from interfering devices to the D2D-Rx of interest conditioned on $\nu_0=\|x_0\|$,  has Rician distribution given by   \eqref{Eq: Rice distribution}.
 Now under the assumption  $\bar{m}_\nrma \ll N_{\tt t}$, 
 the Laplace transform of intra-cluster  interference distribution can be approximated  as:
\begin{align*}
\simeq\frac{1}{\xi }      \exp\Big(-(\bar{m}_\nrma-1)\int_0^\infty \frac{s w^{-\alpha}}{1+s w^{-\alpha}}f_{W}(w|\nu_0) \nrmd w\Big).
\end{align*}
\subsection{Proof of Corollary \ref{corr: app coverage case 1}}
\label{proof:  Corollary app coverage case1}
Under Assumption \ref{Ass: Identical intra-cluster distances}, the correlation corresponding to the common distance $\nu_0=\|\mathbf{x_0}\|$ is ignored and hence the Laplace transform of the interference distribution can be approximated by Corollary \ref{cor: app lap intra}. Furthermore, the  density function of serving distance $r=\|\mathbf{x_0}+\mathbf{s_k}\|\in \R_+,$  needs to be evaluated only conditioned on the  
 distance of D2D-Tx of interest (i.e., $k^{th}$ closest device) to the cluster center $t_k=\|\mathbf{s_k}\|$. Now using the fact that $\mathbf{x_0}\in\R_+$ is zero mean complex Gaussian random variable, the density function of $\mathbf{z=x_0+s_k}\in \R^2$, where $\mathbf{z}=(z_1,z_2)$ conditioned on $\mathbf{s_k}=(s_{k_1},s_{k_2})$ (where $t_k=\sqrt{s_{k_1}^2+s_{k_2}^2}$) can be expressed as: 
  \begin{multline*}
 f_{\mathbf{Z}}(z_1,z_2|\mathbf{s_k})= \frac{1}{{2 \pi} \sigma_\nrma^2}\exp \left(-\frac{(\rx-s_{k_1})^2}{2 \sigma_\nrma^2} -\frac{(\ry-s_{k_2})^2}{2 \sigma_\nrma^2}\right).
 \label{eq: f_z}
 \end{multline*}
 \normalsize
Since, we are interested on distribution of $r=\|\textbf{z}\|$, we define $z_1= r \sin \theta$, and $z_2=r \cos \theta$, where $\theta=\arctan(\frac{z_1}{z_2})$. Now,  Jacobian matrix is used to convert  Cartesian coordinates to polar coordinates as follows:
\begin{align}
f_{R,\Theta} (r,\theta | \mathbf{s_k}) = f_\mathbf{Z}(z_1, z_2|\mathbf{s_k}) \times \left| \partial \left( \frac{z_1, z_2}{r,\theta} \right) \right|,
\end{align}
where $\partial \left( \frac{z_1, z_2}{r,\theta} \right) = \left| \begin{array}{cc} 
\frac{\partial z_1}{\partial r} & \frac{\partial z_1}{\partial \theta}\\
\frac{\partial z_2}{\partial r} & \frac{\partial z_2}{\partial \theta}
\end{array} \right| = r$, and hence joint distribution of $(R,\Theta)$ is $f_{R,\Theta} (r,\theta | \mathbf{s_k})=$ 
\begin{align*}
 &\frac{r}{{2 \pi} \sigma_\nrma^2}\exp \left(-\frac{(r \cos \theta-s_{k_1})^2}{2 \sigma_\nrma^2} -\frac{(r \sin \theta-s_{k_2})^2}{2 \sigma_\nrma^2}\right)\\
\stackrel{}{=}&  \frac{r}{ \sigma_\nrma^2} \exp \left( - \frac{r^2 + t_k^2}{2 \sigma_\nrma^2} \right) \frac{1}{2 \pi} \exp \left( \frac{r s_{k_1} \cos \theta + r s_{k_2}\sin \theta}{\sigma_\nrma^2} \right),
\end{align*}
Therefore,  the conditional marginal distribution of $R$ is 
\begin{align*}
&f_R(r| t_k) =\int_{0}^{2\pi} f_{R,\Theta} (r,\theta | \mathbf{s_k}) {\rm d} \theta =
\frac{r}{ \sigma_\nrma^2} \exp \left( - \frac{r^2 + t_k^2}{2 \sigma_\nrma^2} \right)\\ &\times\underbrace{\int_{0}^{2\pi} \frac{1}{2 \pi} \exp \left( \frac{r  s_{k_1}\cos \theta + r  s_{k_2} \sin \theta}{\sigma_\nrma^2} \right) {\rm d} \theta}_{I_0\left(\frac{r t_k}{\sigma_\nrma^2}\right)},
\end{align*}
where conditioning on $t_k=\|\mathbf{s_k}\|$ instead of $\mathbf{s_k}$ suffices. The rest of the proof follows on the same line as the proof of Theorem \ref{Thm: Coverage case1 }.

\subsection{Proof of Lemma \ref{lem: Future Laplace intra}}
\label{proof: lemma Laplace intra future}
Since the two sets  $\Axx$ and $\Bxx$ are independent, the Laplace transform of intra-cluster interference distribution, $\ncalL_{\intra} (s|\nu_0)=\E\left[\exp\left(-s\intra\right)\right]$, can be derived as follows:

\small
\begin{align}
\notag 
\stackrel{(a)}{=}&\E\Big[\exp\Big(-s\sum_{\aa\in \Axx \setminus \txi}  \haa\|x_0+\aa\|^{-\alpha}\\\notag
&+\sum_{\bb\in \Bxx }  \hbb\|x_0+\bb\|^{-\alpha}\Big)\Big]\\\notag
\stackrel{(b)}{=}&\E_{\Axx}\Big[\prod_{\aa\in \Axx \setminus \txi} \E_{\haa}\left[\exp\left(-s \haa\|x_0+\aa\|^{-\alpha} \right)\Big]\right] \\\notag 
&\times  \E_{\Bxx}\Big[\prod_{\bb\in \Bxx } \E_{\hbb}\left[\exp\left(-s \hbb\|x_0+\bb\|^{-\alpha} \right)\Big]\right]\\\notag
\stackrel{(c)}{=}&\E_{\Axx}\Big[\prod_{\aa\in \Axx \setminus \txi} \frac{1}{1+s \|x_0+\aa\|^{-\alpha}}  \Big]\\ & \times\E_{\Bxx}\Big[\prod_{\bb\in \Bxx } \frac{1}{1+s \|x_0+\bb\|^{-\alpha}}  \Big]\\\notag
\stackrel{(d)}{=}&\exp\Big(-(\bar{m}_{\nrma}-1) \Big(\int_{\R^2} \frac{\|\aa+x_0\|^{-\alpha}}{1+s \|x_0+\aa\|^{-\alpha}} f_A(\a) \nrmd \a \Big)\\\notag
& -\bar{m}_{\nrmb}\int_{\R^2}\frac{\|x_0+\bb\|^{-\alpha}}{1+s \|x_0+\bb\|^{-\alpha}} f_B(\bb) \nrmd \bb \Big)\\\notag
\stackrel{(e)}{=}&\exp\Big(-(\bar{m}_{\nrma}-1)\int_0^\infty \frac{s w_\nrma^{-\alpha}}{1+s w_\nrma^{-\alpha}}f_{W_{\nrma}}(w_\nrma|\nu_0)\nrmd w_\nrma \\\notag
& -\bar{m}_{\nrmb}\int_0^\infty \frac{s w_\nrmb^{-\alpha}}{1+s w_\nrmb^{-\alpha}}f_{W_{\nrmb}}(w_\nrmb|\nu_0)\nrmd w_\nrmb\Big)\notag
\end{align}
\normalsize
where (a) follows the definition of Laplace transform, (b) follows from the fact that the interference from set $\Axx$ and $\Bxx$ are independent,  (c) follows from the fact that $\haa$ and $\hbb$ are exponential random variables with mean unity, (d)  follows from probability generating functional (PGFL) of  Poisson distribution where $f_B(\b)=\frac{1}{2 \pi \sigma_\nrmb^2 }\exp\left(\frac{-\|\b\|^2}{2 \sigma_\nrmb^2}\right)$, and (e)  follows from converting from Cartesian to polar coordinates and some algebraic manipulation similar to the derivation of Rician distribution in the proof  of Corollary \ref{corr: app coverage case 1}.

\subsection{Proof of Lemma \ref{lem: Future Laplace inter}}
\label{proof : lap inter future}
Laplace transform of the  distribution intra-cluster interference at D2D-Rx of interest $\ncalL_{\inter}(s)$ is
 
\small
\begin{align*}\label{Eq: Lap_Intera_cluster_Intn}
            =&\E\Big[\exp(-s\mathlarger{\sum_{x\in \phi_{\nrmc}\setminus x_0}}\Big[\sum_{\a \in \Ax} \ha\|x+\a\|^{-\alpha}+\sum_{\b \in \Bx} \hb\|x+\b\|^{-\alpha}\Big])\Big]\\
\stackrel{(a)}  {=}&\E_{\phi_{\nrmc}}\Big[\prod_{x\in \phi_{\nrmc}\setminus x_0}\E_{\Ax} \Big[\prod_{\a \in \Ax}\E_{\ha}\left[\exp(-s \ha\|x+\a\|^{-\alpha})|x\right]\Big]
\\
&\E_{\Bx}\Big[\prod_{\b \in \Bx}\E_{\hb}\left[\exp(-s \hb\|x+\b\|^{-\alpha})|x]\right]\Big]\Big]\\
    \stackrel{(b)}{=}&\E_{\phi_{\nrmc}}\Big[\prod_{x\in \phi_{\nrmc}\setminus x_0}\E_{\Ax} \Big[\prod_{\a \in \Ax} \frac{1}{1+s\|x+\a\|^{-\alpha}}|x\Big]\\ 
    &\E_{\Bx}\Big[ \prod_{\b \in \Bx} \frac{1}{1+s\|x+\b\|^{-\alpha}}|x\Big]\Big],\\
  \stackrel{(c)} {=}&\exp\Big(-2 \pi \lambda_\nrmc  \int_0^\infty \Big(1-\exp\Big(-\ma \int_0^\infty \frac{s u_\nrma^{-\alpha}}{1+s u_\nrma^{-\alpha}} f_{U_{\nrma}}(u_\nrma|\nu)\nrmd u_\nrma \\
 & -\mb \int_0^{\infty}\frac{s u_\nrmb^{-\alpha}}{1+s u_\nrmb^{-\alpha}} f_{U_\nrmb}(u_\nrmb|v)\nrmd u_\nrmb\Big)\nu \nrmd \nu\Big)\Big)
 \end{align*} 
 \normalsize 

 where (a) follows from the fact that $\Ax$ and $\Bx$ conditioned on the location of cluster centers $\{x\}$ are independent,  (b) follows from the fact that $\ha$ and $\hb$ are independent exponential random variables with mean unity, and (c) follows from the PGFL of  Poisson distribution.
Note that Lemma \ref{Lem: Lap_Inter} is a special case of Lemma \ref{lem: Future Laplace inter} when $\bar{m}_\nrmb=0$.
 \section*{Acknowledgment}
 The authors would like to thank Surabhi Gaopande and SaiDhiraj Amuru  for helpful feedback.
\balance
\bibliographystyle{IEEEtran}
\bibliography{2_D2D2_V21.bbl}

\end{document}